\newcommand{\removed}[1]{}
\newcommand{\rem}{\mathbf}
\newcommand{\ca}{\mathcal{A}}
\newcommand{\ra}{\rightarrow}
\newcommand{\rsa}{\rightsquigarrow}
\newcommand{\coleq}{\mathrel{\mathop:}=}
\newcommand{\bs}{\backslash}
\newcommand{\dprime}{{\prime\prime}}
\newcommand{\E}{\textup{E}}
\newtheorem{observation}{Observation}
\newtheorem{definition}{Definition}
\newtheorem{lemma}{Lemma}
\newtheorem{theorem}{Theorem}
\newtheorem{remark}{Remark}
\newtheorem{corollary}{Corollary}
\newtheorem{proposition}{Proposition}
\journal{}
\date{}
\begin{document}

\begin{frontmatter}



\title{Connectivity Preserving Network Transformers\tnoteref{t1}}

\tnotetext[t1]{Supported in part by (i) the project ``Foundations of Dynamic Distributed Computing Systems'' (\textsf{FOCUS}) which is implemented under the ``ARISTEIA'' Action of the  Operational Programme ``Education and Lifelong Learning'' and is co-funded by the European Union (European Social Fund) and Greek National Resources and (ii) the FET EU IP project \textsf{MULTIPLEX} under contract no 317532.}

\author[cti]{Othon Michail\corref{cor1}}
\ead{michailo@cti.gr}
\author[cti,liverpool]{Paul G. Spirakis}
\ead{P.Spirakis@liverpool.ac.uk}

\address[cti]{Computer Technology Institute and Press ``Diophantus'' (CTI), Patras, Greece}
\address[liverpool]{Department of Computer Science, University of Liverpool, Liverpool, UK}

\cortext[cor1]{Corresponding author (Telephone number: +30 2610 960300, Fax number: +30 2610 960490, Postal Address: Computer Technology Institute and Press ``Diophantus'' (CTI), N. Kazantzaki Str., Patras University Campus, Rion, P.O. Box 1382, 26504 Patras, Greece).}

\begin{abstract}
The Population Protocol model is a distributed model that concerns systems of very weak computational entities that cannot control the way they interact. The model of Network Constructors is a variant of Population Protocols capable of (algorithmically) constructing abstract networks. Both models are characterized by a \emph{fundamental inability to terminate}. In this work, we investigate the minimal strengthenings of the latter model that could overcome this inability. Our main conclusion is that \emph{initial connectivity of the communication topology} combined with the ability of the protocol to \emph{transform the communication topology} and the ability of a node to \emph{detect when its degree is equal to a small constant}, plus either a \emph{unique leader} or the ability of \emph{detecting common neighbors}, are sufficient to guarantee not only \emph{termination} but also the \emph{maximum computational power that one can hope for in this family of models}. The technique of our protocols is to transform any initial connected topology to a \emph{less symmetric} (that can serve as a large ordered memory) and \emph{detectable} (that the protocol can determine its successful construction) topology \emph{without ever breaking its connectivity} during the transformation. The target topology of all of our transformers is the \emph{spanning line} and we call \emph{Terminating Line Transformation} the corresponding problem. We first study the case in which there is a pre-elected \emph{unique leader} and give a \emph{time-optimal protocol for Terminating Line Transformation}. We then prove that dropping the leader without additional assumptions leads to a \emph{strong impossibility result}. In an attempt to overcome this, we equip the nodes with the ability to tell, during their pairwise interactions, whether they have at least one neighbor in common. Interestingly, it turns out that this local and realistic mechanism is sufficient to make the problem solvable. In particular, we give a \emph{very efficient protocol} that solves Terminating Line Transformation when all nodes are initially identical. The latter implies the aforementioned strong characterization: \emph{the model, under a few minimal assumptions, computes with termination any symmetric predicate computable by a Turing Machine of space $\Theta(n^2)$.}
\end{abstract}

\begin{keyword}

population protocol \sep network construction \sep network transformation \sep dynamic topology \sep termination \sep continuous connectivity \sep random scheduler \sep Turing Machine simulation
\end{keyword}

\end{frontmatter}

\section{Introduction}
\label{sec:intro}

A dynamic distributed computing system is a system composed of distributed computational processes in which the structure of the communication network between the processes changes over time. In one extreme, the processes cannot control and cannot accurately predict the modifications of the communication topology. Typical such examples are mobile distributed systems in which the mobility is \emph{external} to the processes and is usually provided by the environment in which the system operates. For example, it could be a system of cell phones following the movement of the individuals carrying them or a system of nanosensors flowing in the human circulatory system. This type of mobility is known as \emph{passive} (see e.g. \cite{AADFP06}). On the other extreme, dynamicity may be a sole outcome of the algorithm executed by the processes. Typical examples are systems in which the processes are equipped with some \emph{internal} mobility mechanism, like mobile robotic systems and, in general, any system with the ability to algorithmically modify the communication topology. This type of mobility is known as \emph{active} mobility (see e.g. \cite{WCG13} for active self-assembly, \cite{SY99,DFSY15,CKLL09} for mobile robots, and \cite{ACD11} for reconfigurable (nano)robotics under physical constraints). Recently, there is an interest in \emph{intermediate} (or \emph{hybrid}) systems. One such type, consists of systems in which the processes are passively mobile but still they are equipped with an internal active mechanism that allows them to have a partial (algorithmic) control of the system's dynamicity.

The intermediate model that guides our study here, is the \emph{network constructors} model introduced in \cite{MS14}. In this model, there are $n$ extremely weak processes, computationally equivalent to anonymous finite automata, that usually have very limited knowledge of the system (e.g. they do not know its size). The processes move passively and interact in pairs whenever two of them come sufficiently close to each other. This part of the system's dynamicity is not controlled and cannot be (completely) predicted by the processes and is modeled by assuming an adversary scheduler that in every step selects a pair of processes to interact. The adversary is typically restricted to be \emph{fair} so that it cannot forever block the system's progress (e.g. by keeping two parts of the system forever disconnected). Fairness is sufficient for analyzing the correctness of protocols for specific tasks. If additionally an estimate of the running time is desired, a typical assumption is that the scheduler is a uniform random one (which is fair with probability 1 \cite{CDFMS09} and also corresponds to the dynamicity patterns of well-mixed solutions). But in this model, there is also an internal source of dynamicity. In particular, the processes can algorithmically connect and disconnect to each other during their pairwise interactions. This can be viewed either as a physical bonding mechanism, as e.g. in reconfigurable robotics and molecular (e.g. DNA) self-assembly, or as a virtual record of local connectivity, as e.g. in a social network where a participant keeps track of and can regularly update the set of his/her associates. This allows the processes to control the construction and maintenance of a network or a shape in an uncontrolled and unpredictable dynamic environment. 

The network from which the scheduler picks interactions between processes and develops the uncontrolled interaction pattern is called the \emph{interaction network}. At the same time, the processes, by connecting and disconnecting to each other, develop another network, the \emph{(algorithmically) constructed network}, which is a subnetwork of the interaction network. In the most abstract setting, the interaction network is the clique $K_n$ throughout the execution, no matter what the protocol does (e.g. no matter how the protocol modifies the constructed network). In this case, the scheduler can in every step (throughout the course of the protocol) pick any possible pair of processes to interact, independently of the constructed network. \footnote{A convenient way to think of this setting is to imagine a clique graph with its edges labeled from $\{0,1\}$. Then, in this case, the clique is the interaction network while its subgraph induced by the edges labeled 1 is the constructed network.} This is precisely the setting of \cite{MS14} and also the one that we will consider in the present work. \footnote{On the other hand, it is possible, and plausible w.r.t. several application scenarios, that the set of available interactions at a given step actually depends on the constructed network. Such a case was considered in \cite{Mi15}, where the constructed network is always a subnetwork of the grid network and two processes can only interact if a connection between them would preserve this requirement. So, in that case, the set of available interactions is, in every step, constrained by the network that has been constructed by the protocol so far.} But even if the interaction network is always a clique independently of the constructed network, the ability of the processes to construct a network may still allow them to counterbalance the adversary's power. For example, if the processes manage somehow to self-organize into a spanning network $G$, then it might be possible for them to ignore all interactions that occur over the non-links of $G$ and thus force the actual communication pattern to be consistent with the constructed network.

The existing literature on distributed network construction \cite{MS14,Mi15} has almost absolutely focused on the setting in which all processes are initially disconnected and the goal is for them to algorithmically self-organize into a desired (usually spanning or of size at least some required function of $n$) stable network or shape. In \cite{MS14}, the authors presented simple and efficient direct constructors and lower bounds for several basic network construction problems such as spanning line, spanning ring, and spanning star and also generic constructors capable of constructing a large class of networks by simulating a Turing Machine (abbreviated ``TM'' throughout). One of the main results was that for every graph language $L$ that is decidable by a $O(\sqrt{l})$-space ($l+O(\sqrt{l})$, resp.) TM, where $l=\Theta(n^2)$ is the binary length of the input of the simulated TM, there is a protocol that constructs $L$ equiprobably with useful space $\lfloor n/2\rfloor$ ($\lfloor n/3\rfloor$, resp.), where the \emph{useful space} is defined as a lower bound on the order of the output network (the rest of the nodes being used as auxiliary and thrown away eventually as \emph{waste}). In \cite{Mi15}, a geometrically constrained variant was studied, where the formed network and the allowable interactions must respect the structure of the 2-dimensional (or 3-dimensional) grid network. The main result was a \emph{terminating} protocol counting the size $n$ of the system with high probability (abbreviated ``w.h.p.'' throughout). This protocol was then used as a subroutine of universal constructors, establishing that the nodes can self-assemble w.h.p. into arbitrarily complex shapes while still being capable to terminate after completing the construction.

\subsection{Our Approach and Contribution}

The main goal of this work is to investigate minimal strengthenings of the population protocol and network constructors models that can maximize their computational power, also rendering them capable to terminate. To this end, we consider (for the first time in network constructors) the case in which the initial configuration of the edges is not the one in which all edges are \emph{inactive} (i.e. those that are in state 0). In particular, we assume that the initial configuration of the edges can be any configuration in which the \emph{active} (i.e. those that are in state 1) edges form \emph{a connected graph spanning the set of processes}. \footnote{Active and inactive \emph{edges} are not to be confused with active and passive \emph{mobility}. An edge is said to be \emph{active} if its state is 1 and it is said to be \emph{inactive} if its state is 0.} The initial configuration of the nodes is either, as in \cite{MS14}, the one in which all nodes are initially in the same state, e.g. in an initial state $q_0$, or (whenever needed) the one in which all nodes begin from $q_0$ apart from a pre-elected unique leader that begins from a distinct initial leader-state $l_0$. This choice is motivated by the fact that without some sort of bounded initial disconnectivity we can only hope for global computations and constructions that are \emph{eventually stabilizing} (and not \emph{terminating}), \footnote{Informally, a computation or construction in this family of models is called \emph{eventually stabilizing} (or just \emph{stabilizing}) if, in a finite number of steps, the ``output'' stops changing. In case of a typical computation, the  ``output'' usually concerns an output component in the states of the nodes and in case of a network constructor, it concerns the network induced by the active edges. A computation or construction is called \emph{terminating} if, in a finite number of steps, every node is in a \emph{halting state} (which is a state that does not have an effective interaction with any other state; that is, any interaction of such a state cannot result in a state update).} because a component can guess neither the number of components not encountered yet nor an upper bound on the time needed to interact with another one of them \footnote{Observe that this argument does not apply if the number of components is upper bounded by a constant $c$. In the latter case, the processes could know $c$ in advance and it might be possible for a process to determine, for example, when it has managed to connect its own component to all the other components. Still, for simplicity and clarity of presentation, we restrict our attention to connected initial configurations.}  (\cite{MS15} overcomes this by assuming that the nodes know some upper bound on this time, while \cite{Mi15} overcomes this by assuming a uniform random scheduler and a unique leader and by restricting correctness to be w.h.p.). 

Next, observe that if the protocol is not allowed to modify the state of the edges, then the assumption of initial connectivity alone does not add any computational power to the model (in the worst case). For if we ignore for a while the ability of the model to modify the state of the edges, what we have is a model equivalent to classical population protocols \cite{AADFP06} on a restricted interaction graph \cite{AACFJP05} (observe that the model can ignore the interactions that occur over inactive edges). Though there are some restricted interaction graphs, like the spanning line, that dramatically increase the computational power of the model (in this case making it equivalent to a TM of linear space), still there others, like the spanning star, on which the power of the model is as low as the power of classical population protocols on a clique interaction graph \cite{CMNS13}, which, in turn, is equal to the rather small class of \emph{semilinear predicates} \cite{AAER07}. As we have allowed any possible connected initial set of active edges, the spanning star inclusive, the initial configuration of the edges alone (without any edge modifications) is not sufficient for strengthening the model.

Our discussion so far, suggests to consider at the same time initial connectivity (or, more generally, bounded initial disconnectivity) and the ability of the protocol to modify the state of the edges, with the hope of increasing the computational power. Unfortunately, even with this additional assumption, non-trivial terminating computation is still impossible (this is proved in Proposition \ref{pro:termination-impossibility}, in Section \ref{subsec:fi}). An immediate way to appreciate this, is to notice that a clique does not provide more information than an empty network about the size of the system. Even worse, if a node's initial active degree is unbounded (as e.g. is the case for the center of a spanning star), then it is not clear even whether the stabilizing constructors that assume initial disconnectivity (as in \cite{MS14}) can be adapted to work. Actually, it could be the case, that \emph{without additional assumptions} initial connectivity may even decrease the power of the model (we leave this as an interesting open problem). For example, it could be simpler to construct a spanning line if the initial active network is empty (i.e. all edges are inactive) than if it is a clique (i.e. all edges are active). Even if it would turn out that the model does not become any weaker, we still cannot avoid the aforementioned impossibility of termination and the maximum that we can hope for is an \emph{eventually stabilizing} universal constructor, as the one of \cite{MS14}. 

We now add to the picture a very minimal and natural, but extremely powerful, additional assumption that, combined with our assumptions so far, will lead us to a stronger model. In particular, we equip the nodes with the ability to detect some small local degrees. For a concrete example, assume that a node can detect when its active degree is equal to 0 (otherwise it only knows that its degree is at least 1). A first immediate gain, is that we can now directly simulate any constructor that assumes an empty initial network (e.g. the constructors of \cite{MS14}): every node initially deactivates the active edges incident to it until its local active degree becomes for the first time 0, and only when this occurs the node starts participating in the simulation. So, even though a node does not know its initial degree (which is due to the fact that a node in this model is a finite automaton with a state whose size is independent of the size of the system), it can still detect when it becomes equal to 0. At that point, the node does not have any active edges incident to it, therefore it can start executing the constructor that assumes an empty initial network.

Our main finding in this work, is that the initial connectivity guarantee together with the ability to modify the network and to detect small local degrees (combined with either a pre-elected leader or a natural mechanism that allows two nodes to tell whether they have a neighbor in common), are sufficient to obtain the \emph{maximum computational power} that one can hope for in this family of models. In particular, the resulting model can compute \emph{with termination} any symmetric predicate \footnote{Essentially, a predicate in this type of models is called \emph{symmetric} (or \emph{commutative}) if permuting the input symbols does not affect the predicate's outcome. Such a restriction is imposed by the fact that, in general, the nodes cannot be distinguished initially, because they are anonymous and some initial networks are very symmetric. In such cases, there is no way for a protocol to distinguish one permutation of the the nodes' inputs from another.} computable by a \emph{TM of space $\Theta(n^2)$}, and no more than this, i.e. it is an exact characterization. The symmetricity restriction can only be dropped by UIDs or by any other means of knowing and maintaining an ordering of the nodes' inputs. This power is maximal because the distributed space of the system is $\Theta(n^2)$, so we cannot hope for computations exploiting more space. The substantial improvement compared to \cite{MS14,MCS11-2} is that the universal computations are now \emph{terminating} and not just \emph{eventually stabilizing}. It is interesting to point out that the additional assumptions and mechanisms are minimal, in the sense that the removal of each one of them leads to either an impossibility of termination or to a substantial decrease in the computational power.

Our approach to arriving at the above characterization is the following. We develop protocols that exploit the knowledge of the initial connectivity of the active topology \footnote{We use the terms \emph{network} and \emph{topology} interchangeably in this work.} and try to transform it to a less symmetric and detectable active topology without ever breaking its connectivity. \footnote{The \emph{symmetry} of an interaction network or a subnetwork of it, refers in this work to the memory capacity of the network under this particular model. For example, a clique is maximally symmetric, because it provides no ordering on the nodes (which are actually the cells of the distributed memory) and can only be exploited as a logarithmic distributed memory ($n$ cells in unary), while, on the other hand, a line is minimally symmetric, because it provides a total ordering on the nodes and can be exploited as a linear memory ($n$ cells in binary).} The knowledge of initial connectivity and its preservation throughout the transformation process, ensure that the protocol always has all nodes of the network in a single component. Still, if the component is very symmetric, e.g. if it is a clique or a star, then, as already discussed above, it cannot be exploited for powerful computations. Moreover, if the target-network is symmetric, then there might be no way for the transformation to determine when has managed to form the network. For example, without additional non-local assumptions, as is the unrealistic assumption of knowing that the inactive degree is 0, it is impossible for a node to determine that its degree is $n$, because a node can only count up to constants, so it is in general impossible to determine the formation of a star with it at the center. Instead, our protocols \emph{transform any spanning connected initial topology into a spanning line while preserving connectivity throughout the transformation process}. The spanning line has the advantage that it can be detected under the minimal assumption that a node can detect whether its local degree is in $\{1,2\}$ and that it is minimally symmetric and, therefore, capable of serving as a linear memory. Preservation of connectivity allows the protocol to be certain that the spanning line contains all processes. So, the protocol can detect the formation of the spanning line and then count (on $O(\log n)$ cells, i.e. nodes, of the linear distributed memory) the size of the system. Then the protocol can use the spanning line as it is, for simulating (on the nodes of the line) TMs of space $\Theta(n)$. Going one step further, it is not hard for a protocol to exploit all this obtained information and perform a final transformation that increases the simulation space to $\Theta(n^2)$ (in the spirit of some constructions of \cite{MS14}).

In particular, given an initially connected active topology and the ability of the protocol to transform the topology, we prove the following set of results:
\begin{itemize}
\item If there is a unique leader and a node can detect whether its degree is equal to 1, then there is a time-optimal protocol, with running time $\Theta(n^2\log n)$, \footnote{In this work, the \emph{running time} is studied under a uniform random scheduler and is defined as the maximum/worst-case expected running time over all possible initial active topologies.} that transforms any initial active topology to a spanning line and terminates.
This implies a full-power TM simulation as described above.
\item If all nodes are initially identical (and even if small local degrees can be detected) then there is no protocol that can transform any initial active topology to an acyclic topology without ever breaking connectivity. The impossibility result is quite strong, proving that for any initial topology $G$ there is an infinite family $\cal{G}$ such that if the protocol makes $G$ acyclic then it disconnects every $G^\prime\in \cal{G}$ in $\Theta(|V(G^\prime)|)$ parts. The latter implies that it is impossible to transform to a spanning line with termination.
\item There is a plausible additional strengthening that allows the problem to become solvable with initially identical nodes. In particular, we assume that when two nodes interact they can tell whether they have a neighbor in common (the corresponding mechanism called \emph{common neighbor detection}). This is a very local and thus plausible mechanism to assume. We prove that with this additional assumption, initially identical nodes can transform any connected spanning initial active topology to a spanning line and terminate in time $O(n^3)$. This implies a full-power TM simulation as described above.
\end{itemize}

In Section \ref{sec:rw}, we discuss further related literature. Section \ref{sec:prel} brings together all definitions and basic facts that are used throughout the paper. In particular, in Section \ref{subsec:model} we formally define the model of network constructors under consideration, Section \ref{subsec:problems} formally defines the transformation problems that are considered in this work, and Section \ref{subsec:fi} provides some basic impossibility results and a lower bound on the time needed to transform any network to a spanning line. In Section \ref{sec:unique-leader}, we study the case in which there is a pre-elected unique leader and give two protocols for the problem, the Online-Cycle-Elimination protocol (Section \ref{subsec:cycle-elimination}) and the time-optimal Line-Around-a-Star protocol (Section \ref{subsec:line-around-a-star}). Then, in Section \ref{sec:identical-nodes}, we try to drop the unique leader assumption. First, in Section \ref{subsec:impossibilities} we show that, without additional assumptions, dropping the unique leader leads to a strong impossibility result. In face of this negative result, in Section \ref{subsec:neighbor-detection} we minimally strengthen the model with a common neighbor detection mechanism and give a correct terminating protocol. Finally, in Section \ref{sec:conclusions} we conclude and give further research directions that are opened by our work.  

\section{Further Related Work}
\label{sec:rw}

The model considered in this paper belongs to the family of population protocol models. The population protocol model \cite{AADFP06} was originally developed as a model of highly dynamic networks of simple sensor nodes that cannot control their mobility. The first papers focused on the computational capabilities of the model which have now been almost completely characterized. In particular, if the interaction network is complete, i.e. one in which every pair of processes may interact, then the computational power of the model is equal to the class of the \emph{semilinear predicates} (and the same holds for several variations) \cite{AAER07}. Semilinearity persists up to $o(\log\log n)$ local space but not more than this \cite{MNPS11}. If additionally the connections between processes can hold a state from a finite domain (note that this is a stronger requirement than the active/inactive that the present work assumes) then the computational power dramatically increases to the commutative subclass of $\rem{NSPACE}(n^2)$ \cite{MCS11-2}. The latter constitutes the mediated population protocol (MPP) model, which was the first variant of population protocols to allow for states on the edges. For introductory texts to these models, the interested reader is encouraged to consult \cite{AR09} and \cite{MCS11}. 

Based on the MPP model, \cite{MS14} restricted attention to binary edge states and regarded them as a physical (or virtual, depending on the application) bonding mechanism. This gave rise to a ``hybrid'' self-assembly model, the network constructors model, in which the actual dynamicity is passive and due to the environment but still the protocol can construct a desired network by activating and deactivating appropriately the connections between the nodes. The present paper essentially investigates the computational power of the network constructors model under the assumption that a connected spanning active topology is provided initially and also initiates the study of the \emph{distributed network reconfiguration problem}. Recently, \cite{Mi15} studied a geometrically constrained variant of network constructors in which the interaction network is not complete but rather it is constrained by the existing shapes (every shape that can be formed being a sub-network of the 2D or 3D grid network). Interestingly, apart from being a model of computation, population protocols are also closely related to chemical systems. In particular, Doty \cite{Do14} has recently demonstrated their formal equivalence to \emph{chemical reaction networks} (CRNs), which model chemistry in a \emph{well-mixed solution}.

There are already several models trying to capture the self-assembly capability of natural processes with the purpose of engineering systems and developing algorithms inspired by such processes. For example, \cite{Do12} proposes to learn how to program molecules to manipulate themselves, grow into machines and at the same time control their own growth. The research area of ``algorithmic self-assembly'' belongs to the field of ``molecular computing''. The latter was initiated by Adleman \cite{Ad94}, who designed interacting DNA molecules to solve an instance of the Hamiltonian path problem. The model guiding the study in algorithmic self-assembly is the Abstract Tile Assembly Model (aTAM) \cite{Wi98,RW00} and variations (e.g. \cite{FHPR15}). In contrast to those models that try to incorporate the exact molecular mechanisms (like e.g. temperature, energy, and bounded degree), the abstract combinatorial rule-based model considered here is free of specific application-driven assumptions with the aim of revealing the fundamental laws governing the distributed (algorithmic) generation of networks. The present model may serve as a common substructure to more applied models (like assembly models or models with geometry restrictions, as in \cite{Mi15}) that may be obtained from it by imposing restrictions on the scheduler, the degree, and the number of local states.

Finally, we should mention that our motivating question here is ``how can we make a distributed system that consists of very weak entities and whose initial topology may be highly symmetric or without \emph{much structure}, not allowing interesting global computations, exploit its full potential computational capabilities?''. Though we pick population protocols as a starting point, we believe that this question (and also the problem of transforming between topologies/networks \emph{per se}) is important in several other contexts like robotics \cite{SY99,DFSY15,CKLL09,FYKY12}, self-configurable, self-assembly, programmable matter \cite{Mi15,DDGRS14}, and dynamic distributed systems \cite{OW05,KLO10,MCS14,MCS13b}. Imagine this question in a system of identical mobile robots arranged in the plane in a ring topology. This symmetric arrangement may not allow them to obtain global information or perform global computations. Moreover, trying to break the ring at some point (as all robots are identical) may disconnect the topology. Still there might be an algorithmic transformation to a more ``informative/useful'' topology (that allows for more powerful global computations) without ever disconnecting the system. Such a transformation is a weaker requirement than, for example, gathering \cite{CFPS03} and may be simpler to obtain.

\section{Preliminaries}
\label{sec:prel}

\subsection{The Model}
\label{subsec:model}

The model under consideration is the network constructors model of \cite{MS14} with the only essential difference being that in \cite{MS14} the initial configuration was always (apart from the network replication problem) the one in which all edges are inactive, while in this work the initial configuration can be any configuration in which the active edges form a spanning connected network. Still, we give a detailed presentation of the model for self-containment.

\begin{definition}
A \emph{Network Constructor} (NET) is a distributed protocol defined by a 4-tuple $(Q,q_0,Q_{out},\delta)$, where
$Q$ is a finite set of \emph{node-states}, $q_0\in Q$ is the \emph{initial node-state}, $Q_{out}\subseteq Q$ is the set of \emph{output node-states}, and $\delta : Q\times Q\times \{0,1\} \rightarrow Q\times Q\times \{0,1\}$ is the \emph{transition function}. When required, also a special \emph{initial leader-state} $l_0\in Q$ may be defined.
\end{definition}

If $\delta(a,b,c) = (a^{\prime},b^{\prime},c^{\prime})$, we call $(a,b,c) \rightarrow (a^{\prime},b^{\prime},c^{\prime})$ a \emph{transition} (or \emph{rule}) and we define $\delta_{1}(a,b,c) = a^{\prime}$, $\delta_{2}(a,b,c) = b^{\prime}$, and $\delta_{3}(a,b,c) = c^{\prime}$. A transition $(a,b,c) \rightarrow (a^{\prime},b^{\prime},c^{\prime})$ is called \emph{effective} if $x\neq x^\prime$ for at least one $x\in\{a,b,c\}$ and \emph{ineffective} otherwise. When we present the transition function of a protocol we only present the effective transitions. Additionally, we agree that the \emph{size} of a protocol is the number of its states, i.e. $|Q|$.

The system consists of a population $V_I$ of $n$ distributed \emph{processes} (also called \emph{nodes} when clear from context). In the generic case, there is an underlying \emph{interaction graph} $G_I=(V_I,E_I)$ specifying the permissible interactions between the nodes. Interactions in this model are always pairwise. In this work, unless otherwise stated, $G_I$ is a \emph{complete undirected interaction graph}, i.e. $E_I=\{uv:u,v\in V_I \text{ and } u\neq v\}$, where $uv=\{u,v\}$. When we say that all nodes in $V_I$ are initially \emph{identical}, we mean that all nodes begin from the initial node-state $q_0$. In case we assume the existence of a unique leader, then there is a $u\in V_I$ beginning from the initial leader-state $l_0$ and all other $v\in V_I\bs\{u\}$ begin from the initial node-state $q_0$ (which in this case may also be called the \emph{initial nonleader-state}).

A central assumption of the model is that edges have binary states. An edge in state 0 is said to be \emph{inactive} while an edge in state 1 is said to be \emph{active}. In almost all problems studied in \cite{MS14} (apart from the replication problem), all edges were initially inactive. Though we shall also consider this case in the present paper, our main focus is on a different setting in which the protocol begins its execution on a precomputed set of active edges provided by some adversary. Formally, there is an input set of edges $E\subseteq E_I$, such that all $e\in E$ are initially active and all $e^\prime\in E_I\bs E$ are initially inactive. The set $E$ defines the \emph{input graph} $G=(V_I,E)$, also called the \emph{initial active topology/graph}. Throughout this work, unless otherwise stated, we assume that the initial active topology is \emph{connected}, which means that the active edges form a connected graph spanning $V_I$. This is a restriction imposed on the adversary selecting the input. In particular, the adversary is allowed to choose any initial set of active edges $E$ (in a worst-case manner), subject to the constraint that $E$ defines a connected graph on the whole population. 

Execution of the protocol proceeds in discrete steps. In every step, a pair of nodes $uv$ from $E_I$ is selected by an \emph{adversary scheduler} and these nodes interact and update their states and the state of the edge joining them according to the transition function $\delta$. In particular, we assume that, for all distinct node-states $a,b\in Q$ and for all edge-states $c\in\{0,1\}$, $\delta$ specifies either $(a,b,c)$ or $(b,a,c)$. So, if $a$, $b$, and $c$ are the states of nodes $u$, $v$, and edge $uv$, respectively, then the unique rule corresponding to these states, let it be $(a,b,c)\rightarrow (a^\prime,b^\prime,c^\prime)$, is applied, the edge that was in state $c$ updates its state to $c^\prime$ and if $a\neq b$, then $u$ updates its state to $a^\prime$ and $v$ updates its state to $b^\prime$, if $a=b$ and $a^\prime=b^\prime$, then both nodes update their states to $a^\prime$, and if $a=b$ and $a^\prime\neq b^\prime$, then the node that gets $a^\prime$ is drawn equiprobably from the two interacting nodes and the other node gets $b^\prime$. 

A \emph{configuration} is a mapping $C : V_I\cup E_I \rightarrow Q\cup \{0,1\}$ specifying the state of each node and each edge of the interaction graph. Let $C$ and $C^{\prime}$ be configurations, and let $u$, $\upsilon$ be distinct nodes. We say that \emph{$C$ goes to $C^{\prime}$ via encounter $e=u\upsilon$}, denoted $C \stackrel{e}\rightarrow C^{\prime}$, if $(C^{\prime}(u),C^{\prime}(v),C^{\prime}(e))=$ $\delta(C(u),C(v),C(e)) \text{ or }$ $(C^{\prime}(v),C^{\prime}(u),C^{\prime}(e))=$ $\delta(C(v),C(u),C(e)) \text{ and }$ $C^{\prime}(z)=$ $C(z), \mbox{ for all } z\in (V_I\bs\{u,v\})\cup (E_I\bs\{e\})$. We say that \emph{$C^\prime$ is reachable in one step from $C$}, denoted $C\rightarrow C^{\prime}$, if $C \stackrel{e}\rightarrow C^{\prime}$ for some encounter $e\in E_I$. We say that $C^{\prime}$ is \emph{reachable} from $C$ and write $C\rsa C^{\prime}$, if there is a sequence of configurations $C=C_{0},C_{1},\ldots,C_{t}=C^{\prime}$, such that $C_{i}\rightarrow C_{i+1}$ for all $i$, $0\leq i <t$.

An \emph{execution} is a finite or infinite sequence of configurations $C_{0},C_{1},$ $C_{2},\ldots$, where $C_{0}$ is an initial configuration and $C_{i}\rightarrow C_{i+1}$, for all $i\geq 0$. A \emph{fairness condition} is imposed on the adversary to ensure the protocol makes progress. An infinite execution is \emph{fair} if for every pair of configurations $C$ and $C^{\prime}$ such that $C\rightarrow C^{\prime}$, if $C$ occurs infinitely often in the execution then so does $C^{\prime}$. In what follows, every execution of a NET will by definition considered to be fair.

We define the \emph{output of a configuration} $C$ as the graph $G(C)=(V,E)$ where $V=\{u\in V_I: C(u)\in Q_{out}\}$ and $E=\{uv:u,v\in V,$ $u\neq v$, and $C(uv)=1\}$. In words, the output-graph of a configuration consists of those nodes that are in output states and those edges between them that are active, i.e. the active subgraph induced by the nodes that are in output states. The output of an execution $C_0,C_1,\ldots$ is said to \emph{stabilize} (or \emph{converge}) to a graph $G$ if there exists some step $t\geq 0$ s.t. $G(C_i)=G$ for all $i\geq t$, i.e. from step $t$ and onwards the output-graph remains unchanged. Every such configuration $C_i$, for $i\geq t$, is called \emph{output-stable}. The \emph{running time} (or \emph{time to convergence}) of an execution is defined as the minimum such $t$ (or $\infty$ if no such $t$ exists). Throughout the paper, whenever we study the running time of a NET, we assume that interactions are chosen by a \emph{uniform random scheduler} which, in every step, selects independently and uniformly at random one of the $|E_I|=n(n-1)/2$ possible interactions. In this case, the running time on a particular $n$ and an initial set of active edges $E$ becomes a random variable (abbreviated ``r.v.'') $X_{n,E}$ and our goal is to obtain bounds on $\max_{n,E}\{\E[X_{n,E}]\}$, where $\E[X]$ is the \emph{expectation} of the r.v. $X$. That is, the running time of a protocol is defined here as the maximum (also called \emph{worst-case}) expected running time over all possible initial configurations. Note that the uniform random scheduler is fair with probability 1.

\begin{definition}
We say that an execution of a NET on $n$ processes \emph{constructs a graph} (or \emph{network}) $G$, if its output stabilizes to a graph isomorphic to $G$.
\end{definition}

\begin{definition}
We say that a NET $\ca$ \emph{constructs a graph language $L$ with useful space $g(n)\leq n$}, if $g(n)$ is the greatest function for which: (i) for all $n$, every execution of $\ca$ on $n$ processes constructs a $G\in L$ of order at least $g(n)$ (provided that such a $G$ exists) and, additionally, (ii) for all $G\in L$ there is an execution of $\ca$ on $n$ processes, for some $n$ satisfying $|V(G)|\geq g(n)$, that constructs $G$. Equivalently, we say that \emph{$\ca$ constructs $L$ with waste $n-g(n)$}.
\end{definition}

In this work, we shall also be interested in NETs that construct a graph language and additionally always \emph{terminate}. 

\begin{definition}
We call a NET $\ca$ \emph{terminating} (or say that $\ca$ \emph{always terminates}) if every execution of $\ca$ reaches a \emph{halting} configuration, that is one in which every node is in a state $q_{h}$ from a set of halting states $Q_{halt}$, where $(q_h,q,s)\ra (q_h,q,s)$ (i.e. is ineffective) for every $q_h\in Q_{halt}$, $q\in Q$, and $s\in \{0,1\}$. 
\end{definition}

Observe that termination (as usual) is a property of the protocol and is by no means related to the scheduler, in the sense that the scheduler keeps selecting interactions forever but it happens that the protocol has entered a halting configuration in which no further state updates are possible no matter what the scheduler chooses to do. Additionally the protocol \emph{knows} that it has halted. This is in contrast to protocols with \emph{stabilizing states} \cite{CMNS13} in which, though the states eventually stabilize, the protocol cannot tell (e.g. announce) when this is the case because it does not use a distinguished set of halting states.

Finally, in order to consider TM simulations, we denote by $\rem{SSPACE}(f(n))$ the symmetric subclass of the complexity class $\rem{SPACE}(f(n))$ \footnote{$\rem{SPACE}(f(n))$ is the class of all languages decided by a deterministic TM using space $O(f(n))$.}, that is the one consisting of all symmetric languages in $\rem{SPACE}(f(n))$, where a language $L$ is \emph{symmetric} if for each input string $x\in L$, any permutation of $x$'s symbols also belongs to $L$. Note that symmetric languages are also known as \emph{commutative} \cite{FMR68}. The reason for restricting attention to symmetric languages is that the nodes are initially anonymous, therefore, in many initial configurations, a protocol has no means of distinguishing some permutations of the input assignment (in the extreme cases of a clique or an empty input graph, a protocol cannot distinguish any possible pair of permutations).   

\subsection{Problem Definitions}
\label{subsec:problems}

\noindent\textbf{Acyclicity.} Let $G=(V,A)$ be the subgraph of $G_I$ consisting of $V$ and the active edges between nodes in $V$, that is $A=\{e\in E_I: C(e)=1\}$. The initial $G$ is connected. The goal is for the processes to stably transform $G$ to an acyclic graph spanning $V$ without ever breaking the connectivity of $G$.\\

\noindent\textbf{Line Transformation.} Let $G=(V,A)$ be the subgraph of $G_I$ consisting of $V$ and the active edges between nodes in $V$, that is $A=\{e\in E_I: C(e)=1\}$. The initial $G$ is connected. The goal is for the processes to stably transform $G$ to a spanning line.\\

\noindent\textbf{Terminating Line Transformation.} The same as Line Transformation with the additional requirement that all processes must terminate.\\

Keep in mind that the Acyclicity problem requires the preservation of $G$'s connectivity while, on the other hand, Line Transformation does not. This is quite plausible, as making a connected graph acyclic without worrying about connectivity preservation is trivial (the protocol just has to eliminate all edges eventually).

\subsection{Fundamental Inabilities}
\label{subsec:fi}

We now give a few basic impossibility results that justify the necessity of minimally strengthening the network constructors model in order to be able to solve the above main problems.  

The following proposition (which is a well-known fact in the relevant literature but we include here a proof for self-containment) states that if the system does not involve edge states (i.e. the original population protocol model with transition function $\delta:Q\times Q\ra Q\times Q$), then a protocol cannot decide with termination whether there is a single $a$ in the population (mainly because a node does not know how much time it has to wait to meet every other node). Though the result is not directly applicable to our model, still we believe that it might help the reader's intuition w.r.t. to the computational difficulties in this family of models.

\begin{proposition} [PPs Impossibility of Termination] \label{pro:pp-termination-impossibility}
There is no population protocol that can compute with termination the predicate $(N_a\geq 1)$ (i.e. whether there exists an $a$ in the input assignment).
\end{proposition}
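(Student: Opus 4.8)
The plan is to derive a contradiction from the assumption that such a terminating population protocol $\ca$ exists. The key idea is the classical indistinguishability/pumping argument for population protocols: a terminating protocol must halt after only finitely many interactions, but in that bounded window a ``small'' population that does contain an $a$ cannot be distinguished from a suitably ``diluted'' population in which the $a$ has not yet been encountered by the nodes that decide to halt.

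\textbf{Step 1: Set up two instances.} Consider an input assignment $I_1$ on a population of some fixed size that contains exactly one $a$ (the rest being some other symbol, say $b$); since $N_a\ge 1$ holds, every fair execution of $\ca$ on $I_1$ must reach a halting configuration in which all nodes output ``yes''. Fix one such fair execution $E_1$; it reaches a halting configuration after a finite number $t$ of effective interactions involving some finite set of nodes. Now consider a much larger population with input assignment $I_2$ that also contains exactly one $a$; again $\ca$ must terminate with all nodes answering ``yes''. The point of introducing $I_2$ is only to have a large supply of $b$-nodes to work with.

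\textbf{Step 2: Run the $a$-free instance and reach a contradiction.} The cleaner route is to take a population consisting entirely of $b$'s (input assignment with $N_a = 0$), on which $\ca$ must terminate with all nodes outputting ``no''. Fix a fair halting execution; it halts after finitely many interactions, entering a halting configuration $C_{\mathrm{halt}}$ in which every node is in a halting state. Now I would argue that from $C_{\mathrm{halt}}$ we can build an initial configuration with $N_a\ge 1$ whose execution is forced into the same ``no''-halting behavior: replace the state of a single node by the initial state of an $a$-carrying node — but this does not immediately work because that node is not in $q_0$. So instead the argument should run in the other direction: start from an instance $I$ with one $a$ and many $b$'s; by termination, some fair execution halts after finitely many steps. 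Along this execution, only finitely many interactions occur, so there exists a node $v$ that (because the population can be taken large enough, or by fairness we can delay) reaches its halting state before ever interacting, directly or transitively through a chain of interactions, with the unique $a$. Formally, track the ``$a$-influence'' relation: $v$ is $a$-influenced at time $k$ if there is a chain of interactions up to time $k$ connecting $v$ to the original $a$-node. In a large population, $\ca$ must be able to halt (by the terminating assumption, on every fair execution and hence on some execution reaching a halting configuration quickly) with some node halting while still $a$-uninfluenced — its entire view is identical to its view in an all-$b$ execution, so it must output ``no'', contradicting that on $I$ every node outputs ``yes''.

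\textbf{Step 3: Make the indistinguishability rigorous.} The main obstacle is making precise the claim that the protocol ``must'' halt a node before $a$-influence spreads to it; a priori, $\ca$ could wait until the $a$ has propagated everywhere. The resolution is that termination must hold for \emph{every} fair execution, so in particular for one in which the scheduler first lets a large $b$-only sub-population interact among itself for a long time while keeping the $a$-node isolated. On such a prefix, the $b$-sub-population behaves exactly as in the all-$b$ instance, where $\ca$ halts with ``no'' after finitely many steps; hence in the mixed instance those $b$-nodes also reach halting ``no''-states before any $a$-influence — but halting states are, by definition, ineffective forever after, so they can never be corrected once the $a$ eventually reaches them. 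This contradicts correctness on $N_a\ge 1$. I would package Steps 2–3 as: choose the sub-population size large enough that the all-$b$ protocol's (finite, fixed) halting time is realized, embed that execution as a prefix of a fair execution on the $N_a\ge1$ instance, observe a wrongly-halted node, done.
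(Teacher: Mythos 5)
Your final packaged argument (Steps 2--3) is correct and is essentially the paper's own proof: fix a finite halting execution of the protocol on an all-$b$ population, replay it as an unfair (but extendable to fair) prefix on a population augmented with a single isolated $a$-node, and observe that the $b$-nodes halt outputting ``no'' and, halting states being forever ineffective, can never be corrected, contradicting correctness on $N_a\geq 1$. The only difference is cosmetic: the paper first passes through a leader-mediated variant of the model before running this indistinguishability argument, whereas you argue directly in the original model, which is equally valid.
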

\begin{proof}
Consider a population of size $n$ and let the nodes be $u_1,u_2,\ldots,u_n$. It suffices to prove the impossibility for the variation in which there is a unique leader, initially in state $l$, and all other nodes are non-leaders, initially in state $q_a$ if their input is $a$ and $q_b$ if their input is not $a$, and all interactions are between the leader and the non-leaders. This is w.l.o.g. because this model is not weaker than the original population protocol model, which means that an impossibility for this model also transfers to the original population protocol model. Indeed, this model can easily simulate the original model as follows. Interactions between two non-leaders can be simulated via the leader: the leader first collects the state $q_1$ of a node $u$, which it marks, and then waits to interact with another node $v$. When this occurs, if the state of $v$ is $q_2$, rule $(q_1,q_2)\ra (q_1^\prime,q_2^\prime)$ is applied to the state stored by the leader and to the state of $v$. Then the leader waits to meet $u$ again (which can be detected since it has been marked) in order to update its state to $q_1^\prime$. When this occurs, the leader drops the stored information and starts a new simulation round.

So, let $u_1$ be the initial leader. Let $A$ be a protocol that computes $(N_a\geq 1)$ and terminates on every $n$ and every input assignment. Consider now the input assignment in which all inputs are $b$, that is there is no $a$ and thus all non-leaders begin with initial state $q_b$. Clearly, it must hold that in every fair execution the leader terminates in a finite number of steps and says ``no''. These steps are interactions between the leader and the non-leaders so any such execution can be represented by a sequence of $u_j$s from $\{u_2,\ldots,u_n\}$. Let now $s=v_1,v_2,\ldots,v_k$ be any such finite execution in which the leader says ``no''. $v_i\in \{u_2,\ldots,u_n\}$ is simply the node with which the leader interacted at step $i$.

Consider now a population of size $n+1$. The only difference to the previous setting is that now we have added a node $u_{n+1}$ with input $a$. Since now the predicate evaluates to 1, in every fair execution, $A$ should terminate in a finite number of steps and say ``yes''. Take any fair execution $s^\prime=s,v_{k+1},\ldots,v_h$, that is $s^\prime$ has $s$ as an ``unfair'' prefix. As $s$ contains the same nodes as before with the same input assignment, the leader in $s^\prime$ terminates in precisely $k$ steps saying ``no'' without knowing that an additional node with input $a$ exists in this case. This contradicts the existence of protocol $A$. We should mention that the leader has no means of guessing the existence of node $u_{n+1}$ because its termination only depends on the protocol stored in its memory which is by definition finite and independent of $n$ (suffices to consider the longest chain of rules that leads to termination with output ``no'' and which corresponds to at least one feasible execution).
\end{proof}

Moreover, even if the system is initially connected, there are some very symmetric topologies that do not allow for strong computations. For example, if the topology is a star with the leader at the center, then the system is equivalent to population protocols on a complete interaction graph and can compute only semilinear predicates on input assignments, again only in an eventually-stabilizing way (i.e. no termination). This is captured by the following proposition.

\begin{proposition} [Structure vs Computational Power]
There are initial topologies in which the computational power of population protocols is as limited as in the case of no structure at all.
\end{proposition}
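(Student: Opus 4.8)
The plan is to fix the initial topology to be the star $S_n$ with the pre-elected leader sitting at the center $c$ and the remaining $n-1$ nodes as leaves, and to show that on this topology the model collapses to ordinary population protocols on a complete interaction graph — which, by \cite{AAER07}, compute exactly the semilinear predicates, and, by Proposition~\ref{pro:pp-termination-impossibility}, cannot do so with termination. Since the clique is the canonical ``no structure at all'' interaction graph, this is precisely the asserted collapse. I would phrase the argument in terms of population protocols on a restricted interaction graph, remarking in passing that a NET that never modifies its edges is exactly such a protocol on its active topology, so the conclusion transfers to NETs whose initial active graph is a star.

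For the easy inclusion (a star is \emph{no more} powerful than a clique), I would argue by simulation on the clique. Given any population protocol $A$ on the star, build a protocol $A'$ on $K_n$ whose state set is $Q_A\times\{\text{center},\text{leaf}\}$: the center/leaf bit is fixed by the initial (leader) configuration, $A'$ applies the rules of $A$ verbatim to every center--leaf encounter, and makes every leaf--leaf encounter ineffective. Then the effective transitions of $A'$ on $K_n$ are exactly the transitions of $A$ on $S_n$, and fairness on $K_n$ forces each center--leaf pair to occur infinitely often, so the induced star execution is fair; hence $A'$ on $K_n$ stabilizes to the same output as $A$ on $S_n$. Therefore every predicate computable on the star is computable on the clique, and so lies in the semilinear class.

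For the matching lower bound (the star is \emph{at least} as powerful as the clique, giving exact equivalence), I would reuse the relaying trick already exploited in the proof of Proposition~\ref{pro:pp-termination-impossibility}: to simulate a clique interaction between two leaves, the center picks the first leaf, reads and marks it, then picks a second (unmarked) leaf, reads it, applies the clique rule to the pair, writes back the updated state to the second leaf, and finally writes back the updated state to the marked leaf when it is next encountered; the center keeps its own simulated clique-state in a separate field so that it also acts as an ordinary simulated node. A short fairness argument shows each simulated interaction is eventually completed in the right order, so the star simulates population protocols on $K_n$ and computes every semilinear predicate. For termination, I would simply observe that the proof of Proposition~\ref{pro:pp-termination-impossibility} is already carried out on exactly this configuration — a unique leader with all interactions between the leader and the non-leaders, i.e. a star centered at the leader — so even $(N_a\ge 1)$ has no terminating protocol here, and all computation on the star is necessarily only eventually stabilizing.

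The main obstacle is purely the bookkeeping in the second simulation: making the center's dual role (relay plus ordinary simulated node) precise, showing the mark/unmark discipline never deadlocks against a fair adversary, and checking that no spurious effective interactions are introduced; the running time blows up, but that is irrelevant to the computational-power statement. A secondary point I would be explicit about is that the argument genuinely uses the center being a distinguished leader — the leaves are mutually interchangeable, which is exactly what makes the star ``structureless'' from the model's viewpoint; for a leaderless star one would instead note that any protocol is forced to treat all leaves symmetrically, but that refinement is not needed for the statement as given.
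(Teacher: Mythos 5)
Your proposal is correct, but it takes a genuinely different route from the paper. The paper's proof is a one-line citation: for the star topology, \cite{CMNS13} already shows that the stably computable predicates are precisely the semilinear ones, i.e. exactly the power of population protocols on the clique. You instead re-derive this collapse from scratch for the leader-centered star via two simulations: the easy direction (tag nodes as center/leaf on $K_n$, make leaf--leaf encounters ineffective, transfer fairness) shows the star is no stronger than the clique-with-leader model, which is semilinear by \cite{AAER07} together with the standard robustness of that characterization to a pre-elected leader; the converse relay construction is exactly the trick the paper itself uses inside the proof of Proposition~\ref{pro:pp-termination-impossibility}, and your deferred bookkeeping (mark/unmark discipline, fairness of the induced execution, the center's dual role) is routine and closes without surprises. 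What your approach buys is self-containment and an elementary argument that does not invoke \cite{CMNS13}; what it costs is that you need the leader sitting at the center (the cited result covers the star as such) and you lean on the fact that a leader does not lift clique protocols above semilinearity, which the paper only covers via its ``same holds for several variations'' remark. Note also that for the statement as phrased (``as limited as'') only your first direction is needed; the relay direction merely upgrades the bound to the exact equivalence that the paper gets for free from the citation, and your closing observation that termination is impossible on this topology, while true by Proposition~\ref{pro:pp-termination-impossibility}, is not part of what the proposition asserts.
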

\begin{proof}
Let the initial active topology be a star. Then it has been proved in \cite{CMNS13} that the stably computable predicates in this case are precisely the semilinear predicates, i.e. equal to the set of predicates computable in the standard population protocol model in which the communication topology is always a clique. 
\end{proof}

The above expose the necessity of additional assumptions, such as topology modifications, in order to hope for terminating computations and surpass the computational power of classical population protocols. So, we turn our attention again to our model, i.e. where the edges have binary states and the protocol can modify them, and consider the case in which the initial topology is always connected.

\begin{proposition} [NETs Impossibility of Termination] \label{pro:termination-impossibility}
There is no protocol that can compute with termination the predicate $(N_a\geq 1)$, even if the initial topology is connected and even if there is a pre-elected unique leader.
\end{proposition}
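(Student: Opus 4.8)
The plan is to adapt the indistinguishability argument of Proposition~\ref{pro:pp-termination-impossibility}, the new ingredient being that the inertness of halting configurations neutralizes the extra power coming from edge modifications and initial connectivity. Suppose for contradiction that a NET $\ca$ computes $(N_a\geq 1)$ with termination, even when started from a connected active topology and even with a pre-elected unique leader. Fix any $n\geq 2$, any connected input graph $G$ on nodes $u_1,\dots,u_n$ with $u_1$ as the leader, and the all-$b$ input assignment (no node has input $a$). Since $\ca$ always terminates, some fixed fair execution from this initial configuration reaches, at a finite step $t$, a configuration in which every $u_i$ is in a halting state; by correctness each $u_i$ outputs ``no'' there. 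Let $\sigma=(e_1,\dots,e_t)$ be the corresponding finite sequence of interactions, each $e_i$ a pair drawn from $\{u_1,\dots,u_n\}$.

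Next I would pass to a population of $n+1$ nodes: keep $u_1,\dots,u_n$ with the same inputs and the same leader $u_1$, add a new non-leader node $w$ with input $a$, and take the initial active topology $G'$ obtained from $G$ by adding the single active edge $wu_1$ (any edge joining $w$ to $G$ serves). Then $G'$ is connected, hence a legal initial topology, and now $N_a=1$. Replay the interaction sequence $\sigma$ on $G'$. No $e_i$ involves $w$, so $w$ keeps its initial state and every edge incident to $w$ keeps its initial state throughout; moreover, since the active edges among $u_1,\dots,u_n$ agree in $G$ and $G'$ and the inputs and leader are unchanged, the trajectory restricted to $u_1,\dots,u_n$ and their mutual edges is step-for-step identical to the one on $G$. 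Hence after $\sigma$ we reach a configuration $C^{*}$ of $\ca$ on $G'$ in which $u_1,\dots,u_n$ are all halted and output ``no'', while $w$ is still in its initial state.

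The crucial observation is that $C^{*}$ is inert for $\ca$: every available interaction out of $C^{*}$ either is between two of the halted nodes $u_1,\dots,u_n$ or pairs $w$ with one of them, so in every case at least one endpoint is in a halting state and the transition rule leaves both node-states and the edge-state unchanged, by definition of a halting state. Therefore the infinite execution that performs $\sigma$ and then idles at $C^{*}$ forever is fair. If $w$'s initial state is not a halting state, this fair execution of $\ca$ never reaches a halting configuration, contradicting termination. If $w$'s initial state happens to be a halting state, then $C^{*}$ is itself a halting configuration in which at least $n$ of the $n+1$ nodes output ``no'', whereas the correct answer on this population is ``yes'' since $N_a=1$; this contradicts correctness. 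Either way $\ca$ cannot exist. The pre-elected leader does not circumvent the argument: it is the same finite-state node $u_1$ in both populations, it is never told about $w$ during $\sigma$, and a finite-state, size-oblivious node has no way to encode how long it must wait before safely concluding that no far-away $a$ exists in an arbitrarily large population.

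I expect the only delicate point to be the verification, in the second step, that replaying $\sigma$ verbatim on $G'$ truly reproduces the $G$-trajectory on $u_1,\dots,u_n$ and their incident edges --- which rests precisely on $w$ and its edges never being touched by $\sigma$ and on the initial active edges among the old nodes coinciding in $G$ and $G'$ --- together with the cheap case split on whether $w$'s initial state is halting. Everything else is the standard phenomenon that a finite-state, size-oblivious entity cannot safely terminate; what makes the statement nontrivial here, namely the presence of edge states, initial connectivity, and a leader, is defused by the inertness of halting configurations, which is exactly the feature the proposition is designed to exploit.
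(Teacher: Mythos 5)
Your proposal is correct and follows essentially the same route as the paper's proof: replay the finite halting execution of the all-$b$ instance as an (unfair) prefix on the topology augmented by one extra node carrying input $a$, so that the original nodes halt saying ``no'' while the predicate is true. Your extra care about fairness (the inertness of the halted configuration plus the case split on whether $w$'s initial state is halting) just spells out what the paper handles by extending the prefix to an arbitrary fair execution, so the two arguments coincide in substance.
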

\begin{proof} 
Assume there is such a terminating protocol $A$. Take the initial topology $G=(V_I,E)$, which is equal to the clique $K_n$, and all nodes have input $b$. Any fair execution $s$ of $A$ on $G$ must reach a configuration in which all nodes have terminated saying ``no''. Next, consider the initial topology $G^\prime=(V_I\cup \{u\},E^\prime)$, which is the same as $G$ with an additional node $u$ that is connected to an arbitrary node of $V_I$. All nodes of $V_I$ have again input $b$ while $u$ has input $a$. Consider any fair execution $s^\prime$ on $G^\prime$ that has $s$ as an unfair prefix. Clearly, in $s^\prime$ all nodes in $V_I$ terminate during $s$ (as before) erroneously saying ``no'' without having interacted with $u$ that has input $a$. The crucial observation that we used to arrive at the result is that the initial states of the nodes are independent of the size of the network, so, even though $|V(G^\prime)|>|V(G)|$, we can assume that in both $G$ and $G^\prime$, input $x$ is translated to the same initial state $I(x)$ (the result would not hold in general if the initial state had a means of knowing $n$ in advance, as, e.g., a node could wait until it has marked $n$ distinct nodes, which would allow the protocol differentiate its behavior between $G$ and $G^\prime$). \footnote{Observe that the impossibility would not hold in some special cases of initial topologies, such as the spanning line, provided that the protocol knows this and also that the initial states of the nodes contain some information about the structure (e.g. the endpoints of the line know that they are the endpoints).} 
\end{proof} 

So, connectivity of the initial topology alone, even if the protocol is allowed to transform the topology, is not sufficient for non-trivial terminating computations. In the rest of the paper we shall naturally try to overcome this by adding to the model minimal and realistic extra assumptions. Interestingly, it will turn out that there are some very plausible such assumptions that allow for: (i) termination and (ii) computation of all predicates on input assignments that can be computed by a TM in quadratic space ($O(n^2)$, where $n$ is the number of nodes). 

One of the assumptions that we will keep throughout is that the nodes are capable of detecting some small local degrees. For example, in Section \ref{sec:unique-leader} we will assume that a node can detect that it has local degree 1 or 2, otherwise it knows that it has degree in $\{0,3,4,...,n-1\}$ without being able to tell its precise value. We will complement this local degree detection mechanism with either a unique leader or a common neighbor detection mechanism in order to arrive at the above strong characterization. 

Keep in mind that we want to give protocols for Acyclicity and Terminating Line Transformation. In Acyclicity, the protocol begins from any connected active topology and has to transform it to an acyclic network without ever breaking the connectivity, while in Terminating Line Transformation the protocol does not necessarily have to preserve connectivity but it has to satisfy the additional requirements its constructed network to be a spanning line and to always terminate. Still, even for the Terminating Line Transformation problem we shall mostly focus on protocols that perform the transformation without ever breaking connectivity. A justification of this choice, is that arbitrary connectivity breaking could render the protocol unable to terminate even if the protocol is equipped with all the additional mechanisms mentioned above. This is made formal in Lemma \ref{lem:line-impossibility} of Section \ref{subsec:impossibilities}. One way to appreciate this is to consider a protocol in which a leader breaks in some execution the network into an unbounded number of components. Then the leader can no longer distinguish an execution in which one of these components is being concealed from an execution that it is not. For example, if the leader is trying to construct a spanning line, then it has no means of distinguishing a spanning line on all nodes but the concealed ones from one on all nodes. Of course, this does not exclude protocols that perform some controlled connectivity breakings, e.g. a leader breaking a spanning line at one point and then waiting to reconnect the two parts. So, in principle, our problems could have been defined independently of whether connectivity is preserved or not, as they can also be solved in some cases by protocols that do not always preserve connectivity. However, in this work, for simplicity and clarity of presentation, we have chosen to focus only on those protocols that always preserve connectivity.

Before starting to present our protocols for above problems and the upper bounds on time provided by them, we give a lower bound on the time that any protocol needs in order to solve the Line Transformation problem.

\begin{lemma} [Line Transformation Lower Bound] \label{lem:line-lower-bound}
The running time of any protocol that solves the Line Transformation problem is $\Omega(n^2\log n)$.
\end{lemma}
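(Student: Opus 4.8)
The plan is to obtain the lower bound from a coupon-collector–style argument on the edges that must be activated. Consider an adversarially chosen initial active topology in which $\Omega(n)$ specific edges of the eventual spanning line are missing — for instance, start from a spanning star centered at some node $c$. A spanning line has $n-1$ edges and maximum degree $2$, whereas in the star $c$ has degree $n-1$; hence any protocol solving Line Transformation must, in particular, \emph{activate} at least $\Omega(n)$ edges that are inactive in the initial configuration (the line needs $n-1$ edges, at most a constant number of which can be incident to $c$, so at least $n-1-O(1)$ of the line's edges are among the $\binom{n-1}{2}$ initially inactive non-star edges). Call these the \emph{required new edges}; there are $m=\Omega(n)$ of them, and each one can only be activated during an interaction of its two specific endpoints.

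First I would make precise the reduction to a hitting-time statement: until every required new edge has been the selected pair at least once, the output graph cannot have stabilized to a spanning line, so the running time is at least the first time $T$ at which all $m$ required edges have been selected by the uniform random scheduler. Each step the scheduler picks one of the $N=\binom{n}{2}=\Theta(n^2)$ possible pairs uniformly at random, so the time to ``collect'' a particular required edge is geometric with success probability $1/N$, and $T$ is exactly the coupon-collector time for collecting a designated subset of $m$ out of $N$ coupons. Second, I would invoke the standard bound for this variant: $\E[T] \ge N \cdot H_m = \Theta(n^2) \cdot \Theta(\log m) = \Theta(n^2)\cdot\Theta(\log n) = \Omega(n^2\log n)$, since $m=\Omega(n)$ gives $\log m = \Theta(\log n)$. (One clean way to see the $N H_m$ lower bound: order the $m$ required edges by the order in which they are first hit; the expected number of steps to get the $(i{+}1)$-st new required edge after $i$ have been hit is $N/(m-i) \ge N/(m-i)$ regardless of what else happens, because each step hits a given not-yet-collected required edge with probability exactly $1/N$; summing $\sum_{i=0}^{m-1} N/(m-i) = N H_m$.) Since the running time is defined as the worst-case expectation over initial topologies, exhibiting this one bad instance (the spanning star) suffices.

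The main obstacle, and the point that needs the most care, is justifying that a linear number of \emph{genuinely new} edge activations is unavoidable, i.e.\ that the adversary can force the target line to share only $O(1)$ edges with the initial topology. For the spanning star this is immediate from the degree bound: in the final line every vertex has degree $\le 2$, so at most $2$ of the star's $n-1$ edges (those incident to $c$) can survive, leaving $\ge n-3$ line-edges that were inactive at the start; these are edges among the leaves, none of which is in the star, so each must be activated by an interaction of its own endpoints. A secondary subtlety is that the protocol might activate and deactivate edges many times; but this only makes $T$ larger, and for the lower bound we only need that each required edge is selected \emph{at least once}, at some point, before stabilization — which it must be, since an edge that is never selected never changes state. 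I would also remark (briefly) that the bound is insensitive to whether connectivity is preserved during the transformation, so it applies to Terminating Line Transformation as well, matching the $\Theta(n^2\log n)$ upper bound claimed for the leader protocol in Section~\ref{sec:unique-leader}.
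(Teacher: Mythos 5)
Your reduction to a hitting-time statement breaks at the step where you treat the ``required new edges'' as a \emph{designated} subset of $m$ coupons. The Line Transformation problem only asks the protocol to stabilize to \emph{some} graph isomorphic to a line; which spanning line is formed (and hence which leaf--leaf edges get activated) is not fixed in advance but is determined by the protocol's execution, i.e.\ by the random schedule itself. Your key estimate --- ``the expected number of steps to get the $(i{+}1)$-st new required edge after $i$ have been hit is $N/(m-i)$, because each step hits a given not-yet-collected required edge with probability exactly $1/N$'' --- presupposes that the $m$ edges are specified before the randomness is revealed. An adaptive protocol can, in principle, build its line out of whichever leaf--leaf pairs the scheduler happens to offer early: the pool of candidate leaf--leaf edges has size $\binom{n-1}{2}=\Theta(n^2)$ and only $\Theta(n)$ of them are needed, so the time until the scheduler has supplied \emph{some} $n-3$ distinct leaf--leaf pairs is only $\Theta(n)$ in expectation. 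Thus the activation side of your star instance cannot, by itself, force $\Omega(n^2\log n)$; the argument as written does not establish the bound.

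The fix is to count forced \emph{deactivations} over a ground set that is fixed by the initial topology, and to note that all but a bounded number of its elements must be hit --- then adaptivity is harmless, since you are counting distinct hits inside a fixed set. This is exactly what the paper does: it starts from the active clique, observes that all but $n-1$ of the $\Theta(n^2)$ initially active edges must be deactivated (hence each selected at least once, as an unselected edge never changes state), and invokes the edge-cover/coupon-collector bound of $\Theta(n^2\log n)$ from \cite{MS14}. Your star instance can also be repaired in the same spirit: the center $c$ has degree $n-1$ initially and degree at most $2$ in any spanning line, so at least $n-3$ of the $n-1$ fixed edges incident to $c$ must be deactivated; the expected time to select $n-3$ distinct members of this fixed $(n-1)$-element set is $\sum_{i=0}^{n-4} N/(n-1-i)=N(H_{n-1}-H_2)=\Theta(n^2\log n)$ with $N=\binom{n}{2}$. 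Your remarks about repeated activations/deactivations only helping the bound, and about worst-case expectation over initial topologies, are fine; it is the choice of which edge set to ``collect'' that needs to change.
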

\begin{proof} 
Consider the worst case of an active clique initially, in which case any protocol must deactivate a total of $\Theta(n^2)$ edges. The time needed to deactivate these edges is lower bounded by the time needed for the scheduler to pick each of these edges at least once. This is asymptotically equivalent to the time needed for the scheduler to perform an edge cover, which was analyzed in \cite{MS14} to require expected time $\Theta(n^2\log n)$. So, any protocol in this worst-case instance must perform an average of $\Omega(n^2\log n)$ steps.  
\end{proof}

\section{Transformers with a Unique Leader}
\label{sec:unique-leader}

We begin from the simplest case in which there is initially a pre-elected unique leader that handles the transformation. Recall that the initial active topology is connected. The goal is for the protocol to transform the active topology to a spanning line and when this occurs to detect it and terminate (i.e. solve the Terminating Line Transformation problem). Ideally, the transformation should preserve connectivity of the active topology during its course (or break connectivity in a controlled way, because, as we already discussed in the previous section, uncontrolled/arbitrary connectivity breaking may render termination impossible). Moreover, as a minimal additional assumption to make the problem solvable (in order to circumvent the impossibility of Proposition \ref{pro:termination-impossibility}), we assume that a node can detect whether it has local degree 1 or 2 (otherwise it knows that it has degree in $\{0,3,4,...,n-1\}$ without being able to tell its precise value). We first (in Section \ref{subsec:cycle-elimination}) give a straightforward solution, with a complete presentation of its transitions and an illustration showing them in action. Though that protocol is correct, it is rather slow and it mainly serves as a demonstration of the model and the problem under consideration. Then (in Section \ref{subsec:line-around-a-star}) we follow a different approach and arrive at a time-optimal protocol for the problem.

\subsection{A First Solution}
\label{subsec:cycle-elimination}

The idea is simple. The leader begins from its initial node and starts forming an arbitrary line by expanding one endpoint of the line towards unvisited nodes. Every such expansion either occurs over an edge that was already active from the very beginning or over an inactive edge which the protocol activates. Apart from expanding its active line with the goal of making it spanning after $n-1$ expansions, the leader must also guarantee that eventually no cycles will have remained. One idea would be to first form a spanning line and then start eliminating all unnecessary cycles, however there is, in general, no way for the protocol to detect that the line is indeed spanning, due to the possible presence of non-line active edges joining nodes of the line. This is resolved by eliminating line-internal cycles ``online'' after every expansion of the line. This guarantees that when the last expansion occurs and the protocol deactivates the last cycles, the active topology will be a spanning line. Now the protocol can easily detect this by traversing the line from left to right and comparing the observed active degree sequence to the target degree sequence $1,2,2,\ldots,2,1$ (i.e. the degree sequence of a spanning line). We next give the detailed description of the protocol.\\  

\noindent\textbf{Protocol Online-Cycle-Elimination.} The leader marks its initial node as ``left endpoint'' $e_l$ and picks an arbitrary next node for the line (for the first step it could be from its active neighbors, because there is at least one such node due to initial connectivity) and marks that node as ``right endpoint'' $e_r$. Then the leader moves to $e_r$, finds an arbitrary next node which is not part of the current line, if the edge is inactive it activates it and marks that node as $e_r$ and the previous $e_r$ is converted to $i$ (for ``internal node'' of the line). Observe that the active line is always in special states, which makes its nodes detectable.

After every such expansion, the leader starts a cycle elimination phase. In particular, the leader deactivates all edges that introduce a cycle inside its active line. To do this, it suffices after every expansion to deactivate the cycles introduced by the new right endpoint $e_r$. First, the leader moves to $e_l$ (e.g. by direct communication or by traversing the active line to the left). Every time, the leader waits to meet $e_r$, in order to check the status of the edge; if it is active, it deactivates it and then moves on step to the right on the line. When the leader arrives at the left neighbor of $e_r$, all line-internal cycles have been eliminated and the leader just moves to $e_r$. If the degree sequence observed during the traversal to the right (the degree of a node is checked \emph{after} checking and possibly modifying the status of its edge to $e_r$) was of the form $1,2,2,\ldots,2,1$ then the line is spanning and the leader terminates. Otherwise, the line is not spanning yet and the leader proceeds to the next expansion.

The code of the protocol is presented in Protocol \ref{prot:online-cycle}. For readability, we only present the code for the expansion and cycle elimination phases and we have excluded the termination detection subroutine (it is straightforward to extend the code to also take this into account). An illustration showing what are the roles of the various states and transitions during the expansion and cycle elimination phases, is given in Figure \ref{fig:cycle-elimination}.

\floatname{algorithm}{Protocol}
\renewcommand{\algorithmiccomment}[1]{// #1}
\begin{algorithm}[!h]
  \caption{Online-Cycle-Elimination}\label{prot:online-cycle}
  \begin{algorithmic}
    \medskip
    \State $Q=\{l_0,l_1,l_c,l_c^\prime,l,q_0,e,i,i^\prime,t,t_e,t_f,t_f^\prime,t_r,t^\prime,p,p^\prime,p^{\dprime}\}$, initially the unique leader is in state $l_0$ and all other nodes are in state $q_0$
    \State $\delta$: 
    \begin{align*}
    (l_0,q_0,1) &\ra (e,l_1,1) &
    (t_e,i,1) &\ra (e,t,1)&
    (p^{\dprime},t^\prime,1) &\ra (p,t,1)\\
    (l_1,q_0,\cdot) &\ra (i^\prime,l_c^\prime,1) &
    (t,l_c,\cdot) &\ra (t_r,l_c,0) &
    (t_r,i^\prime,1) &\ra (p^\prime,t_f^\prime,1)\\
    (e,l_c^\prime,\cdot) &\ra (t_e,l_c,0) &
    (t_r,i,1) &\ra (p^\prime,t^\prime,1) &
    (p^{\dprime},t_f^\prime,1) &\ra (i,t_f,1)\\
    (t_e,i^\prime,1) &\ra (e,t_f,1) &
    (e,p^\prime,1) &\ra (e,p^{\dprime},1) &
    (l,q_0,\cdot) &\ra (i^\prime,l_c^\prime,1)\\
    (t_f,l_c,1) &\ra (i,l,1) &
    (p,p^\prime,1) &\ra (i,p^{\dprime},1)
    \end{align*}
    \State \Comment {All transitions that do not appear have no effect}
    \State \Comment {The logical structure is better followed if the transitions are read from top to bottom}    
  \end{algorithmic}
\end{algorithm}

\begin{figure}[!hbtp]
\centering{
\includegraphics[width=0.65\textwidth]{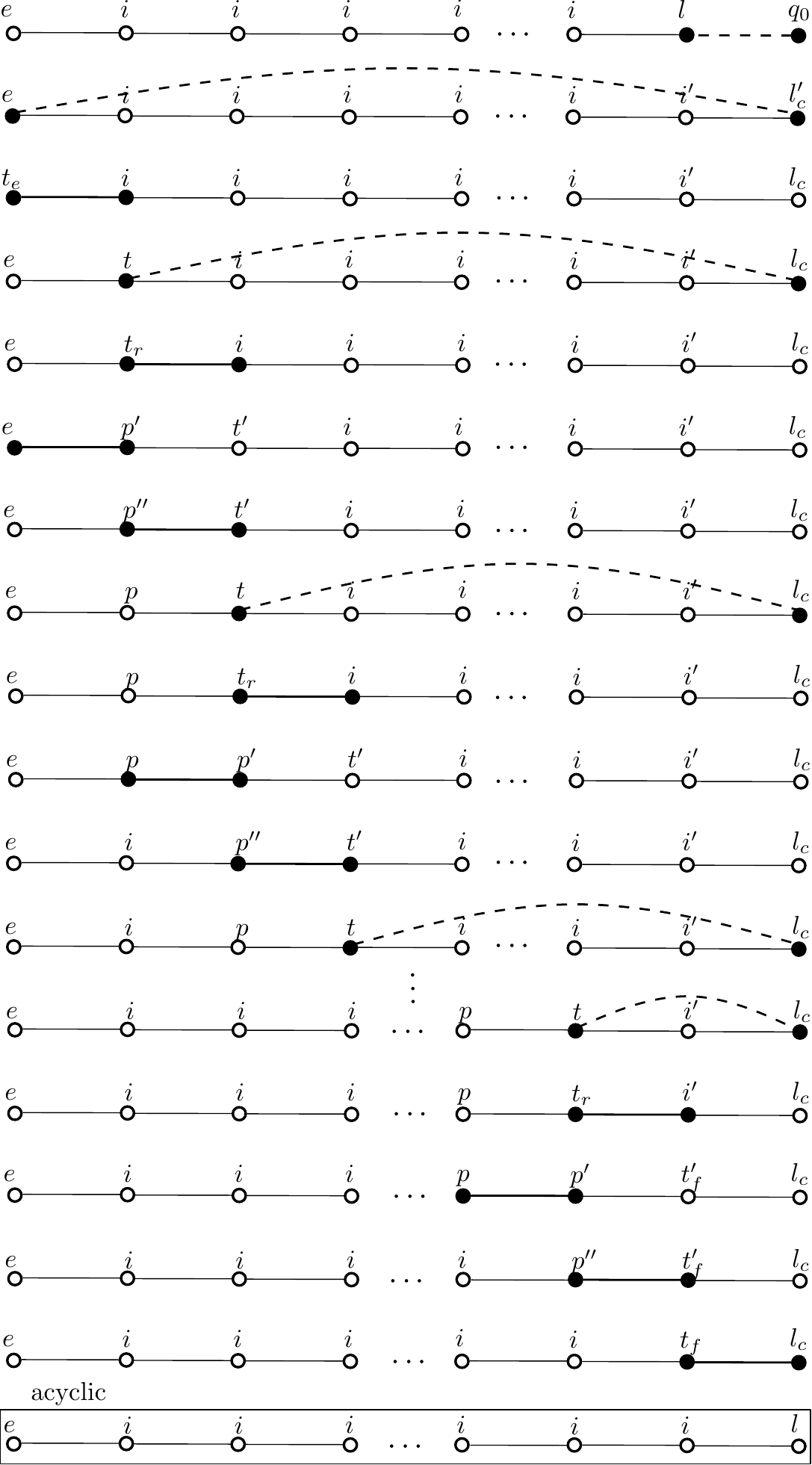}
}
\caption{An illustration of all transitions involved in Protocol Online-Cycle-Elimination during expansion of the line and elimination of the newly introduced internal cycles. The line at the top shows an expansion of the current acyclic line, the intermediate steps show the the process of eliminating cycles, and the line at the bottom is the new acyclic line. In every step, the two interacting nodes are colored black and joined by a bold edge. Dashed edges could be either active or inactive. The dashed edge of an expansion (top line) is activated no matter what its previous state was, while all other dashed edges in the figure, that correspond to (potential) cycle eliminations, are deactivated no matter what their state was.} \label{fig:cycle-elimination}
\end{figure}

\begin{theorem}
By assuming a pre-elected unique leader and the ability to detect local degrees 1 and 2, Protocol Online-Cycle-Elimination solves the Terminating Line Transformation problem in $\Theta(n^4)$ time.
\end{theorem}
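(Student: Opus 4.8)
The plan is to establish correctness and then the $\Theta(n^4)$ running-time bound separately. For correctness, I would argue by induction on the number of completed expansions that the following invariant holds at the start of each expansion phase: the nodes in the special ``line'' states (namely $e_l$-type, $i$, and $e_r$-type states) induce exactly a simple path $P$ on some subset $V' \subseteq V_I$, the leader token sits at the right endpoint, every node not yet on the line is still in $q_0$, and — crucially — there are no active edges among the nodes of $V'$ other than the $|V'|-1$ path edges. The base case follows from the first two transitions: starting from $l_0$ the leader recruits one active neighbor (one exists by initial connectivity) to form a two-node line. For the inductive step I would trace the transitions in Protocol~\ref{prot:online-cycle}: the expansion rules $(l_1,q_0,\cdot)\ra(i^\prime,l_c^\prime,1)$ etc.\ push the right endpoint onto a fresh $q_0$-node (activating the edge if it was inactive), temporarily marking the new node with a ``check'' state; then the cycle-elimination sweep — the token walking from $e_l$ rightward, at each position waiting to meet the new endpoint $e_r$ and deactivating the edge to it if active (rules involving $t_r$, $p^\prime$, $p^{\dprime}$, $t^\prime$, $t_f^\prime$) — removes precisely the chords from old line-nodes to $e_r$. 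Since by the inductive hypothesis the only active edges inside the old line were path edges, after this sweep the only active edges inside $V' \cup \{e_r\}$ are the old path edges plus the single new edge $e_l\text{-side}\cdots e_r$, i.e.\ a path on $|V'|+1$ vertices, re-establishing the invariant. Fairness guarantees every needed interaction (each expansion step, each step of each sweep) eventually occurs, so after $n-1$ expansions $V' = V_I$ and the active graph is a spanning line; at that point the final left-to-right traversal reads degree sequence $1,2,\dots,2,1$ (this is where degree-detection for $1$ and $2$ is used, and the invariant guarantees no stray active chords could corrupt the reading) and the leader halts. One must also check that on any non-final expansion the observed degree sequence is \emph{not} $1,2,\dots,2,1$ — this holds because a non-spanning line has at least one node in $q_0$, and either that node has degree $0$ (caught) or an active chord exists at the moment of reading forcing some internal degree $\ge 3$; I would make this precise by noting the token only terminates after a \emph{completed} clean sweep where it verified the sequence.

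For the running time, I would bound the worst case (an initial active clique, by Lemma~\ref{lem:line-lower-bound}-style reasoning, is representative) by summing the expected cost of the $n-1$ expansion-plus-elimination phases. In phase $k$ (when the line has $k$ nodes), the leader must: (i) perform one expansion step, requiring it to meet a specific $q_0$-node — expected $O(n^2)$ interactions since there is one designated token and it waits for one particular partner among the $\binom{n}{2}$ interactions; (ii) walk the token from $e_r$ to $e_l$, which is $O(k)$ single-token moves each costing expected $O(n^2)$, hence $O(kn^2)$; (iii) perform the rightward elimination sweep: at each of the $k$ positions the token must meet $e_r$ (expected $O(n^2)$) and then step right (expected $O(n^2)$), totalling $O(kn^2)$. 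So phase $k$ costs $O(kn^2)$ in expectation, and $\sum_{k=1}^{n-1} O(kn^2) = O(n^4)$. The matching lower bound $\Omega(n^4)$ comes from observing that on the active clique the protocol, by its design, does exactly one full left-to-right token sweep per expansion, the $k$-th sweep inherently requiring $\Omega(k)$ token relocations each of expected cost $\Omega(n^2)$ (meeting a designated node among $\binom n2$ pairs), and these phases are sequential because the protocol's control flow is strictly sequential (only one token, and phase $k+1$'s expansion transition is disabled until phase $k$'s sweep reaches $e_r$ and converts it) — hence $\sum_k \Omega(kn^2) = \Omega(n^4)$.

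The main obstacle I anticipate is not the time analysis but rigorously verifying the connectivity-and-no-chords invariant through the elimination sweep, because the auxiliary states ($t,t_e,t_f,t_f^\prime,t_r,t^\prime,p,p^\prime,p^{\dprime},i^\prime,l_c,l_c^\prime$) encode a small local protocol for "token currently between position $j$ and $e_r$'' and one must confirm that (a) the token never gets lost or duplicated, (b) an edge to $e_r$ is deactivated if and only if it is a genuine chord (never the path edge leading into $e_r$), and (c) the sweep cannot interleave badly with a premature start of the next expansion. I would handle this by giving the state-transition diagram of Figure~\ref{fig:cycle-elimination} a careful case analysis, treating the token's position and "mode'' as a single composite variable and checking each of the $14$ effective transitions preserves the intended semantics; a referee-facing writeup would present this as a short lemma ("the elimination phase terminates with the token at $e_r$ and the active edges inside the line forming a path'') proved by following the token. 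The rest — fairness-driven progress and the summation — is routine.
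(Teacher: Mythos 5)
Your overall strategy coincides with the paper's: an induction over expansion-plus-elimination phases maintaining the invariant that the line has no internal chords (together with preservation of connectivity of the whole active topology), termination detected via the degree sequence $1,2,\ldots,2,1$, and a running-time bound obtained by charging each phase $O(kn^2)$ expected time for the sequential marked-pair meetings, summing to $\Theta(n^4)$. The timing part and the inductive maintenance of the invariant are fine and, if anything, more detailed than the paper's version.

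There is, however, a genuine gap in the safety half of termination detection, i.e.\ the argument that the leader can never observe $1,2,\ldots,2,1$ while the line is not yet spanning. Your dichotomy --- ``the $q_0$ node either has degree $0$ (caught) or an active chord exists forcing some internal degree $\geq 3$'' --- does not work. First, the leader's verification sweep only reads the degrees of nodes \emph{on} the line, so a degree-$0$ node outside the line is never ``caught''; if such a node could exist, the leader would halt erroneously. Second, the active edges of a non-line node need not touch an internal line node at all: they could go to an endpoint, or (the case your two branches miss entirely) only to other non-line nodes, in which case nothing in your argument prevents the line's observed degree sequence from being exactly $1,2,\ldots,2,1$. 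The correct argument, and the one the paper uses, is to invoke at this point the connectivity preservation you already maintain: since the active topology remains connected and the line covers a proper nonempty subset of $V_I$, the cut between line nodes and the rest must contain an active edge; the line node incident to it has active degree at least $2$ if it is an endpoint and at least $3$ if it is internal (any chord of that node to $e_r$ having just been removed), so the target degree sequence cannot be observed. You have all the ingredients --- your last paragraph even names the ``connectivity-and-no-chords invariant'' --- but you never apply connectivity at the decisive step, and as written that step fails; appealing to ``a completed clean sweep'' does not repair it, since the sweep only inspects line nodes.
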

\begin{proof}
We prove the following invariant: ``For all $1\leq i\leq n-1$, after the $i$th expansion and cycle elimination phases, the leader lies on the $e_r$ endpoint of an active line of (edge-)length $i$ without line-internal cycles (still any node of the line may have active edges to the rest of the graph) and the active topology is connected''. This implies that for $i<n-1$ there is at least one node of the line that has an edge to a node not belonging to the line and that for $i=n-1$ the active topology is a spanning line (without any other active edges). 

First observe that connectivity never breaks, because whenever the protocol deactivates an edge $e=uv$, both $u$ and $v$ are nodes belonging to the active line formed so far (in particular, at least one of them is the $e_r$ endpoint of the line). As $e$ is an edge forming a cycle on the active line after its deactivation connectivity between $u$ and $v$ still exists by traversing the line.

We prove by induction the rest of the invariant. It holds trivially for $i=1$. Given that it holds for any $1\leq i\leq n-2$ we prove that it holds for $i+1$. By hypothesis, when expansion $i+1$ occurs, the only possible line-internal cycles are between the new $e_r$ and the rest of the line. During the cycle elimination phase the protocol eliminates all these cycles, and as a result by the end of phase $i+1$ the active line has now length $i+1$, it has no internal cycles and is still connected to the rest of the graph. 

It remains to show that the leader terminates just after phase $n-1$ and never at a phase $i<n-1$. For the first part, after phase $n-1$ the active topology is a spanning line, thus the observed degree sequence when the leader traverses it from left to right is of the form $1,2,2,\ldots,2,1$ which triggers termination. For the second part, after any phase $i<n-1$ there is at least one node of the line having an active edge leading outside the line. In case that node is an endpoint, its active degree is at least 2 and in case it is an internal node its active degree is at least 3 (after eliminating a possible cycle of that node with $e_r$). So, in this case the observed degree sequence is not of the form $1,2,2,\ldots,2,1$ and, as required, the leader does not terminate.

For the running time, the worst case is when the initial active topology is a clique. In this case, the protocol must deactivate $\Theta(n^2)$ edges to transform the clique to a line. Every edge deactivation is performed by placing a mark on each endpoint of the edge and waiting for the scheduler to pick that edge for interaction. This takes time $\Theta(n^2)$, so the total time for deactivating $\Theta(n^2)$ edges is $\Theta(n^4)$.
\end{proof}

We should mention that due to the unique-leader guarantee, it suffices to only have detection of whether the degree is equal to 1 (i.e. the detection of degree equal to 2 can be dropped). The reason is that the leader can every time break the line at some point while marking the two endpoints of the edge and then check whether one of these nodes has degree 1. If yes, then its previous degree was 2 and the leader waits for the two marked nodes to interact again in order to reconnect them, now knowing their degree.

A drawback of the above protocol is that it is rather slow. In the sequel, we develop another protocol, based on a different transformation technique, which is time-optimal. 

\subsection{An Optimal Protocol}
\label{subsec:line-around-a-star}

\noindent\textbf{Protocol Line-Around-a-Star.} There is initially a unique leader in state $l$ and all other nodes are in state $q_0$. Moreover, nodes can detect when their degree is 1.

The leader starts connecting with the $q_0$s (by activating the connection between them in case it was inactive and by preserving it in case it was already active) and converts them to $p^\prime$ trying to form a star with itself at the center. When two $p^\prime$s interact, if the edge is active they deactivate it, trying to become the peripherals of the star. Additionally, if after such a deactivation the degree of a $p^\prime$ is 1, then the $p^\prime$ becomes $p$ to represent the fact that it is now connected only to the leader and has become a normal peripheral. The same occurs if after the interaction of the leader with a $q_0$, the degree of the $q_0$ is 1, i.e. the $q_0$ immediately becomes a normal peripheral $p$.

When the leader first encounters a $p$, it starts constructing a line which has as its ``left'' endpoint the center of the star and that will start expanding over the peripherals until it covers them all. Whenever the leader interacts with an internal node of the line, it disconnects from it (but it never disconnects from second node of the line, counting from the center; to ensure this, the protocol has that node in a distinguished state $i^\prime$ while all other internal nodes of the line are in state $i$). The protocol terminates when the degree of the center becomes 1 for the first time (note that it could be 1 also at the very beginning of the protocol but this early termination can be trivially avoided). An example execution is depicted in Figure \ref{fig:around-a-star}.

\begin{figure}[!hbtp]
   \centering{
        \subfigure[]{
        \includegraphics[width=0.4\textwidth]{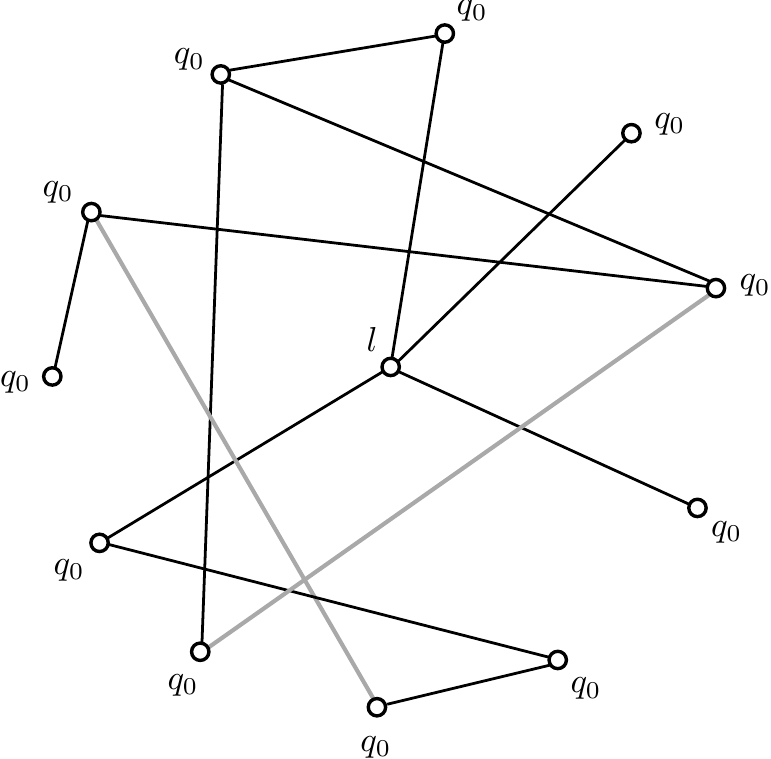}
        \label{fig:around-a-star1}}
	\hspace{1cm}
        \subfigure[]{
        \includegraphics[width=0.4\textwidth]{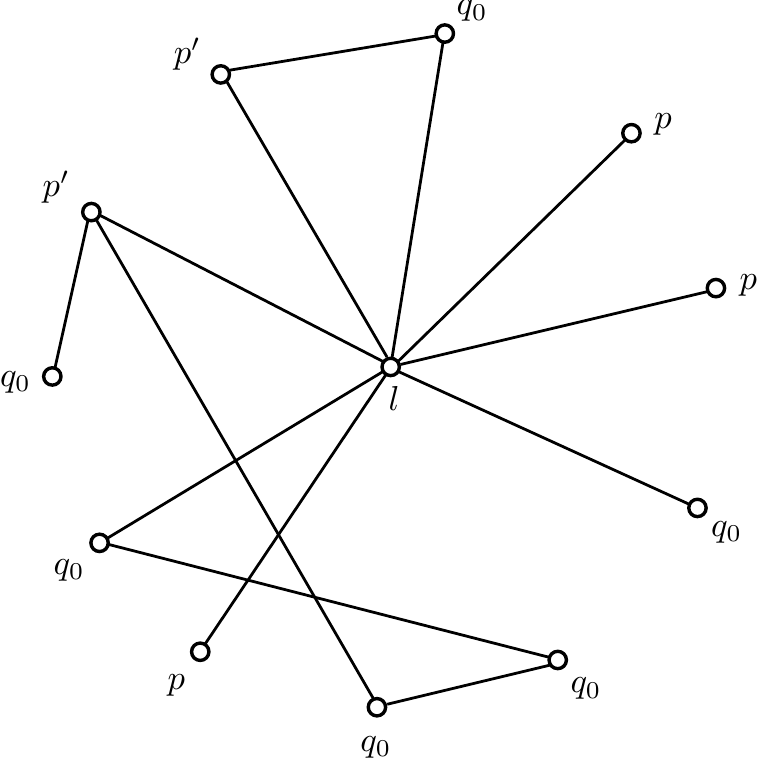}
        \label{fig:around-a-star2}}
	\hspace{1cm}
        \subfigure[]{
        \includegraphics[width=0.4\textwidth]{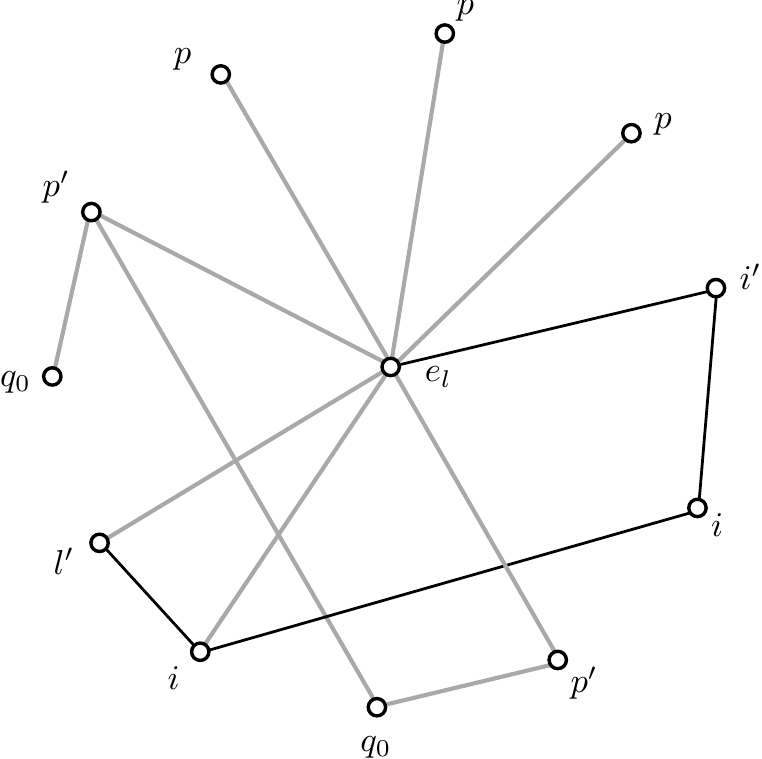}
        \label{fig:around-a-star3}}
	\hspace{1cm}
        \subfigure[]{
        \includegraphics[width=0.4\textwidth]{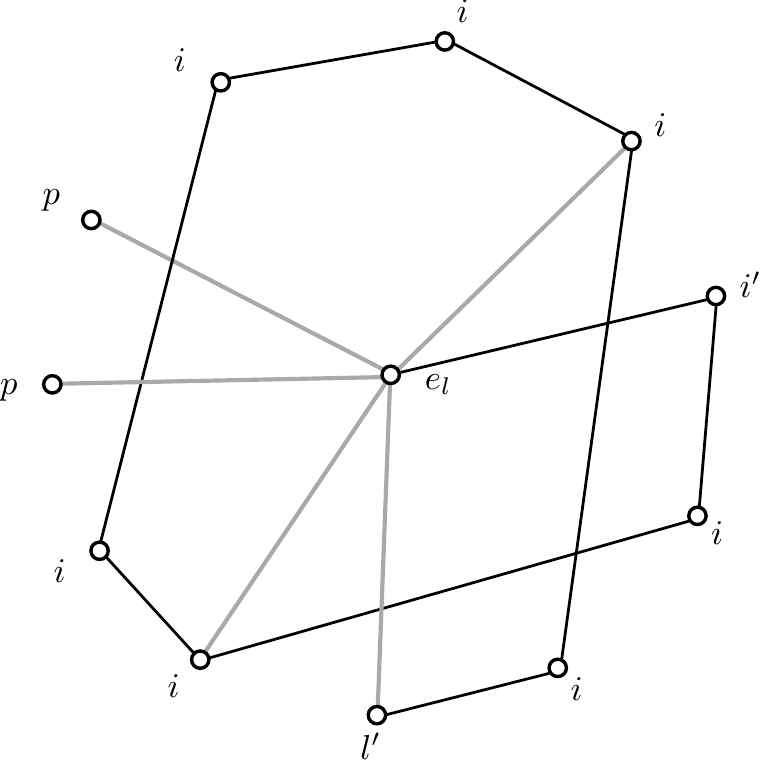}
        \label{fig:around-a-star4}}        
        }
   \caption{An example execution of Protocol Line-Around-a-Star. In all subfigures, black and gray edges are active and missing edges are inactive. Black and gray are used together whenever we want to highlight some subnetwork of the active network. (a) Initially there is a unique leader in state $l$, all other nodes are in state $q_0$, and there is a connected active topology (here, the black edges just highlight a connected spanning subgraph of the initial topology). (b) After a a few steps, some $q_0$s have been converted to $p^\prime$ by $l$ and some of them have already lost some edges to other peripherals. Some other $q_0$s have been converted to normal peripherals in state $p$; these only have a single neighbor, the leader. (c) More peripherals have been created. Additionally, the leader has started to form its line (drawn by black edges) over the peripherals. The center is in a new leader state $e_l$ and the other endpoint of the line is in state $l^\prime$. The second node of the line is in state $i^\prime$ (to avoid its disconnection from $e_l$) while all other internal nodes are in state $i$. The center has already disconnected from some of them. (d) The line is almost spanning and the center has disconnected from most internal nodes of the line.} \label{fig:around-a-star}
\end{figure}

\begin{theorem}
By assuming a pre-elected unique leader and the ability to detect local degree 1, Protocol Line-Around-a-Star solves the Terminating Line Transformation problem. Its running time is $\Theta(n^2 \log n)$, which is optimal.
\end{theorem}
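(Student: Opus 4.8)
The plan is to prove the two halves of the statement in turn: first that Line-Around-a-Star solves Terminating Line Transformation (correctness), and then that its worst-case expected time to convergence is $\Theta(n^2\log n)$ and that this is optimal. Correctness I would establish purely by structural invariants maintained along an arbitrary fair execution; the timing I would obtain by isolating the protocol's three concurrent ``work'' processes and bounding each as a coupon-collector process under the uniform random scheduler; and optimality I would read off Lemma~\ref{lem:line-lower-bound}.

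For correctness I would state and prove jointly, by induction on steps, the following invariants. (I1) Every non-leader node not in state $q_0$ is connected to the leader, either by a direct active edge or along the leader's active line. (I2) A node enters state $p$ only once its only active edge is the one to the leader, so every state-$p$ node --- hence also the right endpoint $l'$, which begins as such a node --- is ``clean''; consequently the leader's active line is always a simple path, and the only active edge with both endpoints on the line besides its path-edges is, possibly, a single chord from the center $e_l$ to the current right endpoint $l'$. (I3) The center $e_l$ keeps its edge to $i'$ (once $i'$ exists) and to every not-yet-absorbed peripheral, deactivates its edge to every other line node --- each internal $i$, and $l'$ once $i'$ exists --- the first time it meets it afterwards, and never performs a deactivation while the line has fewer than two edges. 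From (I1)--(I3), connectivity is preserved exactly as in the connectivity argument for Protocol Online-Cycle-Elimination: the only deactivations are of $p'$--$p'$ edges, whose endpoints are both adjacent to the leader, and of edges from $e_l$ to a line node, which remains joined to $e_l$ along the line. Convergence then follows from fairness: every $q_0$ eventually meets the leader and leaves state $q_0$; every active $p'$--$p'$ edge is eventually picked and removed, so every $p'$ eventually becomes $p$; $l'$ eventually absorbs every $p$; and $e_l$ eventually deactivates its edge to every line node other than $i'$ --- at which point, by (I1)--(I3), the active topology is exactly the spanning path $e_l - i' - i - \cdots - i - l'$. Finally, again by (I3), the degree of $e_l$ equals $1$ plus the number of not-yet-absorbed peripherals plus the number of line nodes still adjacent to $e_l$ besides $i'$, so $\deg(e_l)=1$ holds exactly when the active topology has become the clean spanning line; hence the local degree-$1$ test (the trivial ``degree $1$ at the very start'' case being excluded as noted) is both sound and eventually triggered, whereupon the center emits a halt signal that traverses the now-static line and halts every node, giving the ``Terminating'' part.

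For the time bound I would observe that the protocol makes three kinds of progress, all running concurrently. First, the leader must meet all $n-1$ other nodes to convert the $q_0$'s: coupon collection over $n-1$ coupons, each ``draw'' costing $\Theta(n)$ expected steps, so $\Theta(n^2\log n)$. Second, the ``horizontal'' peripheral edges must all be deactivated; in the worst case (initial clique) there are $\Theta(n^2)$ of them and the scheduler must pick each at least once, which is exactly the edge-cover time $\Theta(n^2\log n)$ recalled within Lemma~\ref{lem:line-lower-bound}. Third, $l'$ must absorb every peripheral and $e_l$ must deactivate its edge to every line node other than $i'$; each is a process in which, with $k$ targets remaining, a success occurs in expected $\Theta(n^2/k)$ steps, so the total is $\sum_{k}\Theta(n^2/k)=\Theta(n^2\log n)$. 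Convergence is reached once all three have finished (the subsequent halt wave changes no active edge, hence does not affect the running time), so the running time is $O(n^2\log n)$ on every initial topology. The matching lower bound is immediate: Terminating Line Transformation strengthens Line Transformation, so Lemma~\ref{lem:line-lower-bound} gives $\Omega(n^2\log n)$, attained already on the initial clique where $\Theta(n^2)$ edges must be deactivated; hence the worst-case expected running time is $\Theta(n^2\log n)$ and is optimal.

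I expect the main obstacle to be the correctness invariants, because star-formation and line-formation genuinely interleave: $l'$ is already wrapping the line around a partially built star while the leader is still absorbing stray $q_0$ and $p'$ nodes, so one must verify that no leftover peripheral edge and no line chord other than the transient $e_l$--$l'$ one can ever persist, and that this is exactly what makes the purely local test $\deg(e_l)=1$ correct (sound: the topology is then a clean spanning line; complete: it is eventually triggered). By contrast, once the shape of a fair execution is pinned down by the invariants, the timing argument is routine coupon-collector and edge-cover bookkeeping.
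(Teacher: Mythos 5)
Your overall route is the paper's own: correctness via the observation that the only deactivations are peripheral--peripheral edges (whose endpoints are both adjacent to the center) and center--line edges (whose endpoints stay joined along the line), so connectivity is never broken; fairness-driven convergence of the star- and line-phases; and a timing decomposition into meet-everybody / edge-cover style processes, each $\Theta(n^2\log n)$, with optimality read off Lemma~\ref{lem:line-lower-bound}. So in structure the proposal matches the paper's proof.

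One step, however, is too loose as written: the soundness of the termination test, which is the heart of the ``Terminating'' claim. Your identity ``$\deg(e_l)$ equals $1$ plus the number of not-yet-absorbed peripherals plus the number of line nodes still adjacent to $e_l$ besides $i'$'' omits the active edges from the center to nodes still in state $q_0$ (on a dense initial topology the center has many such edges long before it interacts with those nodes), and even after correcting the count, $\deg(e_l)=1$ by itself does not exclude leftover $q_0$ nodes that are \emph{not} adjacent to the center: such nodes retain all their initial active edges, so one must argue that they cannot be attached to the line anywhere else. The paper closes exactly this hole with a cut argument: every line node other than the center was a clean $p$ at the moment it was absorbed, so its only possible active edges are line edges or edges to the center; hence if $\deg(e_l)=1$ while the line covers only $S\subsetneq V$, the $(S,V\setminus S)$-cut is empty, contradicting the fact that connectivity is never broken. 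You already have both ingredients (your cleanliness invariant (I2) and connectivity preservation), so the repair is local, but the conclusion must be argued this way rather than read off a degree formula. A smaller remark on the timing part: the absorption process is gated by peripherals having become $p$ (and line formation by star formation), so, as the paper does, bound the phases as if they complete sequentially and sum the bounds, rather than asserting that convergence follows from three concurrent processes finishing; this costs nothing asymptotically.
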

\begin{proof}
We begin with correctness. First of all, observe that every $q_0$ eventually becomes $p$, i.e. a peripheral that at that point is only connected to the center. This follows because the center forever attracts the $q_0$s making them $p^\prime$ and a $p^\prime$ only disconnects from other peripherals until it becomes $p$. This implies that eventually each non-leader node will become available for the line to expand over it and thus the line will eventually become spanning. Next observe that the protocol never disconnects the topology. In particular, the protocol performs only two types of edge eliminations: (i) $(p^\prime,p^\prime)$ which cannot disconnect because the $p^\prime$s are still connected via the center and (ii) (center, node $3\leq i\leq k$ of the line of length $k$) which again cannot disconnect because all nodes of the line are still connected to each other via the line. It remains to show that the protocol terminates iff the active topology has become a spanning line. It suffices to show that after the line formation subroutine has performed at least on step, the degree of the center first becomes 1 when the active topology becomes a spanning line. Clearly, when the active topology becomes a spanning line the degree of the center must be 1 due to the fact that the center is one endpoint of the line. Now assume that the degree of the center becomes 1 while still the active topology is not a spanning line. As the line formation routine has already started, the unique edge of the center is an edge of the line covering nodes in $S\subsetneq V$. Moreover, all other nodes of the line were $p$s just before the line expanded on them, so their only possible edges are either edges of the line or edges to the center. It follows, that there can be no edge in the $(S,V\bs S)$-cut which contradicts the fact that the protocol never disconnects the topology. Hence, the degree of the center cannot have become 1 before the active topology is a spanning line and it follows that the protocol is correct.

We now analyze the running time of the protocol. First consider the time needed for the leader to connect to every $q_0$ (and convert all $q_0$ to $p^\prime$). This is equivalent to the time needed for a particular node to meet every other node, which is the meet everybody fundamental process analyzed in \cite{MS14} to take $\Theta(n^2 \log n)$ expected time. Next consider the time for all peripherals to disconnect from one another and become $p$. If we study this after the time all $q_0$ have become $p^\prime$, it is the time (in the worst case) needed for all edges to be picked by the scheduler. This is an edge cover which is known from \cite{MS14} to take expected time $\Theta(n^2 \log n)$. After the completion of both the above, we have a star with the leader at the center and all peripherals are only connected to the leader (and also possibly the formation of the line has already covered some of the peripherals). Next consider the formation of the line over the peripherals (as already states this may have begun in parallel to the above processes, however we may study the worst case in which it begins only after the above processes are complete). The right endpoint of the line is always ready for expansion towards another available peripheral. The time needed for the line to cover all peripherals is again the time of a meet everybody. It is the time needed for one particular state (which is the right endpoint of the line in this case) to meet every peripheral, every time being able to expand only towards peripherals that
are not yet part of the line. So, this part of the process takes time $O(n^2 \log n)$ to complete. We finally take into account the time needed for the center to disconnect from the peripherals that are part of the line. Again we can study this after the line has become spanning. This is simply a star deformation, i.e. the time needed until the center meets all peripherals in order to disconnect from them. This again is a meet everybody taking time $O(n^2 \log n)$. Taking all this into account, we conclude that the running time of the protocol is $O(n^2 \log n)$, which matches the $\Omega(n^2 \log n)$ lower bound of Lemma \ref{lem:line-lower-bound}, therefore the protocol is time-optimal.
\end{proof}

\section{Transformers with Initially Identical Nodes}
\label{sec:identical-nodes}

An immediate question, given the optimal Line-Around-a-Star protocol, is whether the unique leader assumption can be dropped and still have a correct and possibly also optimal protocol for Terminating Line Transformation. At a first sight it might seem plausible to expect that the problem is solvable. The reason is that the nodes can execute a leader election protocol (e.g. the standard pairwise elimination protocol; see e.g. \cite{AR09}) guaranteeing that eventually a single leader will remain in the system which can from that point on handle the execution of one of the leader-based protocols of the previous section. The only additional guarantee is to ensure that nothing can go wrong as long as there are more than one leaders in the population. Typically, this is achieved in the population protocol literature by the reinitialization technique in which the configuration of the system is reinitialized/restored every time another leader is eliminated so that when the last leader remains a final reinitialization gives a correct system configuration for the leader to work on. In fact, this technique and others have been used in the population protocol literature to show that most population protocol models do not benefit in terms of computational power from the existence of a unique leader (still they are known to benefit in terms of efficiency). 

In contrast to this intuition, we shall prove in this section (see Corollary \ref{cor:line-impossibility}) that if all nodes are initially identical, Terminating Line Transformation becomes impossible to solve (with the modeling assumptions we have made so far). In particular, we will show that any protocol that makes the active topology acyclic, may disconnect it in some executions in $\Theta(n)$ components (see Corollary \ref{cor:acyclicity-impossibility}). As already discussed in Section \ref{subsec:fi}, such a worst-case disconnection is severe for any terminating protocol, because, in this case, a component has no means of determining when it has interacted with (or heard from) all other components in the network.  

\begin{observation}
For a protocol to transform any topology to a line (or in general to an acyclic graph) without breaking connectivity, it must hold that the protocol deactivates an edge only if the edge is part of a cycle. Because deleting an edge $e$ of an undirected graph does
not disconnect the graph iff $e$ is part of a cycle.
\end{observation}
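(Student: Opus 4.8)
The plan is to reduce the Observation to the classical characterization of bridges (cut edges) in an undirected graph and then apply it to a single deactivation step of the protocol, using the connectivity-preservation invariant.

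First I would recall and prove the underlying graph-theoretic fact: for a connected undirected graph $G$ and an active edge $e = uv$, the graph $G - e$ is connected if and only if $e$ lies on some cycle of $G$. For the ``if'' direction, suppose $e$ lies on a cycle $C$; then $C - e$ is a $u$--$v$ path in $G - e$, so any walk between two vertices $x$ and $y$ in $G$ that traverses $e$ can be rerouted along $C - e$, whence $x$ and $y$ remain connected in $G - e$ and $G - e$ is connected. For the ``only if'' direction (contrapositive), suppose $e$ lies on no cycle of $G$; then there is no $u$--$v$ path in $G - e$, since such a path together with $e$ would close a cycle through $e$, so $u$ and $v$ fall into distinct components of $G - e$ and $G - e$ is disconnected.

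Next I would apply this step by step. By hypothesis the protocol preserves connectivity, so every intermediate active topology occurring in every execution is connected. Consider any step at which the protocol deactivates an active edge $e$, and let $G$ denote the active topology immediately before that step; then $G$ is connected and deactivating $e$ yields the active topology $G - e$. Since connectivity must be maintained, $G - e$ must be connected, and by the fact above this forces $e$ to lie on a cycle of $G$. Equivalently, if $e$ were not part of a cycle of the current active topology, then deactivating it would disconnect the topology, contradicting the requirement. Hence the protocol deactivates an edge only when that edge is part of a cycle of the current active topology, which is exactly the Observation. (Conversely, deactivating a cycle edge is always connectivity-safe, so this condition is not merely necessary but the precise characterization, justifying the ``iff'' in the statement.)

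There is essentially no obstacle here: the Observation is a direct restatement of the bridge characterization, and the only points requiring care are that the cycle condition must be read with respect to the \emph{current} active topology at the moment of deactivation, and that the argument is invoked at each individual deactivation step via the connectivity invariant rather than globally.
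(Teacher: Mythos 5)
Your proposal is correct and follows essentially the same route as the paper, which justifies the Observation by directly invoking the standard bridge characterization (an edge's removal preserves connectivity iff the edge lies on a cycle) applied to the current active topology at each deactivation. You merely spell out the proof of that classical fact, which the paper leaves implicit.
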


There are several ways to achieve this when there is a unique leader.
However, it will turn out that this is not the case when all nodes are initially
identical.

\subsection{Detecting Small Cycles Stably}

We begin by showing that if we just want to detect the existence of
small cycles stably (i.e. eventually all nodes say 1 if there is a cycle of a particular small length and all say 0
if there is no such cycle, without the need of termination) then there is a protocol that achieves this.

For simplicity, assume that we have a static directed connected active topology,
communication is clique (i.e. nodes also communicate via inactive edges),
all nodes are initially $(l,0)$ (i.e. leaders with output 0) and we want to
stably decide whether there is a directed 2-cycle. The code of the protocol is presented in Protocol \ref{prot:2cycle}. Note that in the transitions we do not specify a right-hand-side value for the edges, because the protocol never modifies their state.

\floatname{algorithm}{Protocol}
\renewcommand{\algorithmiccomment}[1]{// #1}
\begin{algorithm}[!h]
  \caption{\emph{Stable-2-Cycle-Detection}}\label{prot:2cycle}
  \begin{algorithmic}
    \medskip
    \State $Q=\{l,l^\prime,f,f^\prime\}\times\{0,1\}$, initially all nodes are in state $(l,0)$
    \State $\delta$: 
    \begin{align*}
(l,\cdot), (l,\cdot), \cdot &\ra (l,0), (f,0)   \mbox{ // leaders are pairwise eliminated and
restore their output to 0}\\
(l,0), (f,\cdot), 1 &\ra (f^\prime,0), (l^\prime,0) \mbox{ // marking in order to detect a 2-cycle}\\
(f,\cdot), (l,0), 1 &\ra (l,0), (f,0) \mbox{ // the leader moves nondeterministically to avoid being trapped}\\
(l^\prime,0), (f^\prime,0), 1 &\ra (l,1), (f,1) \mbox{ // a 2-cycle detected (possibly wrong
decision). The output becomes 1.}\\
(l,x), (f,\cdot), 0 &\ra (l,x), (f,x) \mbox{ // the follower copies the leader's
output}\\
(f,\cdot), (l,x), 0 &\ra (f,x), (l,x)\\
(l,1), (f,0), 1 &\ra (l,1), (f,1) \mbox{ // the follower copies the leader's
output}\\
(f,0), (l,1), 1 &\ra (f,1), (l,1)\\
(f^\prime,0), (l^\prime,0), 1 &\ra (f,0), (l,0) \mbox{ // unsuccessful attempt of detecting
2-cycle. Restoring.}\\
(l^\prime,0), (f^\prime,0), 0 &\ra (l,0), (f,0) \mbox{ // unsuccessful attempt of detecting
2-cycle. Restoring.}\\
(f^\prime,0), (l^\prime,0), 0 &\ra (f,0), (l,0)
    \phantom{\hspace{10cm}}
    \end{align*}
  \end{algorithmic}
\end{algorithm}

The protocol works as follows. If there is a unique leader $(l,0)$ and all other nodes are $f$, then the leader performs a nondeterministic search while trying to detect a 2-cycle. As there are no
conflicts with other leaders and wrong $f^\prime$, if there is a 2-cycle
eventually it will be detected, the leader will output 1 and all followers
will copy this output thus all eventually will output 1. If there is no
2-cycle, then the leader's output will forever remain 0 and all followers
will eventually copy this.

The only problem is while there are still more than one leaders. Pairwise
eliminations guarantee that eventually a single leader will remain.
However, when this occurs we want to be sure that there is no $f^\prime$ left by
previous rounds because this could possibly confuse the final leader and make it
detect a 2-cycle that does not exist. This is ensured by allowing
elimination only between two $l$s which implies that even if the eliminated
leader had introduced some $f^\prime$ in the past, before being eliminated it has
restored an $f^\prime$ (though not necessarily its own). 

We now turn this into a formal proof.

\begin{proposition}
Protocol Stable-2-Cycle-Detection stably decides whether the active topology contains a directed 2-cycle. In particular, when there is a 2-cycle all nodes stabilize to output 1 and when there is no 2-cycle all nodes stabilize to output 0.
\end{proposition}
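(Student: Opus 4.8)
The plan is to reduce everything to the behaviour of a single non‑negative integer potential, the number of \emph{leader tokens}
$L(C) := |\{v : C(v)=(l,\cdot)\}| + |\{v : C(v)=(l',\cdot)\}|$. First I would establish three bookkeeping invariants by inspecting the eleven rules one by one: (i) $L$ is non‑increasing along any execution, and it strictly decreases only when the pairwise‑elimination rule $(l,\cdot),(l,\cdot),\cdot\ra(l,0),(f,0)$ fires; (ii) $|\{v:C(v)=(l',0)\}|=|\{v:C(v)=(f',0)\}|$ at all times, since the marking rule creates one $l'$ and one $f'$ together and each of the four resolution rules on $(l',0)/(f',0)$ consumes one of each; and (iii) a node in state $l'$ (resp.\ $f'$) is inert in every interaction whose partner is not in state $f'$ (resp.\ $l'$), and $l',f'$ always carry output $0$. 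Since every node starts in $(l,0)$ we have $L(C_0)=n$, so $L$ stabilizes to some value $L^{\ast}\ge 1$.

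Next I would show $L^{\ast}=1$; this is the step where fairness does the work, via the standard fact that the set $R$ of configurations occurring infinitely often in a fair execution is closed under reachability, so that all configurations in $R$ share the same value $L=L^{\ast}$. Assume $L^{\ast}\ge 2$ and take any $C\in R$. Using (ii) and (iii), from $C$ one can fire resolution rules one at a time --- each turns an $l'$ into an $l$ without touching any other $l$ --- until at least two nodes are in state $l$, and then the elimination rule reaches a configuration of potential $L^{\ast}-1$; by closure that configuration lies in $R$, contradicting $L=L^{\ast}$ on $R$. Hence eventually there is exactly one leader token, and at the very step where $L$ drops to $1$ that token is in state $(l,0)$, being the ``winner'' of the last elimination.

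I then isolate the \emph{soundness} property of the detection rule in the regime $L=1$: once there is a single leader token, at most one $(l',0)/(f',0)$ pair can be present (a marking needs a token in state $(l,0)$, but marking puts the token into $l'$, so no further marking is possible until the pair is resolved), and when the pair is present the $l'$ sits on the node reached by the most recent marking and the $f'$ on the node it vacated, joined by the (static, active) marking edge. Consequently, in this regime the rule $(l',0),(f',0),1\ra(l,1),(f,1)$ can fire only when \emph{both} directed edges between those two nodes are active, i.e.\ only when the active topology contains a directed $2$‑cycle. With soundness the two cases split cleanly. If a $2$‑cycle exists, then from any $C\in R$ the lone token (already in $(l,0)$ after $L=1$) can be walked one active edge at a time --- forward via a marking immediately resolved, backward via $(f,\cdot),(l,0),1\ra(l,0),(f,0)$; possible since the active topology is weakly connected --- parked on a node of a $2$‑cycle, and made to mark and then detect, producing an $(l,1)$ leader; thereafter the leader is frozen in state $(l,1)$, no new $l'/f'$ is ever created, and the copying rules spread output $1$ to every follower with nothing able to reset it, so the execution reaches the absorbing all‑$1$ configuration, which lies in $R$; hence all nodes stabilize to $1$. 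If no $2$‑cycle exists, then after $L=1$ the detection rule never fires, so no node's output ever becomes $1$ again, the surviving leader keeps output $0$ forever, and --- meeting every follower infinitely often, communication being a clique --- it overwrites every stale $1$ by $0$ via the copying and marking rules; monotonicity of the set of output‑$1$ nodes after $L=1$ then forces every node to output $0$ from some point on.

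The main obstacle is precisely the transient regime $L>1$: since communication is a clique, an $l'$ produced by one marking can meet an $f'$ produced by a different marking over an active edge and trigger a \emph{spurious} detection, creating output‑$1$ nodes even when there is no $2$‑cycle. The crux of the proof is therefore the clean‑up argument for the no‑$2$‑cycle case --- that every such spurious $1$ is necessarily erased once $L$ reaches $1$ (the last elimination resets the survivor to output $0$, soundness forbids it from ever regaining $1$, and it then propagates $0$ to the whole population). A secondary, purely clerical point is fixing the convention for which of the two possible directed edges' state is presented to a rule when two nodes interact; the rule comments make the intended convention clear, and invariants (i)--(iii) keep the otherwise large case analysis over $\{l,l',f,f'\}\times\{0,1\}$ and edge states $\{0,1\}$ short.
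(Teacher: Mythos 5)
Your proposal is correct and follows essentially the same route as the paper's proof: the invariant that the numbers of $l^\prime$ and $f^\prime$ nodes always coincide, the fairness argument showing pairwise eliminations cannot be blocked so a unique $(l,0)$ leader with only plain followers eventually remains, and then the split into the 2-cycle and no-2-cycle cases with output spreading/clean-up by the surviving leader. Your version merely packages the leader-count argument as a potential function with the recurrent-set formulation of fairness and spells out explicitly the soundness step (a lone leader's $l^\prime/f^\prime$ pair is adjacent via the marking edge, so detection requires a genuine 2-cycle) that the paper leaves implicit, which is a welcome but not essentially different elaboration.
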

\begin{proof}
We have to prove that eventually a unique $(l,0)$ leader remains and all other nodes are $(f,\cdot)$ followers. Because if this holds, then the protocol is correct as follows. If there is no 2-cycle, then the rule $(l^\prime,0), (f^\prime,0), 1 \ra (l,1), (f,1)$ is never triggered and as this is the only rule that can change the leader's output from 0 to 1, it follows that the leader's output remains forever 0.  Consider now the case where there is a 2-cycle between $u$ and $v$. Observe that as long as the leader has not detected a 2-cycle, it moves nondeterministically over the active edges. In particular, an outgoing edge is traversed by the rule $(l,0), (f,\cdot), 1 \ra (f^\prime,0), (l^\prime,0)$ followed by a restoring rule $(f^\prime,0), (l^\prime,0), 1 \ra (f,0), (l,0)$ and an incoming edge is followed by the rule $(f,\cdot), (l,0), 1 \ra (l,0), (f,0)$. Fairness implies that $u$ is eventually reached and the consecutive application of rules $(l,0), (f,\cdot), 1 \ra (f^\prime,0), (l^\prime,0)$ and $(l^\prime,0), (f^\prime,0), 1 \ra (l,1), (f,1)$ detects the 2-cycle and makes the leader's output 1. Again in this case, the leader's output forever remains 1 because only an interaction with another leader could change the leader's output from 1 to 0, but there is no other leader. In both cases the leader's output becomes correct and remains correct forever and eventually all followers will interact with the leader and copy the leader's output.

We now prove our initial claim. We first prove that an $l^\prime$ can always be restored to $l$. To this end observe that the number of $l^\prime$s is always equal to the number of $f^\prime$s, because $l^\prime$s and $f^\prime$s are always introduced and eliminated concurrently, as e.g. in the rules $(l,0), (f,\cdot), 1 \ra (f^\prime,0), (l^\prime,0)$ and $(f^\prime,0), (l^\prime,0), 1 \ra (f,0), (l,0)$. So, as long as there is an $l^\prime$, there is also an $f^\prime$ somewhere in the system, and any interaction between them will restore $l^\prime$ to $l$ (and also $f^\prime$ to $f$). This implies that as long as there are more than one leaders, a configuration in which at least two of them are in state $l$ is always reachable and due to fairness it is eventually reached. Then the two $l$ leaders may interact and one of them will become eliminated. Thus, pairwise elimination of leaders is never blocked and the elimination between the last two leaders will leave the system with a unique $(l,0)$ leader. At that point, all other nodes will be $f$ followers, because there are no $l^\prime$ leaders and, as proved above, the number of $f^\prime$ followers is always equal to the number of $l^\prime$ leaders, so there can be no $f^\prime$ followers. This completes the proof of our initial claim and correctness of the protocol follows.
\end{proof}

\begin{remark}
The above protocol can be easily extended, by using more marks, to decide the existence of undirected $k$-cycles, for any constant $k$.
\end{remark}

\subsection{Impossibility Results}
\label{subsec:impossibilities}

An immediate question is whether there is also a protocol with initially identical nodes that decides the existence of small cycles and additionally always terminates. We shall now show that this is not the case. In fact, this is already true for the Stable-2-Cycle-Detection protocol, because as long as there are more than one leaders, cycle detection is never certain; a detection decision may always be wrong and this is why the above protocol is always ready to backtrack it. The following first impossibility states that if all nodes are initially identical, any terminating decision concerning the detection of a small cycle may lead to a wrong decision in some other execution. We give the lemma in terms of decisions concerning the deactivation of an edge of a 3-cycle, however the same proof technique can be extended both to $k$-cycles, for any constant $k$, and to terminating decisions of existence (instead of deactivations). A variation of Lemma \ref{lem:first-impossibility} concerning the terminating decision of existence of directed 2-cycles was proved in \cite{CMNS13}. We here give a version concerning edge-deactivations that better fits our perspective and then generalize this in Theorem \ref{the:strong-impossibility} to obtain a much stronger impossibility result.

\begin{lemma} [A First Impossibility] \label{lem:first-impossibility}
If all nodes are initially identical, then there is no protocol that never disconnects the active topology and on every $G$ that contains a 3-cycle the protocol deactivates an edge of a 3-cycle of $G$.
\end{lemma}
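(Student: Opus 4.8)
The plan is to derive a contradiction via an indistinguishability (graph-covering) argument that exploits anonymity together with the finiteness of the state set. Suppose such a protocol $A$ exists. First I would apply the hypothesis to the smallest graph containing a $3$-cycle, namely the triangle $K_3$ on three nodes, whose only $3$-cycle is $K_3$ itself and all of whose three edges are initially active. By the second property, $A$ must at some point deactivate one of these three edges; fix a finite interaction sequence $\sigma = e_1,\dots,e_L$ (each $e_i$ an edge of $K_3$, possibly currently inactive, since interactions may also occur over inactive edges) leading from the all-$q_0$/all-active initial configuration to a configuration $C_L$ in which some triangle edge, say $\{1,2\}$, is inactive. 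Also fix, once and for all, the nondeterministic/random tie-breaking choices made along $\sigma$ whenever two equal-state nodes interact.

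Next I would transfer $\sigma$ to the $6$-cycle $C_6$ with vertices $v_0,\dots,v_5$, using the standard double cover $\pi\colon C_6\to K_3$, $\pi(v_i)=(i\bmod 3)+1$: this maps each edge $\{v_i,v_{i+1}\}$ onto an edge of $K_3$, is a local isomorphism on neighbourhoods, and — crucially — the preimage of every edge of $K_3$ is a pair of \emph{vertex-disjoint} edges of $C_6$ (for $\{1,2\}$ these are $\{v_0,v_1\}$ and $\{v_3,v_4\}$). Running $A$ on the input graph $C_6$ (a legitimate connected spanning initial topology), I would simulate $\sigma$ step by step: to lift one interaction of $\sigma$ over an edge $e$ of $K_3$, perform the two interactions over its two preimage edges consecutively, reusing the fixed tie-breaking choices. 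A straightforward induction on the length of the lifted prefix establishes the invariant ``the current $C_6$-configuration projects through $\pi$ to the current $K_3$-configuration'': the induction step uses vertex-disjointness so the two lifted interactions do not interfere, and uses that the four nodes (two edges) touched are exactly the $\pi$-preimages of the two nodes (one edge) updated in $K_3$, so both copies land consistently in the updated $K_3$-state.

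Finally, since $\{1,2\}$ is inactive in $C_L$, both $\{v_0,v_1\}$ and $\{v_3,v_4\}$ are inactive in the projected $C_6$-configuration; removing this pair of edges splits $C_6$ into the two paths $v_1 v_2 v_3$ and $v_4 v_5 v_0$, so the active topology is disconnected (any further inactive edges only keep it disconnected). As the lifted run is a finite prefix of a fair execution of $A$ on $C_6$, $A$ disconnects the active topology on $C_6$, contradicting the first property. The delicate point, which I would write out carefully, is precisely this lifting invariant: I need to argue that the tie-breaking between equal-state nodes may be chosen identically for the two copies (legitimate since ``never disconnects'' is a safety property required of \emph{every} execution, and the scheduler/coin can realize any such choice), and that the two preimage interactions being over disjoint edges is what lets the projection be restored after each \emph{pair} of lifted steps rather than only at the end. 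The same argument generalizes verbatim to $k$-cycles via the cover $C_{mk}\to C_k$, which I expect is the seed of the stronger Theorem~\ref{the:strong-impossibility}.
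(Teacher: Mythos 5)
Your proposal is correct and follows essentially the same route as the paper: the paper also runs the protocol on the triangle, fixes a finite prefix that deactivates a triangle edge, and mimics it on the hexagon (your double cover $C_6\to K_3$) by selecting the two vertex-disjoint preimage edges consecutively, so that the two opposite copies of the deactivated edge disconnect the $6$-cycle. Your explicit treatment of the lifting invariant and of the tie-breaking between equal-state nodes only makes precise what the paper's indistinguishability argument states more informally, and your closing remark on $C_{mk}\to C_k$ matches the paper's comment that the construction extends by repeating the pattern.
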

\begin{proof}
Let $A$ be a protocol that from any initial connected active topology in which all nodes are in state $q_0$, stabilizes to
a topology without 3-cycles still without ever disconnecting the network.

Take a triangle $T$: $u_1,u_2,u_3$. For every fair execution $s$, of $A$ on $T$, we have that in a finite number of steps $A$ deactivates some
edge of the triangle, because it has to break the 3-cycle in a finite number of steps. 

Now take a hexagon $H$ (6-cycle): $v_1,v_2,v_3,v_1^\prime,v_2^\prime,v_3^\prime$. Consider now any fair execution $s^{\dprime}$ of $A$ on $H$ which has as a prefix the following unfair execution $s^\prime$, that we construct by mimicking the behavior of the scheduler in execution $s$: Every time $(u_i,u_{i+1})$, for $i\in\{1,2\}$, is selected in $s$ (i.e. on the triangle $T$), both $(v_i,v_{i+1})$ and $(v_i^\prime,v_{i+1}^\prime)$ are selected in $s^\prime$ (i.e. on the hexagon $H$) and every time $(u_3,u_1)$ is selected in $T$, both $(v_3,v_1^\prime)$ and $(v_3^\prime,v_1$ are selected in $s^\prime$ (the way the interactions are matched between the two executions, is depicted in Figure \ref{fig:3-cycle}). Observe that these are the only possible types of interactions between nodes of $H$ that we allow during the unfair execution $s^\prime$ and we are allowed to add any finite unfair prefix to a fair execution. It is not hard to see that during the unfair schedule, $v_i$ and $v_i^\prime$ always have the same state as $u_i$ (and with each other). So, when an interaction $(u_i,u_{i+1})$ or $(u_3,u_1)$ first deactivates an edge in $s$, both the interactions $(v_i,v_{i+1})$ and $(v_i^\prime,v_{i+1}^\prime)$ or $(v_3,v_1^\prime)$ and $(v_3^\prime,v_1$ are selected in $s^\prime$ and both deactivate the corresponding edges of $H$. As two opposite edges of the hexagon are deactivated, the hexagon becomes disconnected. In fact, the symmetry can have any desired length by $n/3$ repetitions of the pattern. Therefore, we conclude that there can be no protocol $A$ as assumed above.

In summary, we have shown that if a protocol eventually deactivates an edge of a 3-cycle then there must be some execution of the protocol on some other topology that leads to disconnection.
\end{proof}

\begin{figure}[!hbtp]
\centering{
\includegraphics[width=0.75\textwidth]{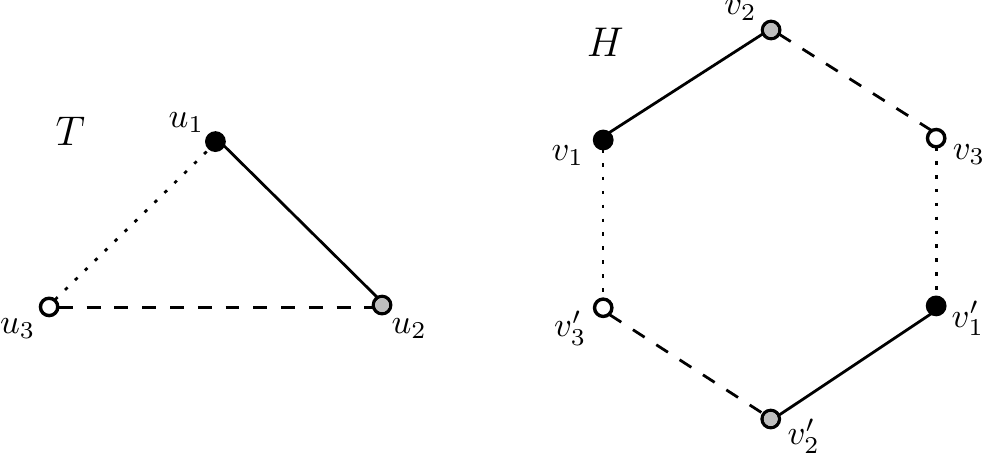}
}
\caption{The triangle $T$ and the hexagon $H$ used in the proof of Lemma \ref{lem:first-impossibility}. The nodes and edges of the two graphs have been marked according to the equivalencies between executions $s$ and $s^\prime$ mentioned in the proof. For example, the state of the gray node $u_2$ just after step $i$ of $s$ will be the same as the states of the gray nodes $v_2$ and $v_2^\prime$ just after step $2i$ of $s^\prime$.} \label{fig:3-cycle}
\end{figure}

Lemma \ref{lem:first-impossibility} implies that, if all nodes are initially identical, there is no protocol for the Acyclicity problem (requiring the processes to stably make the input graph acyclic without ever disconnecting it). In fact, the above is a general impossibility in case the class of input graphs under consideration contains also the small triangle (of $n=3$). On this triangle, it happens that the protocol cannot activate any edges before deactivating at least one which leads in disconnecting the large cycle (e.g. the hexagon). The above, though, does not imply an impossibility for the subclass of graphs that have at least one inactive edge and on which the protocol may first activate some edges before deactivating.

We now obtain a much stronger impossibility that resolves this.

\begin{theorem} [Strong Impossibility] \label{the:strong-impossibility}
For every connected graph $G$ with at least one cycle, there is an infinite family of graphs $\cal{G}$ such that for every $G^\prime\in \cal{G}$ every protocol (beginning from identical states on all nodes) that makes $G$ acyclic may disconnect $G^\prime$ in some executions.
\end{theorem}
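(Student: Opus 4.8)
The plan is to generalize the ``parallel copies'' idea behind the triangle/hexagon argument of Lemma~\ref{lem:first-impossibility}: instead of two parallel copies of a single triangle, we take a highly symmetric \emph{lift} of $G$ that ``unwinds'' \emph{every} independent cycle of $G$ simultaneously. Fix a spanning tree $T$ of $G$, let $f_1,\dots,f_\beta$ be the non-tree edges (so $\beta=|E(G)|-|V(G)|+1\ge 1$ is the cyclomatic number of $G$), and let $\mathbb{Z}_m=\mathbb{Z}/m\mathbb{Z}$. For every $m\ge 2$ define a graph $G'_m$ with vertex set $V(G)\times\mathbb{Z}_m^\beta$ (hence $|V(G'_m)|=m^\beta|V(G)|$) by: for each tree edge $uv$ put the $m^\beta$ edges $(u,\vec a)(v,\vec a)$, $\vec a\in\mathbb{Z}_m^\beta$; for each non-tree edge $f_i=uv$ put the $m^\beta$ edges $(u,\vec a)(v,\vec a+e_i)$, where $e_i$ is the $i$-th unit vector. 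Set $\mathcal{G}=\{G'_m : m\ge 2\}$, which is infinite since the $G'_m$ have distinct orders. The projection $p:G'_m\to G$, $p(v,\vec a)=v$, is a graph covering: at every vertex $(u,\vec a)$ the incident edges are in bijection, via $p$, with the edges of $G$ at $u$. Because the ``voltages'' $e_1,\dots,e_\beta$ generate $\mathbb{Z}_m^\beta$, the lift $G'_m$ is connected.

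First I would establish a \emph{lockstep lemma}: for any protocol $A$ starting from the all-$q_0$ configuration and any execution $s=C_0,C_1,\dots$ of $A$ on the initial active topology $G$, there is an execution $s'$ of $A$ on the initial active topology $G'_m$ such that, after $s'$ has mimicked the first $t$ steps of $s$, every node $(v,\vec a)$ is in the state of $v$ in $C_t$ and every edge $e'$ of $G'_m$ is in the state of $p(e')$ in $C_t$. The mimicking rule: each time $s$ performs an interaction over an edge $uv$ (firing some rule), $s'$ performs, in arbitrary order, the interactions over all $m^\beta$ edges of $p^{-1}(uv)$; these preimage edges are pairwise vertex-disjoint, so the order is irrelevant and each of the $m^\beta$ copies of $u$, each copy of $v$, and each edge of $p^{-1}(uv)$ is updated exactly once by the same rule. (Whenever a rule breaks a tie between two equal interacting states, we pick the random outcome that keeps $s'$ synchronized with $s$; exhibiting one such execution is all the theorem requires.) The invariant follows by a straightforward induction, the base case holding because initially all nodes are $q_0$ on both graphs and an edge of $G'_m$ is active iff its $p$-image in $G$ is active.

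Now take $s$ fair. Since $A$ makes $G$ acyclic and $G$ has a cycle, some edge of $G$ is eventually deactivated; let $t$ be the first step after which some edge of $G$ is inactive and let $e^\ast$ be the edge deactivated at step $t$, so that the active topology of $G$ after step $t$ is exactly $G\setminus e^\ast$. By the lockstep lemma, after $s'$ mimics step $t$ the active topology of $G'_m$ is exactly $p^{-1}(G\setminus e^\ast)=G'_m\setminus p^{-1}(e^\ast)$; it remains to argue this graph is disconnected. If $e^\ast$ is a bridge of $G$, then $G\setminus e^\ast$ has at least two components, each met by $p^{-1}$ (all vertices survive), so $p^{-1}(G\setminus e^\ast)$ is disconnected. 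If $e^\ast$ is a cycle edge of $G$, then $G\setminus e^\ast$ is connected with cyclomatic number $\beta-1$, so $\pi_1(G\setminus e^\ast)$ is generated by $\beta-1$ elements; hence the monodromy of the restricted covering $p^{-1}(G\setminus e^\ast)\to G\setminus e^\ast$ acts on the $m^\beta$ sheets by translations from a subgroup of $\mathbb{Z}_m^\beta$ generated by at most $\beta-1$ elements --- a \emph{proper} subgroup, because reducing modulo a prime dividing $m$ shows $\mathbb{Z}_m^\beta$ cannot be generated by fewer than $\beta$ elements. A covering of a connected graph whose monodromy image is a proper subgroup is disconnected, so again $p^{-1}(G\setminus e^\ast)$ is disconnected. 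Either way $A$ disconnects $G'_m$ in the execution $s'$ (extend the finite prefix built so far to any fair execution), and since $m\ge 2$ was arbitrary this holds for every $G'_m\in\mathcal{G}$.

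The crux --- and the step I expect to be the real obstacle --- is the choice of $\mathcal{G}$. The naive generalization, gluing $m$ copies of $G$ cyclically along a single chosen cycle $C$, breaks down as soon as $\beta\ge 2$: a protocol may elect to destroy $C$ at an edge $e$ across which $G\setminus e$ is still $2$-edge-connected, and then the lifted deactivation leaves $G'_m$ connected. Lifting \emph{all} $\beta$ independent cycles at once (the $\mathbb{Z}_m^\beta$-cover) is exactly what repairs this: deleting the preimages of \emph{any} single cycle edge drops the rank of the monodromy image below $\beta$ and therefore must disconnect the lift, regardless of which cycle edge the protocol destroys first and regardless of which non-cycle edges it deactivated earlier (those only remove further edges from an already disconnected graph). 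A minor point still worth checking is that $A$ is permitted to break $G$'s connectivity; as the bridge case shows, this possibility only makes the conclusion easier.
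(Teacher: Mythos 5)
Your covering-space construction and the lockstep simulation are sound in spirit, but the step where you conclude disconnection fails, and the failure is precisely the difficulty this theorem exists to handle. In this model the protocol can also \emph{activate} edges: interactions occur over arbitrary pairs of the complete interaction graph, and $\delta$ may turn an inactive pair active. Hence at your step $t$ (the first deactivation of an edge of $G$) the active topology of $G$ is not $G\setminus e^\ast$ but $(G\cup F)\setminus e^\ast$ for some set $F$ of pairs activated earlier, and the active topology of $G'_m$ is the lift of \emph{that} graph (note also that your lockstep lemma must first be extended to say what $s'$ does when $s$ selects a pair outside $E(G)$ --- the natural choice is to lift such pairs with voltage $0$, i.e. within sheets). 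With $F\neq\emptyset$ the monodromy argument collapses: take $G$ a $4$-cycle $u_1u_2u_3u_4$ (so $\beta=1$, voltage $1$ on $u_4u_1$), and let the protocol first activate the chord $u_1u_3$ (voltage $0$) and then deactivate the cycle edge $u_2u_3$; the active topology now contains the cycle $u_1u_3u_4u_1$ with net voltage $1$, so its lift is connected and no disconnection has occurred at step $t$. So ``first deactivation of an edge of $G$'' is the wrong moment to argue at; nothing forces disconnection there. This is exactly the issue the paper flags after Lemma \ref{lem:first-impossibility}: on graphs with inactive pairs a protocol ``may first activate some edges before deactivating,'' which is why that lemma does not already give the theorem.

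The repair is to use the full hypothesis that $A$ eventually makes $G$ acyclic, not merely that it deactivates one edge: run the lockstep simulation until the active topology $H$ of $G$ stabilizes to an acyclic graph, so $|E(H)|\leq n-1$; its lift then has at most $m^\beta(n-1)<m^\beta n-1$ active edges on $m^\beta n$ nodes (equivalently, the lift of a forest is a forest), hence $G'_m$ is disconnected at that point, regardless of which edges were activated or deactivated along the way. This counting-at-stabilization step is exactly how the paper closes its proof. Once you argue there, your $\mathbb{Z}_m^\beta$-cover is more than is needed: the paper's simpler family ($k$ copies of $G$ minus one chosen cycle edge, glued into a ring along the copies of that edge, i.e. a cyclic cover with a single nonzero voltage) already suffices, because the edge count at stabilization does not care which cycle edge the protocol destroys. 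So the ``crux'' you identify (that single-cycle gluing fails for $\beta\geq 2$) is an artifact of arguing at the first deactivation rather than a genuine obstacle; the genuine obstacle is the possibility of intermediate activations, which your argument does not address.
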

\begin{proof}
Let $G$ consist of $n$ nodes $u_1,u_2,\ldots,u_n$. Let $u_iu_j$ be an edge
belonging to a cycle of $G$ (at least one such edge exists). Create $k\geq 2$
copies of the nodes of $G$: $\{v_{11},v_{12},\ldots,v_{1n}\}$, $\{v_{21},v_{22},\ldots,v_{2n}\},\ldots,$
$\{v_{k1},v_{k2},\ldots,v_{kn}\}$. The parameter $k$ can be as large as we want, this is why
the family $\cal{G}$ is infinite. Make every copy almost identical to $G$ by
introducing all edges that $G$ has, apart from the edge $u_iu_j$, i.e. in
every copy $\{v_{h1},v_{h2},\ldots,v_{hn}\}$ we introduce $\{v_{ht},v_{hz}\}$ iff
$\{u_t,u_z\}\in E(G)$ and  $\{u_t,u_z\}\neq \{u_i,u_j\}$. In words, we have $k$ copies
of $G$ all missing the edge corresponding to $u_iu_j$. Observe that each of
these copies is a connected component because G was connected after the
removal of $u_iu_j$. Finally we introduce a large ``cycle'' connecting all
these components. The cycle uses the nodes corresponding to $u_i$ and $u_j$
interchangeably. In particular, the cycle is formed by adding the edges
$\{v_{1i},v_{2j}\},\{v_{2i},v_{3j}\},\ldots,\{v_{(k-1)i},v_{kj}\},\{v_{ki},v_{1j}\}$. This completes the
construction of $G^\prime$ (see also Figure \ref{fig:impossibility} for a simple example).

\begin{figure}[!hbtp]
   \centering{
        \subfigure[]{
        \includegraphics[width=0.18\textwidth]{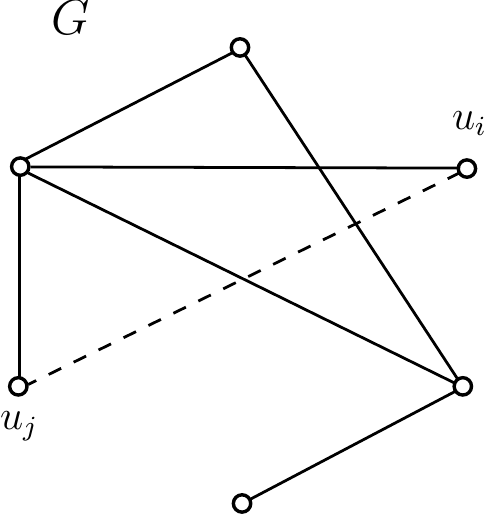}
        \label{fig:impossibility1}}
	\hspace{1cm}
        \subfigure[]{
        \includegraphics[width=0.58\textwidth]{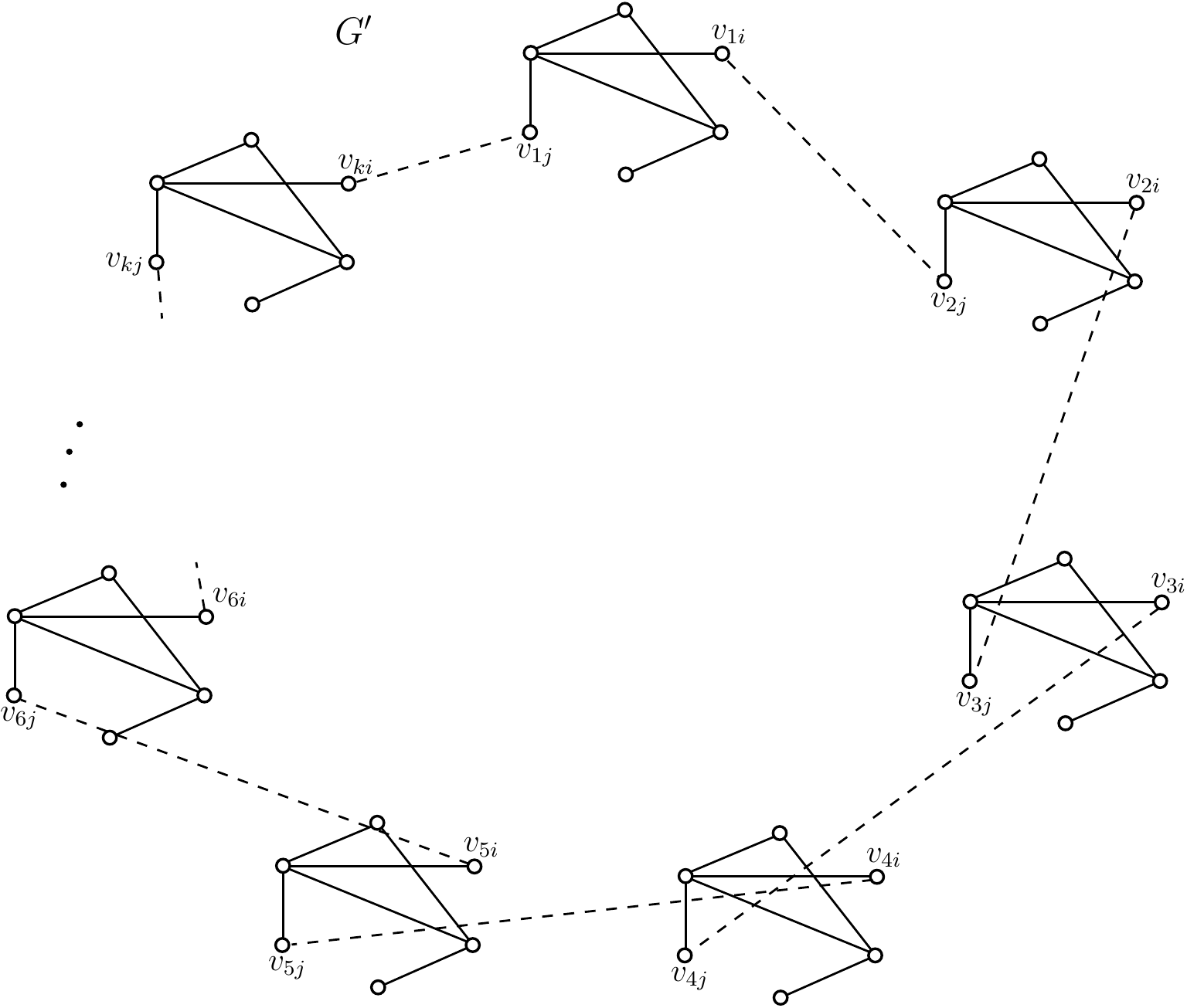}
        \label{fig:impossibility2}}
        }
   \caption{(a) The graph $G$. The selected edge $u_iu_j$, which is part of a cycle, is drawn dashed. (b) The graph $G^\prime$ (in fact the whole subfamily of $\cal{G}$ for these particular $G$ and $u_iu_j$; only the number of components changes between $G^\prime$s). Each component is identical to $G$ apart from edge $u_iu_j$ which has been removed and has been replaced by the intercomponent edges $\{v_{zi},v_{(z+1)j}\}$, $1\leq z\leq k$ and $k+1\coleq 1$.} \label{fig:impossibility}
\end{figure}

Now $G^\prime$ is clearly connected (and in fact also contains at least one large
cycle containing at least nodes $v_i$ and $v_j$ from each component) because if
$x,w$ are two nodes of $G^\prime$ then (i) if $x$ and $w$ are in the same component there is
a path between them because each component is connected and (ii) if $x$ and
$w$ belong to different components, there is a path from $x$ to $v_i$ in the
component of $x$, then an edge to $v_j$ of the next component and a path to the
$v_i$ of that component, and by following all these $v_i,v_j$ connections we can
finally arrive at the $v_j$ of the component of $w$ and reach $w$.

Now let $A$ be any protocol that when executed on any initially connected
graph $H$ with all nodes initially identical, eventually makes $H$ acyclic
without ever breaking the connectivity of $H$.

Such a protocol $A$ must work correctly on both $G$ and $G^\prime$. So, in every fair
execution $\alpha$ on $G$, $A$ eventually makes $G$ acyclic (by possibly activating and
deactivating edges in the meantime; i.e. we do not pose any further
restriction on $A$'s behaviour). But $A$ must also work correctly on $G^\prime$.
Consider now the following finite unfair execution $\alpha^\prime$ of $A$ on $G^\prime$ (we are
allowed to use any finite unfair prefix of a fair execution): $\alpha^\prime$ mimics $\alpha$.
In particular, whenever $\alpha$ selects edge $\{u_t,u_z\}$ (either active or
inactive) of $G$ different than $\{u_i,u_j\}$ (i.e. the one that was excluded from
the components), $\alpha^\prime$ selects one after the other all copies of $\{u_t,u_z\}$
inside the components of $G^\prime$, i.e. first its copy in the first component,
then its copy in the second component, and so on. If the edge is
$\{u_i,u_j\}$, then $\alpha^\prime$ again selects one after the other all its copies in $G^\prime$,
that is the edges joining the components. This completes the description
of the unfair finite prefix of $\alpha^\prime$.

Now it only remains to give a proof of local indistinguishability of all
nodes between $\alpha$ and $\alpha^\prime$. This can be proved by induction. We show that
whenever $\alpha$ selects an edge and the two nodes interact and update their
states and the state of the edge, then after the above described
selections of $\alpha^\prime$, all copies of the nodes in $G^\prime$ have the same states as
they do in $G$ and also their neighborhood structure including edges and
states is the same as it is in $G$. In fact, if the edge is different than
$\{u_i,u_j\}$ then this holds trivially because the same update that we have in
$G$ is performed in all components of $G^\prime$. If the edge is $\{u_i,u_j\}$ then a
deactivation cannot occur because this would deactivate all of its copies
between components in $G^\prime$ and would disconnect $G^\prime$ (contradicting
correctness of $A$). So only state updates can be performed in this case and
symmetry between the two executions is preserved.

Now observe the following. $\alpha$ eventually makes $G$ acyclic and still
connected (i.e. a tree). This means that $G$ has eventually precisely $n-1$
active edges. Every such edge has precisely $k$ copies in $G^\prime$. Due to the
above indistinguishability argument, the only edges of $G^\prime$ that are active
in $\alpha^\prime$ (after all the corresponding simulations of $\alpha$) are the copies of
these $n-1$ edges, that is $\alpha^\prime$ leaves in a finite number of steps $G^\prime$ with
$k(n-1)=kn-k$ edges. But $G^\prime$ consists of $kn$ nodes, thus it needs at least
$kn-1$ edges to be connected and $kn-k < kn-1$ for all $k\geq 2$. It follows that
$\alpha^\prime$ disconnects $G^\prime$ which contradicts the existence of such a correct
protocol $A$.
\end{proof}

The above stronger result states that \emph{every} connected graph $G$ has a corresponding infinite family of graphs (in most cases disjoint to
the families of other graphs) such that Acyclicity cannot be solved at the same time on $G$ and on a $G^\prime$ from the family. This means that it does not just happen for Acyclicity to be unsolvable in a few specific inconvenient graphs. \emph{All graphs} are in some sense 
inconvenient for Acyclicity when studied together with the families that we have defined.

\begin{corollary} [Acyclicity Impossibility] \label{cor:acyclicity-impossibility}
If all nodes are initially identical, then any protocol that always makes the active topology acyclic may disconnect it in some executions in $\Theta(n)$ active components (i.e. in a worst-case manner).
\end{corollary}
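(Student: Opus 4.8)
The plan is to derive the corollary directly from Theorem~\ref{the:strong-impossibility} by instantiating its construction on the smallest base graph that has a cycle, the triangle, and then counting connected components. Concretely, I would apply Theorem~\ref{the:strong-impossibility} with $G$ the triangle on three nodes $u_1,u_2,u_3$; every edge lies on its unique cycle, so I may take $u_iu_j=u_1u_3$. The associated infinite family $\mathcal{G}$ then consists, for each $k\ge 2$, of the graph $G'_k$ on $3k$ nodes built from $k$ paths $v_{h1}-v_{h2}-v_{h3}$ (each a copy of the triangle with the edge $u_1u_3$ removed) wired into one long cycle by the inter-component edges $\{v_{h1},v_{(h+1)3}\}$, indices taken mod $k$. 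This is exactly the graph $G'$ of the theorem (see Figure~\ref{fig:impossibility}), and it is connected.

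Next I would reuse the step-by-step indistinguishability argument from the proof of Theorem~\ref{the:strong-impossibility}. Fix $k\ge 2$ and let $A$ be any protocol that always makes the active topology acyclic, i.e.\ from every connected initial active topology every fair execution eventually has an acyclic active topology. Since the (fully active) triangle is such a topology, any fair execution $\alpha$ of $A$ on it has a finite time $t_0$ at which the active edges of the triangle form an acyclic set $T$; being a forest on $3$ vertices, $|T|\le 2$. Replaying the first $t_0$ steps of $\alpha$ on $G'_k$ --- each interaction of $\alpha$ across a triangle edge $e$ simulated by interacting, one after another, across all $k$ copies of $e$ in $G'_k$, and fixing the coin flips of symmetric transitions so that the copies reproduce the original's state assignment --- yields a finite execution prefix after which, by the invariant that $v_{h1},v_{h2},v_{h3}$ carry the states of $u_1,u_2,u_3$ and every copy of a triangle edge carries that edge's state, the active edges of $G'_k$ are exactly the union of the $k$ copies of each edge of $T$, i.e.\ exactly $k|T|\le 2k$ active edges on $3k$ nodes. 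I then extend this prefix to an arbitrary fair execution of $A$ on $G'_k$.

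The only remaining point is the counting, which is immediate: a graph on $N$ vertices with $m$ edges has at least $N-m$ connected components, so at the end of the simulated prefix the active topology of $G'_k$ has at least $3k-k|T|=k(3-|T|)\ge k\ge 2$ components, and at most $3k$ of them; hence in this execution $A$ has disconnected $G'_k$ into $\Theta(k)=\Theta(|V(G'_k)|)$ active components. Since $k\ge 2$ is arbitrary and the $G'_k$ have unbounded size, this exhibits, for arbitrarily large populations, executions in which $A$ shatters the active topology into $\Theta(n)$ pieces, which is precisely the claimed worst-case behaviour. I do not expect a genuine obstacle here: the indistinguishability invariant is the one already established in Theorem~\ref{the:strong-impossibility} (including the handling of symmetric-transition coin flips), and once it is in place the crude bound $m\le 2k$ on the surviving edges turns the component count into a one-line estimate; the only care needed is to observe that the acyclicity hypothesis (without a connectivity guarantee) still forces $|T|\le 2$ on the base triangle, which is what keeps $m$ bounded away from $3k-1$.
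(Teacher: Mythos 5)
Your proposal is correct and follows essentially the paper's own route: the corollary is obtained by instantiating the replication/indistinguishability construction of Theorem~\ref{the:strong-impossibility} on a constant-size base graph (your triangle, giving the $3k$-cycle family), replaying the base execution copy-by-copy, and counting surviving active edges --- at most $k|T|\leq 2k$ on $3k$ nodes, hence at least $k=\Theta(n)$ and at most $3k$ components. Your explicit observation that the weaker hypothesis (acyclicity without any connectivity-preservation guarantee) still bounds the base graph's final active edge set by a forest, so the count goes through with ``at most $n-1$'' in place of the theorem's ``exactly $n-1$'', is a small but correct adjustment that the paper leaves implicit.
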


\begin{lemma} \label{lem:line-impossibility}
If a protocol breaks in some executions the active topology into $\Theta(n)$ components, then such a protocol cannot solve the Terminating Line Transformation problem.
\end{lemma}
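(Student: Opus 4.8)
The plan is to obtain a contradiction by a concealment (local indistinguishability) argument in the spirit of the proofs of Propositions~\ref{pro:pp-termination-impossibility} and~\ref{pro:termination-impossibility}. Assume, towards a contradiction, that $A$ is a protocol that solves Terminating Line Transformation and that there is a connected initial active topology $G$ on $n$ nodes and a fair execution $\sigma$ of $A$ on $G$ during which, at some step $t$, the active topology consists of $m=\Theta(n)$ connected components $C_1,\dots,C_m$. Since $A$ solves Terminating Line Transformation, $\sigma$ eventually reaches a halting configuration whose active topology is a spanning line on all $n$ nodes. Two facts about step $t$ are then immediate: (a) no component $C_i$ is entirely halted, since an edge incident to a node in a halting state can never be activated, so a fully-halted component would stay isolated forever, contradicting the eventual spanning-line requirement; and (b) from step $t$ onward $A$ must activate at least $m-1$ fresh edges, through a long sequence of merges that brings the number of active components from $m=\Theta(n)$ down to $1$, and only then may it detect a line and halt.

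Next I would build a larger connected instance $G'$ together with a finite unfair prefix on which $A$ is fooled. The idea is to plant one extra component into the collection that $A$ itself creates. Concretely, $G'$ is designed so that, running on $G'$, the scheduler can first mimic exactly the ``breaking'' phase of $\sigma$ (steps $0$ through $t$) and reach a configuration $D'$ whose active topology is $C_1\sqcup\dots\sqcup C_m\sqcup C'$, where $C'$ is an additional connected component and every node of $C_1,\dots,C_m$ is in precisely the state it has in $\sigma$ at step $t$. From $D'$ we continue by mimicking the reconnection part of $\sigma$ on the nodes of $C_1,\dots,C_m$, while keeping $C'$ concealed (no interaction between $C'$ and the rest). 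Since $C'$ shares no active edge with $C_1\cup\dots\cup C_m$, every local predicate $A$ may use on those nodes --- detection of local degree in $\{1,2\}$ and, if present, common-neighbor detection --- returns exactly the value it returns in $\sigma$; hence $A$ behaves identically and, in finitely many further steps, reaches a halting configuration in which $C_1\cup\dots\cup C_m$ is a spanning line on those $n$ nodes with all $n$ of them halted. Extending this finite prefix to any fair execution, the halted part is permanently inert (in particular no edge bridging it to $C'$ is ever activated), so the active topology of $G'$ is forever disconnected and never stabilizes to a spanning line on $V(G')$ --- contradicting the correctness of $A$ on $G'$.

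The main obstacle, and the point at which the hypothesis $m=\Theta(n)$ is genuinely used, is the construction of $G'$ and of the prefix that produces $D'$ \emph{without perturbing} the behavior of $A$ on the planted copy. The naive attempt --- attaching $C'$ to $G$ via a single fresh active edge --- fails, because that edge changes the active degree of its endpoint (e.g.\ turns a prospective line endpoint of degree $1$ into a node of degree $2$), which $A$ may detect, so $\sigma$ can no longer be mimicked all the way to termination. This is exactly what Fact~(b) lets us circumvent: $A$ itself deactivates $\Theta(n)$ edges while forming $C_1,\dots,C_m$, so $G'$ can be engineered so that the inter-component edges that would otherwise expose $C'$ are among the edges $A$ deactivates during its own breaking phase; $C'$ then becomes isolated as a by-product of $A$'s behavior rather than by external scheduling, and its nodes are locally indistinguishable, from $A$'s viewpoint, from those of a ``genuine'' component. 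Making this routing precise --- choosing $C'$, the connecting edges, and verifying by induction that all copies stay in lockstep with $\sigma$ (as in the proofs of Lemma~\ref{lem:first-impossibility} and Theorem~\ref{the:strong-impossibility}) --- is the technical heart; the remaining fairness and unfair-prefix bookkeeping is standard. Finally, combined with Corollary~\ref{cor:acyclicity-impossibility}, this yields the impossibility of Terminating Line Transformation with initially identical nodes (Corollary~\ref{cor:line-impossibility}).
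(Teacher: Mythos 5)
There is a genuine gap, and it sits exactly where you placed the phrase ``the technical heart'': the construction of $G'$ and of the indistinguishable prefix is never carried out, and the sketch you give for it does not work as stated. Your plan keeps the original $G$ intact and plants a single foreign component $C'$, hoping that the attachment edges can be ``routed'' through edges that $A$ deactivates anyway. But the edges $A$ deactivates in $\sigma$ are edges between specific nodes of $G$ in specific states; an attachment edge $uw$ with $w\in C'$ is a different edge, and scheduling an interaction on it both requires the pair $(\text{state}(u),\text{state}(w))$, the edge state and the common-neighbor bit to trigger the same deactivation, and causes state updates of $u$ and $w$ that must reproduce what happens to $u$ and its lost neighbor in $\sigma$ --- not just at that step, but forever after, since the lost neighbor keeps interacting inside its component. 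Making all of this hold forces $C'$ to be a full copy of a component and the attachment edges to stand in for deleted copies of a cycle edge, i.e.\ it forces you to rebuild the replicated family and the lockstep induction of Theorem~\ref{the:strong-impossibility} from scratch; you explicitly defer this, so the proof is incomplete. (Your diagnosis of why the naive attachment fails --- degree-$1$/$2$ detection at the attachment point --- is correct, but the proposed remedy is asserted, not proved, and proving it is essentially the whole difficulty.)

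The paper avoids this by not working with an arbitrary breaking execution at all. Since $A$ solves Terminating Line Transformation it makes $G$ acyclic, so Theorem~\ref{the:strong-impossibility} already supplies, for the same $G$, two instances $G'$ and $G''$ from the family $\cal{G}$ of very different sizes, together with executions in which the deactivation of the cycle edge simultaneously disconnects $G'$ into $c'$ components and $G''$ into $c''\gg c'$ components, \emph{all of them identical copies in identical states}. The concealment is then performed between $G'$ and $G''$: a fair execution on $G'$ ends with a terminated spanning line of length $n'$, and replaying it on $c'$ of the $c''$ components of $G''$ (ignoring the rest) makes $A$ terminate on $G''$ with a line missing $n''-n'$ nodes. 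So the indistinguishability you need is inherited wholesale from Theorem~\ref{the:strong-impossibility} rather than re-derived, and the surplus components are concealed relative to each other, not relative to the original $G$. If you want to keep your more general formulation (any protocol that ever reaches $\Theta(n)$ components fails), you must either carry out the replication construction you sketched or, more simply, route the argument through Theorem~\ref{the:strong-impossibility} as the paper does.
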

\begin{proof}
Let $A$ be a protocol that solves Terminating Line Transformation. Take any $G$ and its corresponding infinite family $\cal{G}$, as in the statement of Theorem \ref{the:strong-impossibility}. As $A$ solves Terminating Line Transformation on $G$, it makes $G$ acyclic and by Theorem \ref{the:strong-impossibility} it must hold that it may disconnect in some executions all $G^\prime\in\cal{G}$ in a worst-case manner. Take a $G^\prime\in\cal{G}$ of size $n^\prime$ and a $G^{\dprime}\in\cal{G}$ of size $n^{\dprime}\gg n^\prime$. If we look again at the indistinguishable executions of Theorem \ref{the:strong-impossibility}, we have that there is an edge deactivation of $A$ on $G$ that leads to disconnection of all the $c^\prime$ components of $G^\prime$ and the $c^{\dprime}\gg c^\prime$ components of $G^{\dprime}$ (we just have to use that edge as the intra-component edge) and additionally at that point, the components of both instances are in identical states. From that point on, any fair execution $\alpha^{\prime}$ of $A$ on the $c^\prime$ components leads to the formation of a spanning line of length $n^\prime$ with termination. But now consider the following unfair execution $\alpha^{\dprime}$ on the $c^{\dprime}$ components: $\alpha^{\dprime}$ simulates $\alpha^{\prime}$ on $c^\prime$ of the $c^{\dprime}$ components of $G^{\dprime}$ ignoring the remaining $c^{\dprime}-c^\prime$ components. The result is that in this case, $A$ forms a line of length only $n^\prime$ and terminates before taking into account the remaining $n^{\dprime}-n^\prime$ nodes, and this number of ignored nodes can be made arbitrarily large by taking an increasingly larger $G^{\dprime}$. It follows that there can be no such protocol $A$ for the Terminating Line Transformation problem.
\end{proof}

\begin{corollary} [Terminating Line Transformation Impossibility] \label{cor:line-impossibility}
If all nodes are initially identical, there is no protocol for Terminating Line Transformation.
\end{corollary}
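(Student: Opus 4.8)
The plan is to obtain the corollary as an immediate consequence of Corollary \ref{cor:acyclicity-impossibility} and Lemma \ref{lem:line-impossibility}, so the proof is essentially a short chaining argument together with one trivial observation. I would argue by contradiction, assuming that $A$ is a protocol with all nodes initially in $q_0$ that solves Terminating Line Transformation on every connected initial active topology.

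The first step is to note that such an $A$ necessarily \emph{always makes the active topology acyclic}. Indeed, by the definition of Line Transformation the output-graph of every fair execution of $A$ stabilizes to a spanning line, and a line is acyclic; hence from some step on the active topology has no cycle. This is exactly the hypothesis needed to apply Corollary \ref{cor:acyclicity-impossibility}, which then guarantees that $A$ disconnects the active topology into $\Theta(n)$ active components in some executions (the worst-case disconnection witnessed by the family $\cg$ of Theorem \ref{the:strong-impossibility}, whose number of copies is unbounded).

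The second step is to apply Lemma \ref{lem:line-impossibility}: a protocol that breaks the active topology into $\Theta(n)$ components in some execution cannot solve Terminating Line Transformation — intuitively, once an unbounded number of components have been split off, a terminating protocol can no longer distinguish an execution on the actual population from one on a strictly larger population in which the surplus components are being concealed, and so it will halt with a line shorter than the population. Putting the two steps together, $A$ both solves and does not solve Terminating Line Transformation, which is the desired contradiction.

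The only subtlety — and the single point I would be careful about rather than a genuine obstacle — is that a Line Transformation protocol is, a priori, allowed to break connectivity temporarily and reconnect later, so one should not claim that $A$ \emph{never} disconnects. What the argument actually uses is only that $A$ \emph{eventually} renders the topology acyclic, which is forced by the spanning-line requirement and suffices to trigger Corollary \ref{cor:acyclicity-impossibility}; the $\Theta(n)$-way disconnection is then an unavoidable by-product on some member of $\cg$, exactly as in Theorem \ref{the:strong-impossibility}. Everything else is a direct citation of the already-established results.
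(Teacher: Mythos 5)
Your proposal is correct and follows exactly the paper's own route: observe that solving Terminating Line Transformation forces the protocol to eventually make the active topology acyclic (a spanning line is acyclic), invoke Corollary \ref{cor:acyclicity-impossibility} to get a $\Theta(n)$-component disconnection in some executions, and then apply Lemma \ref{lem:line-impossibility} to reach the contradiction. Your extra remark that only eventual acyclicity (not permanent connectivity preservation) is needed is a fair clarification but does not change the argument, which coincides with the paper's proof.
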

\begin{proof}
Terminating Line Transformation requires making the active topology acyclic. But then Corollary \ref{cor:line-impossibility} implies that any protocol must in some (in fact, in infinitely many) executions break connectivity in a worst-case manner, that is the active topology may break into $\Theta(n)$ components, and then, by Lemma \ref{lem:line-impossibility}, any such protocol cannot be a correct protocol for Terminating Line Transformation.
\end{proof}

\subsection{The Common Neighbor Detection Assumption}
\label{subsec:neighbor-detection}

In light of the impossibility results of the previous section, we naturally ask whether some minimal strengthening of the model could make the problems solvable. To this end, we give to the nodes the ability to detect whether they have a neighbor in common. In particular, we assume that whenever two nodes interact, they can tell whether they have at that time a common neighbor (over active edges).
Clearly, this mechanism can be used to safely deactivate an edge in case it happens
that the two nodes are indeed part of a 3-cycle. If the two nodes are part
only of longer cycles they still cannot deactivate the edge with
certainty. Observe that the common neighbor detection mechanism is very local and easily implementable by almost any plausible system. For example, it only requires local names and at least 2-round local communication before neighborhood changes. Moreover, it is also an inherent capability of the
variation of population protocols in which the nodes interact in triples instead of pairs (see e.g. \cite{AADFP06,BBRR12}). Interestingly, we shall prove in this section that this minimal extra assumption overcomes the impossibility results both of Corollary \ref{cor:acyclicity-impossibility} and Corollary \ref{cor:line-impossibility}. In particular, both Acyclicity and Terminating Line Transformation become now solvable.

We begin with a simple protocol for Acyclicity (Protocol \ref{prot:star-transformer}) exploiting the common neighbor detection mechanism. The transitions are now of the form $\delta : Q\times Q\times \{0,1\}\times \{0,1\} \rightarrow Q\times Q\times \{0,1\}$, i.e. in every interaction there is an additional input from $\{0,1\}$ indicating whether the two interacting nodes have a neighbor in common (via active edges) at the beginning of their interaction. For example, rule $(p,p,1),1\ra (p,p,0)$ is interpreted as ``if both nodes are in state $p$, the edge between them is active, and they have a common neighbor, then (as the nodes are part of an active triangle) the edge between them can be (safely) deactivated and both nodes remain in state $p$''. The protocol solves Acyclicity (in contrast to the impossibility of Corollary \ref{cor:acyclicity-impossibility}) by converting any initial connected active topology to a spanning star without ever breaking connectivity.

\floatname{algorithm}{Protocol}
\renewcommand{\algorithmiccomment}[1]{// #1}
\begin{algorithm}[!h]
  \caption{\emph{Star-Transformer}}\label{prot:star-transformer}
  \begin{algorithmic}
    \medskip
    \State $Q=\{l,p\}$, initially all nodes are in state $l$
    \State $\delta$: 
    \begin{align*}
    (l,l,\cdot),\cdot &\ra (l,p,1)\\
    (l,p,0),\cdot &\ra (l,p,1)\\
    (p,p,1),1&\ra (p,p,0)   
    \phantom{\hspace{10cm}}
    \end{align*}
    \State \Comment {The last rule is triggered only if the two $p$s have a common neighbor}
  \end{algorithmic}
\end{algorithm}

In Protocol Star-Transformer, all nodes are initially leaders in state $l$. Whenever two leaders interact, one of them becomes $p$
(peripheral) and the edge between them is activated. Whenever an $l$ interacts with a $p$ over an inactive edge, the edge between them becomes activated. Finally, whenever two peripherals interact, they deactivate the edge between them only if they
have a neighbor in common (which by assumption can be locally detected).

\begin{proposition}
By assuming that nodes are equipped with the common neighbor detection mechanism, Protocol Star-Transformer solves Acyclicity in the setting in which all nodes are initially identical. In particular, the final acyclic active topology is always a spanning star.
\end{proposition}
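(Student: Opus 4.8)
I would prove the statement in four stages: (i) show that no execution ever disconnects the active topology; (ii) show that a unique leader eventually emerges and persists, so that eventually all other nodes are permanently in state $p$; (iii) show that the spanning star centered at that leader eventually becomes a permanent subgraph of the active topology; and (iv) show that every edge not belonging to that star is eventually, and irreversibly, deactivated, so that the active topology stabilizes exactly to the star, which is acyclic.

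For (i), the key observation is that the only edge-deactivating rule is $(p,p,1),1\ra(p,p,0)$, and it applies to an edge $uv$ only when $u$ and $v$ have a common active neighbor $w$ at that time; then $u,v,w$ form an active triangle, so $uv$ lies on a cycle of the current active topology, and by the Observation in Section~\ref{sec:identical-nodes} deleting an edge on a cycle never disconnects an undirected graph. Since edge activations trivially preserve connectivity and the initial active topology is connected, connectivity is maintained throughout every execution.

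For (ii)--(iv), I would argue as follows. The number of nodes in state $l$ is non-increasing, strictly decreases each time rule $(l,l,\cdot),\cdot\ra(l,p,1)$ fires, and never reaches $0$; moreover no rule turns a $p$ back into an $l$. Because the interaction graph is complete and the schedule is fair, this count cannot stabilize at any value $\geq 2$ (a leader--leader interaction would remain available and thus recur), so it stabilizes at $1$: eventually there is a unique leader $c$, with all of $V\bs\{c\}$ permanently in state $p$. Next, for each $u\neq c$: after $c$ is the unique leader, fairness forces an interaction of $c$ and $u$, which activates $cu$ via $(l,p,0),\cdot\ra(l,p,1)$ if it was inactive, and thereafter $cu$ can never be deactivated, since the only deactivating rule requires both endpoints in state $p$. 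Hence from some finite time $T_0$ onward the spanning star centered at $c$ is a permanent subgraph. Finally, for any edge $uv$ with $u,v\in V\bs\{c\}$: once $t\geq T_0$ the leader $c$ is a common active neighbor of $u$ and $v$, so the first interaction of $u$ and $v$ over an active $uv$ after $T_0$ deactivates it via $(p,p,1),1\ra(p,p,0)$; and no rule can reactivate an edge with both endpoints in state $p$ (rule $1$ needs two $l$'s, rule $2$ needs an $l$). As there are only finitely many such edges and each is selected at some finite time after $T_0$, after a finite time the active topology is exactly the spanning star centered at $c$, which is acyclic. This yields both conclusions of the proposition.

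I expect the only genuine subtlety to lie in stage (iv): one must be careful to first establish that the star subgraph is permanent (so that $c$ is a reliable common-neighbor witness for safely deactivating the remaining $p$--$p$ edges) and to record the two irreversibility facts — after the unique leader has appeared, no transition reactivates a $p$--$p$ edge and none deactivates a $c$--$p$ edge. Stage (i) is essentially immediate from the common-neighbor semantics together with the Observation, and stage (ii) is the standard fairness argument for pairwise leader elimination on a complete interaction graph.
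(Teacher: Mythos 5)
Your proof is correct and follows essentially the same route as the paper's: pairwise leader elimination yields a unique leader, the leader then activates all leader--peripheral edges, the leader serves as the common-neighbor witness for safely deactivating the remaining $p$--$p$ edges, and connectivity is preserved because only edges on active 3-cycles are ever removed. The extra detail you supply (the fairness argument for leader election and the two irreversibility facts) merely makes explicit what the paper leaves implicit.
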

\begin{proof}
As leaders are always pairwise eliminated, eventually a unique leader will remain and all other nodes will be in state $p$. The
leader will eventually become directly connected to all $p$s, via rule $(l,p,0)\ra (l,p,1)$. From that point
on, if there are still two $p$s that are connected with each other, then these $p$s 
have $l$ as a common neighbor and when they interact, rule $(p,p,1),1\ra (p,p,0)$
is applied and deactivates the edge between them. Thus, eventually all edges between $p$s
are deactivated and the network stabilizes to a star with $l$ at its center.
It remains to show that the active network is never disconnected. This is
ensured by the fact that the protocol only eliminates edges that belong to 3-cycles at the time of
elimination, and this cannot lead to disconnection.
\end{proof}

We now exploit the common neighbor detection assumption and the Star-Transformer protocol to give a correct and efficient protocol for the Terminating Line Transformation problem. The protocol, called Line-Transformer, assumes (as did the protocols of Section \ref{sec:unique-leader}) the ability to detect whether the local active degree of a node is equal to 1 or 2.\\

\noindent\textbf{Protocol Line-Transformer.} We give here a high-level description. All nodes are initially leaders in state $l$. When two leaders interact, one of them becomes a peripheral in state $p$ and the edge is activated. Every leader is connected to all $p$s that it encounters. Two $p$s deactivate an active edge joining them only if at the time of interaction they have a neighbor in common. When a peripheral has active local degree equal to 1, its local state is $p_1$, otherwise it is $p$ or one of some other states that we will describe in the sequel.

When a leader first sees one of its own $p_1$s (i.e. via an active edge), it initiates the formation of a line over its $p_1$ peripherals (observe that the set of $p_1$ peripherals of a leader does not remain static, as e.g. a $p_1$ becomes $p$ when some other leader connects to it). In particular, as in Protocol Line-Around-a-Star, the line will have as its ``left'' endpoint the center of the local star, which will be in a new state $e_l$, and it will start expanding over the available local $p_1$ peripherals over its right endpoint in state $l^\prime$.

The new center $e_l$ keeps connecting to new peripherals but it cannot become eliminated any more by other leaders. Pairwise eliminations only occur via any combination of $l$ and $l^\prime$. A local line expands over the local $p_1$s as follows. When the right endpoint $l^\prime$ encounters a $p_1$, which can occur only via an inactive edge, it expands on it only if the two nodes have a common neighbor (which can only be the center of the local star). If this is satisfied, the $l^\prime$ takes the place of the $p_1$, leaving
behind an $i$ (for ``internal node'' of the line) and the edge becomes
activated. Moreover, the center $e_l$ deactivates every active edge it has
with an $i$ but not with the first peripheral of the line (so the first
peripheral that the line uses must always be in a distinguished state $i_1$) and not
with the $l^\prime$ right endpoint (because that edge is always needed for common
neighbor detection during the next expansion).

Now, if the $l^\prime$ endpoint ever meets either another $l^\prime$ endpoint or an $l$
center then one of them becomes deactivated. When it meets an $l$ center we
can always prefer to deactivate the $l$ center because no line backtracking
is required in this case. When an $l$ center is deactivated, $l$ simply
becomes $p$ and the edge becomes activated (it can never be a $p_1$
immediately but this is minor).

The most interesting case is when an $l^\prime$ loses from another $l^\prime$. In this
case, the eliminated $l^\prime$ becomes $f$. The role of $f$ is to backtrack the whole
local line construction by simply converting one after the other every $i$
on its left to $p$ and finally converting $e_l$ again to $p$ (again it
cannot be a $p_1$ at the time of conversion). This backtracking process
cannot fail because $f$ has always a single $i$ (or $i_1$) active neighbor,
always the one on its left, while its right neighbor is no one initially
and a $p$ in all subsequent steps, so it knows which direction to follow.
When the backtracking process ends, all the nodes of the local star are
either $p$ or $p_1$ so they can be attracted by the stars that are still alive.

The protocol terminates, when for the first time it holds that an $e_l$ has
local degree equal to 2 after its line has for the first time length at least 3 (nodes).
When this occurs, a spanning ring has been formed and $e_l$ can deactivate
the edge $(e_l,l^\prime)$ between the two endpoints to make it a spanning line. This completes the description of the protocol. An example execution is depicted in Figure \ref{fig:line-transformer}.

\begin{figure}[!hbtp]
   \centering{
        \subfigure[]{
        \includegraphics[width=0.3\textwidth]{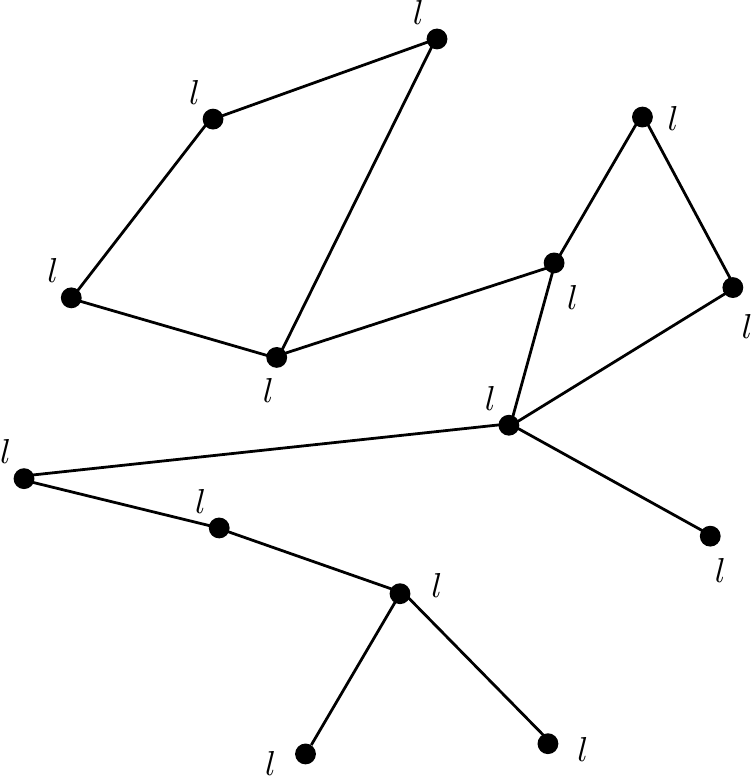}
        \label{fig:line-transformer1}}
	\hspace{1cm}
        \subfigure[]{
        \includegraphics[width=0.3\textwidth]{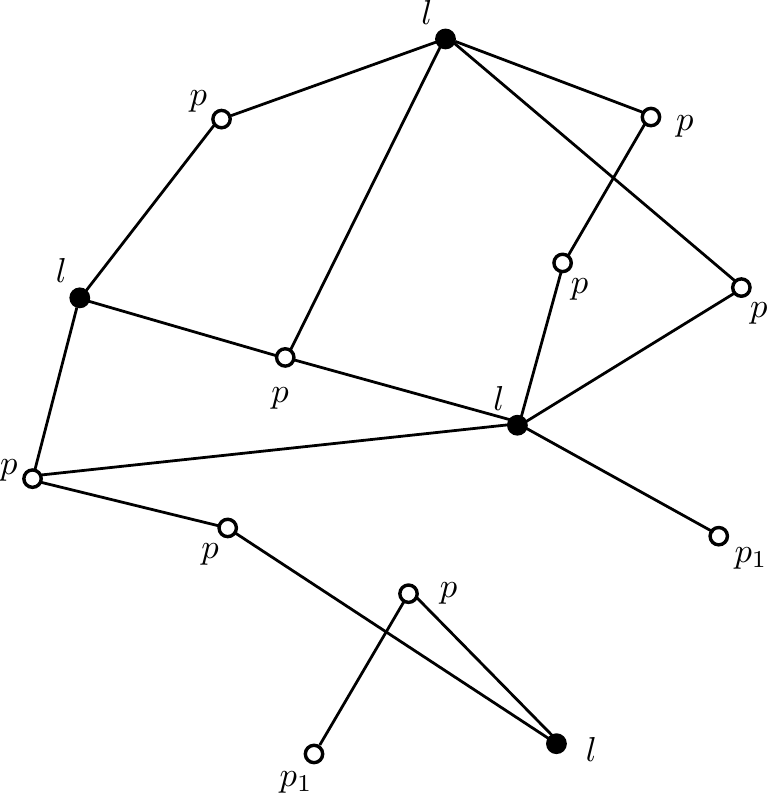}
        \label{fig:line-transformer2}}
	\hspace{1cm}
        \subfigure[]{
        \includegraphics[width=0.3\textwidth]{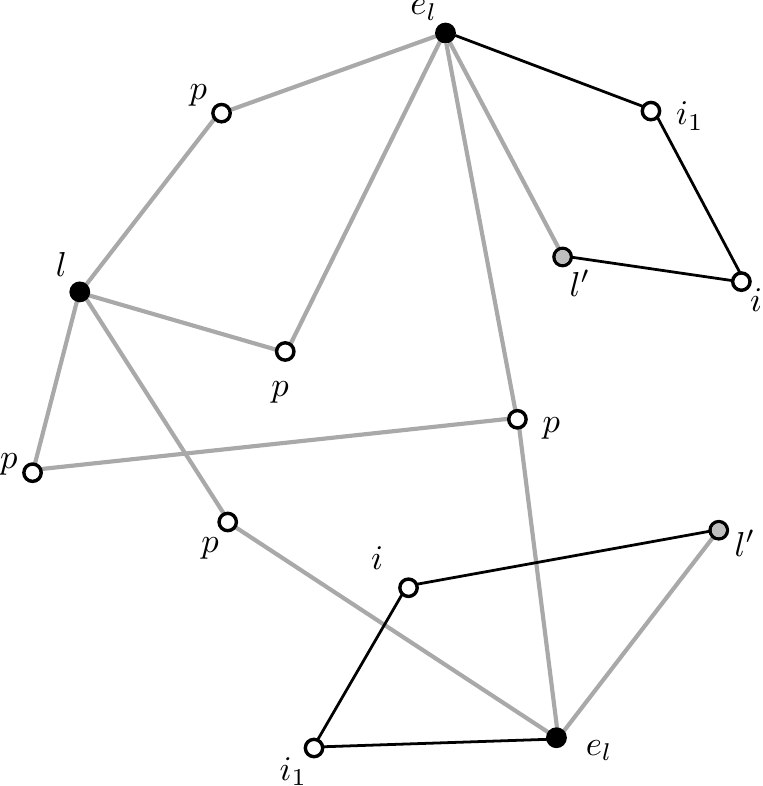}
        \label{fig:line-transformer3}}
	\hspace{1cm}
        \subfigure[]{
        \includegraphics[width=0.3\textwidth]{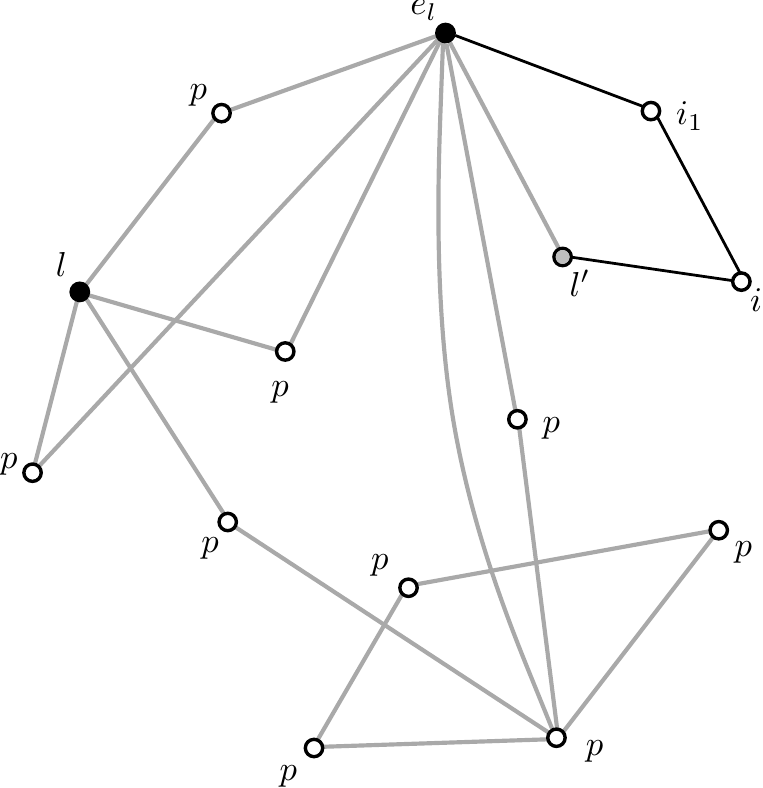}
        \label{fig:line-transformer4}} 
	\hspace{1cm}
        \subfigure[]{
        \includegraphics[width=0.3\textwidth]{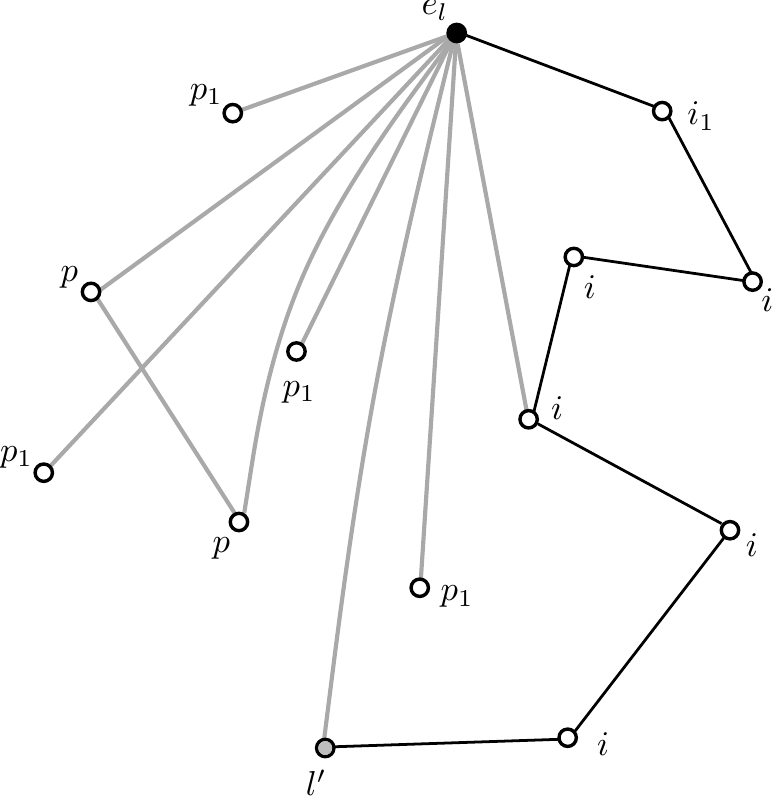}
        \label{fig:line-transformer5}}
	\hspace{1cm}
        \subfigure[]{
        \includegraphics[width=0.3\textwidth]{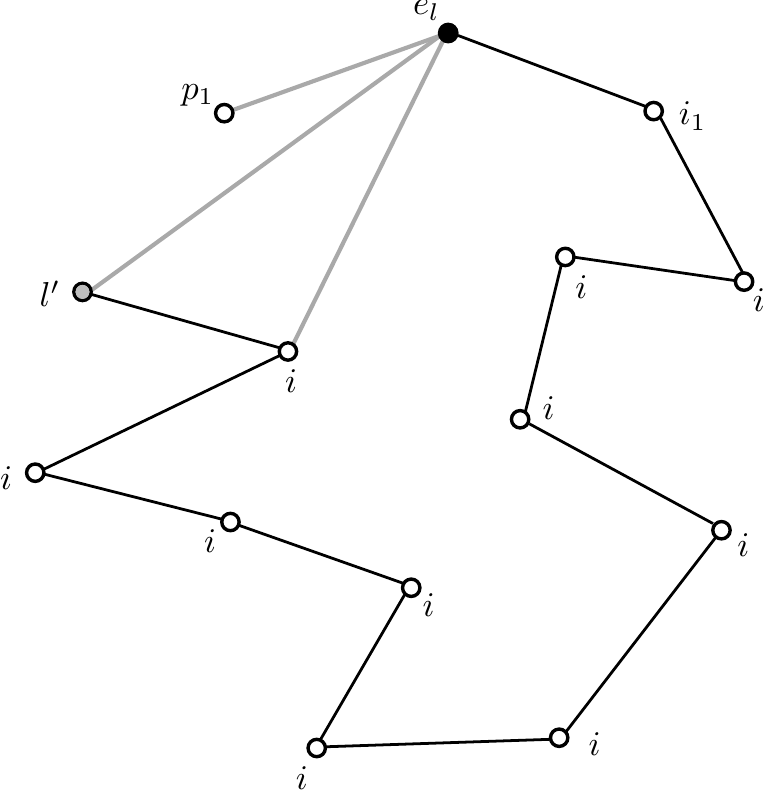}
        \label{fig:line-transformer6}}                       
        }
   \caption{An example execution of Protocol Line-Transformer. In all subfigures, black and gray edges are active and missing edges are inactive. Black and gray are used together whenever we want to highlight some subnetwork of the active network. (a) Initially, all nodes are leaders and the topology is connected. (b) Most leaders have been converted to peripherals, some leaders have attracted new peripherals, and some peripherals have disconnected from each other. (c) Two of the survived leaders have started to form lines over their $p_1$ peripherals. The centers of these stars are now in state $e_l$ (black nodes), the other enpoint of their lines is in state $l^\prime$ (gray nodes), and the lines are drawn by black edges. (d) The $l^\prime$ endpoints of the two previous lines interacted and one of them was backtracked. (e) A single line has remained. (f) The line is almost spanning.} \label{fig:line-transformer}
\end{figure}

\begin{theorem} \label{the:line-transformer}
By assuming that nodes are equipped with a common neighbor detection mechanism and have the ability to detect local degrees 1 and 2, Protocol Line-Transformer solves the Terminating Line Transformation problem in the setting in which all nodes are initially identical. Its running time is $O(n^3)$.
\end{theorem}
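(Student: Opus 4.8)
The plan is to establish four things: (a) the active topology is never disconnected; (b) every fair execution converges to a single surviving ``line structure'' whose line eventually spans all $n$ nodes; (c) the halting condition fires precisely when a spanning ring has formed, so the final edge deactivation yields a spanning line; and (d) the expected running time is $O(n^3)$. I would prove (a) first, since it underlies everything else. The protocol deactivates an edge in only three situations: (i) an edge joining two $p$'s that, at the moment of interaction, have a common neighbour --- hence an edge of a triangle; (ii) an edge from a centre $e_l$ to an internal node $i$ of its own line --- which lies on the cycle closed by the line-path from $e_l$ to $i$; and (iii) the final edge $(e_l,l')$, which lies on the spanning ring. In each case the removed edge is on a cycle at the time of removal, so by the fact that deleting an edge of a connected graph preserves connectivity iff that edge lies on a cycle (the Observation of Section~\ref{sec:identical-nodes}), connectivity is maintained throughout; this in particular circumvents Corollary~\ref{cor:acyclicity-impossibility}.

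For (b) I would argue, as in the analysis of Protocol Star-Transformer, that leaders are never created and are removed only in pairs through $l/l'$ combinations, so a potential counting each $l$ and each live $e_l$ once is non-increasing and strictly decreases at every elimination; a lone surviving $l$ is forced by fairness to eventually meet one of its own $p_1$'s and turn into $e_l$, and an $e_l$ is never eliminated, so a unique live centre $e_l$ eventually remains. I then need the ``messy-phase'' invariants announced in the protocol description: a backtracking token $f$ always has exactly one active $i/i_1$-neighbour, on its left, so backtracking can neither deadlock nor take a wrong turn and always terminates, returning every node of the dismantled star to state $p$ or $p_1$; and the unique live $e_l$ forever attracts every peripheral, after which any two of its peripheral neighbours share it as a common neighbour and disconnect, so eventually every non-line node becomes a $p_1$ of $e_l$ and is then consumed by the right endpoint $l'$ --- the common-neighbour test guarding an expansion always succeeding because $l'$ and the target $p_1$ are both joined to $e_l$. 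Hence the unique line grows monotonically until it is spanning.

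Part (c) is the cut argument already used in the analysis of Protocol Line-Around-a-Star. By induction on the expansions one maintains that, in the clean structure, every line node other than the endpoints has active degree exactly $2$ (its two line neighbours), that $l'$ has no active edge leaving the current line-vertex-set $S$, and that $e_l$ is the only vertex that can carry an edge of the $(S,V\setminus S)$-cut. Therefore, once the line has length at least $3$, the event ``$\deg(e_l)=2$'' is equivalent to ``the $(S,V\setminus S)$-cut is empty'', which by connectivity is equivalent to $S=V$, i.e. to the active topology being a spanning ring; at that instant $e_l$ deactivates $(e_l,l')$ and halts, producing a spanning line. The same invariant forbids premature halting while several structures coexist, since any other structure would be connected to $e_l$'s component only through $e_l$, forcing $\deg(e_l)>2$. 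Together with (a) and (b) this gives correctness, in contrast with Corollary~\ref{cor:line-impossibility}.

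For (d) I would decompose an execution into the standard subprocesses and bound each via \cite{MS14}: the centre meeting all $q_0$'s, the edge cover that detaches peripherals from one another, the endpoint $l'$ meeting all $p_1$'s to build the line, and the star-deformation in which $e_l$ sheds its edges to internal nodes are each a ``meet everybody'' or an ``edge cover'' and take $O(n^2\log n)$ expected steps. The only extra cost is line backtracking: dismantling a line currently of length $\ell$ costs $O(\ell n^2)$ expected steps, and since every node enters state $e_l$ at most once there are at most $n$ backtracks in total. The heart of the proof --- and the step I expect to be the main obstacle --- is to account for these backtracks so that their combined contribution is $O(n^3)$ rather than the naive $O(n^4)$: one must argue that backtracks on different stars proceed concurrently and that the total length of all lines that are ever backtracked, counted along any causal chain of ``a released node is re-absorbed into another star'', is $O(n)$, also using that once a single live centre remains no further backtrack can occur. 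Summing the phase bounds with this $O(n^3)$ backtracking cost yields the claimed running time.
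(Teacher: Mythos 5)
Your correctness argument, parts (a)--(c), is essentially the paper's proof: the same two (plus the final) cycle-protected edge deactivations for connectivity, the same never-blocked pairwise elimination of the $l$/$l^\prime$ tokens to get a unique surviving structure, and the same cut/degree argument showing that $\deg(e_l)=2$ with line length at least $3$ can only happen once the ring is spanning (your version of (c) is in fact spelled out a bit more carefully than the paper's). One small slip: an $e_l$ \emph{can} be eliminated, namely by having its line backtracked after its $l^\prime$ loses to another $l^\prime$; what is true, and what the argument needs, is that every live structure always carries exactly one $l$ or $l^\prime$ token, so eliminations are never blocked and a unique live structure eventually remains.

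The genuine gap is in (d), and you flag it yourself: you leave the backtracking accounting unresolved and propose to close it by showing that ``the total length of all lines that are ever backtracked, counted along causal chains, is $O(n)$.'' That global claim is both unnecessary and dubious -- a node can be released by one backtrack, re-absorbed into another star's line, and backtracked again, so no bound on the lifetime total is readily available. The paper's accounting is simpler and avoids the issue entirely by choosing the right milestones. First wait for the unique leader: since every live structure has an $l$ or $l^\prime$ token, pairwise elimination is never delayed and completes in expected $\Theta(n^2)$, \emph{regardless} of any build/backtrack churn happening concurrently -- that churn is never charged at all. From that instant on, no $l^\prime$ can ever lose again, hence no new $f$ is created; the $f$-lines pending at that moment contain at most $n$ nodes in total, and each conversion of one such node is a single designated interaction with expected waiting time $O(n^2)$, so even a fully sequential charge gives $O(n\cdot n^2)=O(n^3)$ for all remaining backtracking. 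Once everything outside the winner's line is back to $p$/$p_1$, the rest of the execution is exactly a Line-Around-a-Star-type process costing $O(n^2\log n)$. Summing the three milestone phases gives $O(n^3)$, with no need for your concurrency or causal-chain argument.
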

\begin{proof}
We begin with correctness. The network is never disconnected because the only edge deactivations
that take place are (1) between two $p$s that have a common neighbor, i.e.
breaking a 3-cycle, and (2) between an $e_l$ and an $i$. (1) clearly cannot
cause disconnection. (2) cannot cause disconnection because $e_l$ and $i$ are
still connected via the active line that begins from $e_l$ and passes over
all local $i$. Moreover, eventually a unique leader will remain because as long as there are two
stars their leaders can interact and one of them will become eliminated.

All nodes in the star of the final leader must be $p$ and $p_1$ and all other
nodes must be again $p$, $p_1$ or part of a line that is being backtracked. As
discussed above, a line that is being backtracked is guaranteed to convert
eventually all its nodes to $p$. So, eventually the unique star with an
alive leader has all remaining nodes in $p$, $p_1$ or part of its active line.
The $e_l$ center of this final star keeps connecting to $p$ and $p_1$ so eventually
all of them will have obtained an active connection to $e_l$. This implies
that from that point on, two $p$s cannot forever remain connected, because
eventually they will become connected to $e_l$ and when they will interact
with each other they will detect their common neighbor and will deactivate
the edge joining them. This implies that eventually (in the final star)
all $p$s will become $p_1$. So, the final ``star'' will actually look like a star
and a cycle beginning from $e_l$ covering an $i_1$ then several $i$ and ending at
$e_l$. When this cycle becomes spanning, this will be the first time (after
the cycle had length at least 3) that the local degree of $e_l$ becomes 2
(as there are no more $p$s and $p_1$s left in its neighborhood) and the protocol
terminates.

For the running time, observe that every star always has a leader node so the pairwise leader elimination process is never delayed. Therefore the time for a unique leader to remain is $\Theta(n^2)$. When this occurs, all other nodes are either peripherals or part of active lines with an $f$ endpoint. Eventually, all these $f$s will backtrack their lines, converting all nodes to peripherals. From that point on, the time to termination is upper bounded by the running time of Protocol Line-Around-a-Star, i.e. by $\Theta(n^2\log n)$. So, it only remains to estimate the worst-case expected time for the $f$s to backtrack their lines. Clearly, in the worst case there might be a single such line of length $\Theta(n)$. In this case, it takes $O(n^2)$ expected time for the single $f$ to interact with its next neighbor on the line, thus it takes $O(n^3)$ expected time to traverse the whole line and convert all its nodes to $p$. This part of the process dominates the running time of the protocol, therefore we conclude that the running time is $O(n^3)$ (still there is a possibility that it is a little faster if a better analysis could show that when a unique leader remains all $f$-lines are small enough, e.g. of length $o(n)$). 
\end{proof}

Table \ref{tab:line-transformation-protocols} summarizes all protocols that we developed for the Terminating Line Transformation problem, both for the case of a pre-elected unique leader (Protocols Online-Cycle-Elimination and Line-Around-a-Star in Section \ref{sec:unique-leader}) and for the case of identical nodes (Protocol Line-Transformer in the present section).

\begin{table*}[!hbtp]
\normalsize
\setlength{\tabcolsep}{10pt}
\begin{center}
\begin{tabular}{  l  c c c c c  }
  \hline
  Protocol & Leader & DD & CND & Expected Time & Lower Bound \\ \hline
  Online-Cycle-Elimination & Yes & 1 & No & $\Theta(n^4)$ & $\Omega(n^2\log n)$ \\ 
  Line-Around-a-Star & Yes & 1 & No & $\Theta(n^2\log n)$ (opt) & $\Omega(n^2\log n)$ \\
  Line-Transformer & No & 1,2 & Yes& $O(n^3)$ & $\Omega(n^2\log n)$ \\ \hline
\end{tabular}
\end{center}
\caption{All protocols developed in this work for the Terminating Line Transformation problem. For each of these protocols, the table shows whether it makes use of a pre-elected unique leader, what local degree detection it uses (DD), whether it uses common neighbor detection (CND), and also its expected running time under the uniform random scheduler. The last column shows the best known lower bound for the problem.}
\label{tab:line-transformation-protocols}
\end{table*}

Finally, we show how the spanning line formed with termination by \emph{Line-Transformer} can be used to establish that the class of computable predicates is the maximum that one can hope for in this family of models.

\begin{theorem} [Full Computational Power]
Let the initial active topology be connected, all nodes be initially identical, and let the nodes be equipped with degree in $\{1,2\}$ detection and common neighbor detection. Then for every predicate $p\in\rem{SSPACE}(n^2)$ there is a terminating NET that computes $p$.  
\end{theorem}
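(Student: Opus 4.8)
The plan is to combine the terminating line transformer with a standard space-$\Theta(n^2)$ Turing machine simulation carried out on the resulting line. First I would run Protocol \emph{Line-Transformer} of Theorem~\ref{the:line-transformer}, but with the state set enlarged by an extra component in which every node permanently records its own input symbol; since the transition rules never inspect this component, the protocol still transforms any connected active topology into a spanning line and terminates in $O(n^3)$ expected time, and it does so while carrying the input intact. Crucially, because the transformer is \emph{terminating} and connectivity is preserved throughout, when it halts every node knows that the active topology is now a spanning line on \emph{all} $n$ processes; the two endpoints are recognizable as the nodes of degree $1$, which fixes an orientation and hence a total order $v_1,\dots,v_n$ of the population.

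Next I would use the line as ordered distributed memory. A token walked from $v_1$ to $v_n$ lets the nodes compute, in binary on a prefix of $\Theta(\log n)$ cells of the line, the size $n$ and, for every input symbol $\sigma$ of the fixed finite input alphabet, the count $N_\sigma$; since $p$ is symmetric, $p$ is a function of these counts alone. The line already provides a work tape of $\Theta(n)$ cells, so space-$\Theta(n)$ simulations are immediate. To reach space $\Theta(n^2)$ — which is the most one can hope for, the whole system storing $\Theta(n)$ node-bits plus $\binom n2=\Theta(n^2)$ edge-bits — I would, in the spirit of the generic constructions of \cite{MS14}, reorganize the computation so that the edge states play the role of a length-$\Theta(n^2)$ tape: cell $(i,j)$ is the state of the edge between the $i$-th and $j$-th node of the line, a ``row'' token and a ``column'' token walk along the line backbone to encode the head position $(i,j)$, and a read/write of cell $(i,j)$ is performed during an interaction between the node currently holding the row token and the node currently holding the column token. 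On this tape I would simulate the deterministic space-$\Theta(n^2)$ TM $M_p$ that decides $p$ from $(N_\sigma)_\sigma$; when $M_p$ halts with bit $b$, a final sweep writes $b$ into every node's output component and drives every node into a halting state, yielding a terminating NET that computes $p$.

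The routine parts are the line-to-tape encoding arithmetic and the propagation/halting sweeps; the part that carries the real work is maintaining a \emph{navigable} line backbone simultaneously with the $\Theta(n^2)$-bit edge memory. The difficulty is that, once the edge states are in use, a constant fraction of each node's incident edges may be active for memory reasons, so a token cannot locate its successor on the line merely by following an active edge; one has to keep the backbone distinguished and keep the row/column token walks, the head moves, and the carry propagation correctly synchronized. This is exactly the intricate-but-standard simulation bookkeeping developed in \cite{MS14}, and the plan is to import it essentially as is, the one genuinely new ingredient being the terminating line transformer supplied by Theorem~\ref{the:line-transformer}. A secondary point to verify is the harmlessness of threading the input symbols through \emph{Line-Transformer} (immediate, since its transition function ignores that component), and that the symmetry restriction on $p$ is unavoidable rather than an artifact — it is forced by the anonymity of the nodes together with the existence of fully symmetric initial topologies such as the clique.
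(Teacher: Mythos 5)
There is a genuine gap, and it is exactly at the point you yourself single out as ``the part that carries the real work.'' Your plan stores tape cell $(i,j)$ on the edge between the $i$-th and $j$-th node \emph{of the line itself} and moves row/column tokens along the line backbone. But nodes are anonymous and finite-state and an edge carries a single bit, so once memory edges among line nodes may be active, an interaction of the token-holder $v_i$ with a distant line node $v_j$ over an active memory edge is locally indistinguishable from an interaction with its backbone successor $v_{i+1}$: both partners are generic internal-line nodes and both edges are in state $1$. There is no way for $v_i$ to ``keep the backbone distinguished'' — it cannot remember which specific neighbors are its line neighbors, it cannot label them (only constantly many node states are available for $\Theta(n)$ internal nodes), and it cannot temporarily clear its incident memory edges without losing up to $\Theta(n)$ bits it has nowhere to park. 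So the token walks, and hence the head moves, can be corrupted, and appealing to the ``bookkeeping developed in \cite{MS14}'' does not close this, because that bookkeeping solves the problem by a \emph{different memory organization}, not by navigating a line whose nodes carry active memory edges.

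The paper's proof (an adaptation of Theorems 11 and 12 of \cite{MS14}) is the missing ingredient: after Line-Transformer terminates, the leader reorganizes the population into a control line $U$ of $\lfloor n/2\rfloor$ nodes (each also recording the input of a matched partner) and a set $M$ of $\lfloor n/2\rfloor$ nodes that are pairwise non-adjacent and joined to $U$ only by a perfect matching. The $\Theta(n^2)$ tape is realized on the edges \emph{inside $M$}; the two tokens still walk on $U$, whose induced active graph remains exactly the line (so navigation is unambiguous), and a read/write of the cell addressed by token positions $u_i,u_j$ is performed by marking the matched nodes $v_i,v_j\in M$ with special read/activate/deactivate states and waiting for the scheduler to pick the edge $v_iv_j$. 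Your other steps (threading the inputs through Line-Transformer unchanged, using the detected line with its unique leader and degree-$1$ endpoints as an ordered $\Theta(n)$ memory, counting and the final output/halting sweep, and the remark that symmetry of $p$ is forced by anonymity) are fine and match the paper; it is only the jump from $\Theta(n)$ to $\Theta(n^2)$ space that needs the $U$/$M$ split or some equivalent mechanism keeping the backbone's active neighborhood clean.
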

\begin{proof}
Let $N$ be the deterministic TM that decides $p$ in space $O(n^2)$. By Theorem \ref{the:line-transformer} there is a protocol, namely Line-Transformer, that solves the Terminating Line Transformation problem. In particular, when the protocol terminates, the active topology is a spanning line and has a unique leader. We sequentially compose the Line-Transformer protocol with another protocol, called Simulator, which transforms the spanning line to some other convenient topology and uses that topology in order to simulate $N$. Recall that, as usual, the predicate $p$ is defined on input assignments to the nodes from some alphabet $X$. We assume that the nodes maintain their inputs throughout the execution of Line-Transformer so that their original inputs are available when the Simulator begins. We remind that the restriction to the \emph{symmetric} subclass of $\rem{SPACE}$ is because the active clique is a permissible initial topology and the nodes are initially indistinguishable so, in general, it is meaningless to consider ordered inputs in this setting (still, if desired, we could have a non input-symmetric TM decide the ordered input of the constructed spanning line, which taken over all possible executions will produce all possible ordered inputs).

We now describe the protocol Simulator, which is essentially an adaptation of the constructors of Theorems 11 and 12 of \cite{MS14}. Given the spanning line with a unique leader provided with termination by Line-Transformer, the protocol partitions the population into two equal sets $U$ and $M$ such that (i) there is an active perfect matching between $U$ and $M$, (ii) the nodes of $U$ are arranged in a spanning line which has its endpoints in state $q_1$, the internal nodes in state $q_2$, has additionally a unique leader on some node, and all inputs of the population have been transferred to the nodes of $U$ (e.g. each node of $U$ also remembers the input of a distinct node from $M$), (iii) all nodes in $M$ are in state $q_m$ and all edges between nodes of $M$ are inactive. The above transformations are straightforward to achieve. In case of an even number of nodes, the leader first counts the $n/2$ leftmost nodes on the spanning line (by simulating a TM that does so on the nodes of the spanning line, which play the role of a linear memory). Then the leader appends the input of node $n/2+i$, for $1\leq i\leq n/2$, to the state of node $i$, so that the $n/2$ leftmost nodes know all inputs (each node recording two of them). Then the leader deactivates all edges to the right of node $n/2$ which will leave a line of length $n/2$ with a unique leader and $n/2$ isolated nodes. Now the leader waits to match each node of the remaining line to a distinct node from the isolated ones. In this manner, the leader eventually forms a line of length $n/2$ (which is the set $U$ mentioned above) matched to $n/2$ nodes that have no edges between them (the set $M$ above). It only remains for the leader to visit one after the other all nodes of the new construction in order to set their states as required above. The case of odd $n$ is essentially equivalent. The only difference is that one of the nodes in $U$ will have to remember the input of a redundant node (i.e. three nodes in total) which does not participate in the matching and forever remains isolated.

The purpose of set $M$ is to constitute the $\Theta(n^2)$ memory required by the TM $N$. The goal is to exploit the $(n/2)(n/2-1)/2$ edges of set $M$ as the binary cells of the simulated TM (see Figure \ref{fig:simulator} for an illustration). The line of set $U$ controls the simulation and provides a unique ordering of the edges between nodes in $M$. The idea is to simulate the head of $N$ by two tokens that move on the line of $U$. The positions $u_i,u_j$ of these tokens on the line uniquely identify an edge of $M$, which is the one between the matched nodes $v_i,v_j$ from $M$. When the $N$ wants to move right (left) the rightmost token is moved one step to the right (left, resp.) and if it has already reached the right endpoint (left token, resp.) then the leftmost token is moved one step to the right and the rightmost token moves to the right neighbor of the leftmost token (the opposite, resp.). Whenever $N$ wants to modify the bit of its current cell, the protocol marks by a special activating or deactivating state the $M$-nodes matching the current positions of the $U$-tokens. Then an interaction between two such marked $M$-nodes activates or deactivates, respectively, the edge between them. Reading the state of a cell is performed similarly.
\end{proof}

\begin{figure}[!hbtp]
\centering{
\includegraphics[width=0.75\textwidth]{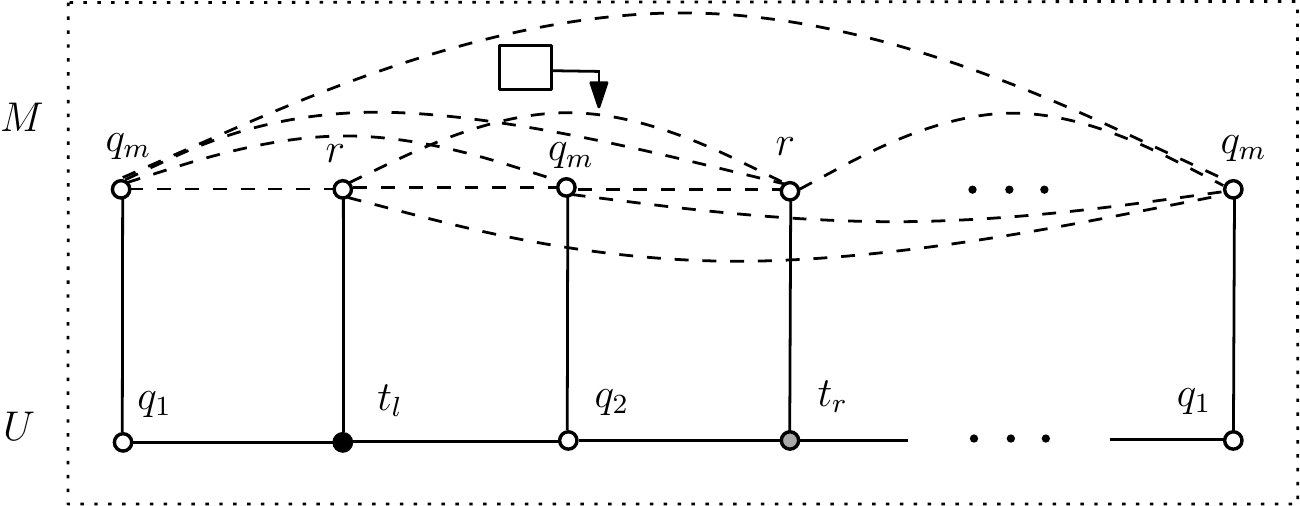}
}
\caption{The spanning line of Line-Transformer has been transformed by the Simulator to the topology depicted here. The spanning line of set $U$ controls the simulation via its two tokens (here placed on the black and gray nodes). Their matching nodes from $M$ are here marked $r$ which means that a ``reading'' of the cell simulated by the edge between the two nodes is being performed.} \label{fig:simulator}
\end{figure}

\begin{remark}
At this point the reader might justifiably think that interchanging the role of edge-states 0 and 1 would be sufficient for establishing the same result even in the case in which all edges are initially inactive. By this reasoning, the protocol should be able to construct a spanning \emph{non-line} (i.e. one in which all edges are in state 0) and all other edges of the system are active. In fact, for one moment this might seem to contradict the whole reasoning and development of the present paper, because it would imply that no initial active connectivity is required for termination as it could just be replaced by an equivalent inactive connectivity. A more careful look reveals that though this is formally plausible it still leads to unrealistic assumptions. The reason is that this is valid only if a node is able to detect small local \emph{non-degrees}, for example whether it has precisely one \emph{non-neighbor} in the system. In a real system, where active edges are expected to correspond to established communication links and inactive to non-existing communication links, such information is \emph{highly non-local} and provides a node with an \emph{implicit knowledge of the size of the system}. The same would hold for a common non-neighbor mechanism. On the other hand, the corresponding assumptions based on active (i.e. established) connections are totally local and expected to hold in almost any distributed system.
\end{remark}

\section{Conclusions and Further Research}
\label{sec:conclusions}

There are many open problems related to the findings of the present work. We have shown that initial connectivity of the active topology combined with the ability of the protocol to transform the topology yield, under some additional minimal and local assumptions, an extremely powerful model. We managed to show this by developing protocols that transform the initial topology to a convenient one (in our case the spanning line) while always \emph{preserving the connectivity of the topology}. Though arbitrary connectivity breaking makes termination impossible, still we have not excluded the possibility that some protocol performs some ``controlled'' connectivity breaking during its course, being always able to correctly reassemble the disconnected parts and terminate. For example, imagine a protocol with a unique leader that performs only a bounded number of simultaneous disconnections and always remembers the number of released components. It would be interesting to know to what extent is this possible and whether it is sensitive to the availability of a unique leader.

Another issue has to do with the underlying interaction model. Throughout this work we have assumed that the underlying interaction graph is the clique $K_n$ and all of our protocols largely exploit this. Though this model is a convenient starting point to understand the basic principles of algorithmic transformations of networks, it is obvious that it is highly non-local. Realistic implementations would probably require more local or geometrically constrained models (like the one of \cite{Mi15}), for example, one in which, at any given time, a node can only communicate with nodes at active distance at most 2. This could be even further restricted by assuming that nodes have bounded active degree (e.g. at most 4) which wouldn't allow a node gradually increase its connectivity by increasing its local degree. Instead a node should repetitively drop some of its active edges in order to make new connections and move on the active topology. It would be of both practical and theoretical importance to develop such models and give protocols and characterizations for them. It is also valuable to consider the Terminating Line Transformation and Acyclicity problems in models of computationally weak (and probably also anonymous) robots moving in the plane. Imagine, for example, a swarm of such robots arranged initially in a spanning ring, which is very symmetric and may not allow for non-trivial computations, with the goal being to transform their arrangement to a spanning line or some other topology that allows for global computations.

There are also some more technical intriguing open questions. The most prominent one is whether protocol Line-Transformer is time-optimal. Recall that its running time was shown to be $O(n^3)$. First of all, it is not clear whether the analysis is tight. The subroutine that dominates the running time is the one that tries to form a spanning line over the peripheral nodes, which is restricted by the fact that the partial lines of ``sleeping'' stars have to either be backtracked (which is what our solution does) or merged somehow with the lines of ``awake'' stars. We should mention that the spanning line subroutine that backtracks many ``sleeping'' lines in parallel is an immediate improvement of the best spanning line protocol of \cite{MS14}, called Fast-Global-Line. The improvement is due to the fact that instead of having the awake leader backtrack node-by-node sleeping lines, we now have any sleeping line backtrack itself, so that many backtrackings occur in parallel. We also have experimental evidence showing a small improvement \cite{ALMS15} but still we do not have a proof of whether this is also an asymptotic improvement. For example, is it the case that the running time of this improvement is $O(n^3/\log n)$ (or even smaller)? This question is open. There is also room for lower bounds. Apart from the obvious lower bound for the Terminating Line Transformation problem with identical nodes, one could also focus on the spanning line construction problem with initially disconnected nodes (i.e. the Spanning Line problem of \cite{MS14}). The reason is that an improvement to this problem would probably imply an improvement for Terminating Line Transformation by using the protocol as an improved subroutine of Line-Transformer for forming the lines over the peripherals of the star. The best lower bound known for Spanning Line is $\Omega(n^2)$. Some first attempts suggest that it might be non-trivial to improve this to $\Omega(n^2\log n)$. 

Another potential direction is to study the Terminating Line Transformation and other transformation problems in restricted families of input topologies. For example, if the initial topology is guaranteed to be a ring, then, even if all nodes are initially identical (i.e. no unique leader), we can transform the ring into a spanning line with a single edge deactivation (and no activation throughout the course of the protocol). Essentially, every node issues a trace expanding either clockwise or counterclockwise and we ensure that only one of the traces manages eventually to make a complete turn and reach its own tail. That trace is an elected leader-trace and can safely deactivate the edge joining its head to its tail (observe that more than one deactivations by distinct traces could disconnect the ring). A more interesting case that is worth being studied is the one in which the initial topology is always a spanning tree. Moreover, the observation that the running time of protocols seems to depend on the number of edges that have to be modified, suggests to study the running time of protocols as a function of the number of (active) edges of the initial topology or as the difference in the number of edges between the initial and the target topology.

Finally, it would be interesting to investigate the role of UIDs (or even non-unique identities, as in \cite{BCR15}) in network construction and network transformations. One part of the question concerns the fact that UIDs seem to help in improving efficiency. The other part concerns the fact that UIDs give rise to a much richer set of network transformation problems as they now involve the substantially larger class of \emph{node-labeled graphs}.\\

\noindent \textbf{Acknowledgements.} We would like to thank Leszek Gasieniec for bringing to our attention the importance of the network reconfiguration problem and also the anonymous reviewers of this article, whose thorough comments have helped us to improve our work substantially.



\newcommand{\etalchar}[1]{$^{#1}$}

%
%

\end{document}